\providecommand{\tabularnewline}{\\}
\theoremstyle{remark}
\newtheorem{rem}{\protect\remarkname}
\theoremstyle{plain}
\newtheorem{lem}{\protect\lemmaname}
\theoremstyle{plain}
\newtheorem{prop}{\protect\propositionname}
\providecommand{\lemmaname}{Lemma}
\providecommand{\propositionname}{Proposition}
\providecommand{\lemmaname}{Lemma}
\providecommand{\propositionname}{Proposition}
\providecommand{\remarkname}{Remark}
\begin{document}
	\title{ How to Compute Invariant Manifolds and their Reduced Dynamics in High-Dimensional Finite-Element Models}
	\author{Shobhit Jain\footnote{Corresponding author: shjain@ethz.ch}, George Haller}
	\date{\vspace{-5ex}
	}
	\maketitle
	\begin{center}
		Institute for Mechanical Systems, ETH Zürich \\
		Leonhardstrasse 21, 8092 Zürich, Switzerland 
		\par\end{center}
	\begin{abstract}
		Invariant manifolds are important constructs for the quantitative
		and qualitative understanding of nonlinear phenomena in dynamical systems. In nonlinear damped mechanical systems, for instance,
		spectral submanifolds have emerged as useful tools for the computation
		of forced response curves, backbone curves, detached resonance curves
		(\emph{isolas}) via exact reduced-order models. For conservative nonlinear
		mechanical systems, Lyapunov subcenter manifolds and their reduced dynamics provide a way
		to identify nonlinear amplitude-frequency relationships in the form
		of conservative backbone curves. Despite these powerful predictions
		offered by invariant manifolds, their use has largely been limited
		to low-dimensional academic examples. This is because several challenges
		render their computation unfeasible for realistic
		engineering structures described by finite-element models. In this
		work, we address these computational challenges and develop methods
		for computing invariant manifolds and their reduced dynamics in very high-dimensional
		nonlinear systems arising from spatial discretization of the governing
		partial differential equations. {We illustrate our computational algorithms on finite-element models of mechanical structures that range from a simple beam containing tens of degrees of freedom to an aircraft wing containing more than a hundred-thousand degrees of freedom.}
	\end{abstract}

	\textbf{Keywords}: Invariant manifolds, Finite Elements, Reduced-order modeling, Spectral Submanifolds, Lyapunov Subcenter manifolds, Center manifolds, Normal forms. 
	
	\section{Introduction}
	
	Invariant manifolds are low-dimensional
	surfaces in the phase space of a dynamical system that constitute organizing
	centers of nonlinear dynamics. These surfaces are composed of full
	system trajectories that stay on them for all times, partitioning
	the phase space locally into regions of different behavior. For instance, invariant
	manifolds attached to fixed points can be viewed as the nonlinear
	analogues of (flat) modal subspaces of the linearized
	system. Classic examples are the stable, unstable and center manifolds
	tangent to the stable, unstable and center subspaces of fixed points (see, e.g.,
	Guckenheimer \& Holmes~\cite{Guckenheimer1983}). The most important class of invariant manifolds are those that attract other trajectories and hence their own low-dimensional internal dynamics acts a mathematically exact reduced-order model for the full, high-dimensional system. The focus of this article is to compute such invariant manifolds and their reduced dynamics accurately and efficiently in very high-dimensional nonlinear systems. 
	
	The theory of invariant
	manifolds has matured over more than a century of research and has
	been applied to numerous fields for the qualitative understanding of 
	nonlinear behavior of systems (see Fenichel~\cite{Fenichel1971,Fenichel1974,Fenichel1977}, Hirsch et al.~\cite{Hirsch1977}, Wiggins~\cite{Wiggins1994}, Eldering~\cite{Eldering}, Nipp \& Stoffer~\cite{Nipp2013}). The computation of invariant manifolds,
	on the other hand, is a relatively new and rapidly evolving discipline
	due to advances in scientific computing. In this paper, we address
	some challenges that have been hindering the computation of invariant manifolds in
	high-dimensional mechanical systems arising from spatially discretized
	partial differential equations (PDEs).
	
	The methods for computing invariant manifolds can be divided
	into two categories: local and global. Local methods
	approximate an invariant manifold in the neighborhood of simpler invariant sets,
	such as fixed points, periodic orbits or invariant tori. Such local
	approximations are performed using Taylor series approximations around
	the fixed point or Taylor-Fourier series around the periodic orbit/invariant
	torus (see Simo~\cite{Simo1990}). Global methods, on the other hand,
	seek invariant manifolds globally in the phase space. Global techniques
	generally employ numerical \emph{continuation} for growing invariant
	manifolds from their local approximation that may be obtained from the linearized dynamics (see Krauskopf et al.~\cite{Krauskopf2005} for a survey). 
	
	\subsection*{Global techniques }
	
	A key aspect of most global techniques is to discretize the manifold into a mesh e.g., via a collocation or a spectral approach (see Dancowicz \& Schilder~\cite{Dankowicz}, Krauskopf et al.~\cite{Krauskopf2007}) and solve invariance equations for the unknowns
	at the mesh points. For growing an $M-$dimensional manifold via $q$ collocation/spectral points (along each of the $ M $ dimensions) in an $N-$dimensional dynamical system, one needs to solve a system of $\mathcal{O}\left(Nq^M\right)$ \emph{nonlinear} algebraic equations at each continuation step. This is achieved via
	an iterative solver such as the Newton's method. As $N$
	invariably becomes large in the case of discretized PDEs governing
	mechanics applications, numerical continuation of invariant manifolds
	via collocation and spectral approaches becomes computationally intractable.
	Indeed, while global approaches are often discussed for general systems,
	the most common applications of these approaches tend to involve
	low-dimensional problems, such as the computation of the Lorenz manifold (Krauskopf \& Osinga~\cite{Krauskopf2003}). 
	
	Global approaches also include the continuation of trajectory segments
	along the invariant manifold, expressed as a family of trajectories.
	This is achieved by formulating a two-point boundary value problem
	(BVP) satisfied by the trajectory on the manifold and numerically
	following a branch of solutions (see Krauskopf et al.~\cite{Krauskopf2005},
	Guckenheimer et al.~\cite{Guckenheimer2015}). While collocation and
	spectral methods are valid means to achieve this end as well, the
	(multiple) shooting method method (see Keller~\cite{Keller1968},
	Stoer \& Bulirsch~\cite{Stoer2002}) has a distinguishing appeal from
	a computational perspective for high-dimensional problems. In the
	(multiple) shooting method, an initial guess for one point on the
	solution trajectory is iteratively corrected such until the two-point
	BVP is solved up to a required precision. In each iteration, one performs
	numerical time integration of the full nonlinear system between the
	two points of the BVP, which is a computationally expensive undertaking
	for large systems. However, time integration involves the solution
	$\mathcal{O}\left(N\right)$ nonlinear algebraic equations at each
	time step, in contrast to collocation and spectral methods which require
	$\mathcal{O}\left(Nq\right)$ nonlinear algebraic equations to be
	solved simultaneously for $q$ collocation/spectral points. Coupled
	with advances in domain decomposition methods for time integration
	(see Carraro et al.~\cite{Carraro2015}), the multiple shooting method
	provides a feasible alternative to collocation and spectral methods.
	Still, covering a multi-dimensional invariant manifold using trajectory
	segments in a high-dimensional system is an elusive task even for multiple shooting methods (see, e.g., Jain et al.~\cite{Jain2019}). 
	
	
	A number of numerical continuation packages have enabled the computation
	of global invariant manifolds via collocation, spectral or multiple
	shooting methods. AUTO~\cite{Auto}, a FORTRAN-based package, constitutes
	the earliest organized effort towards continuation and bifurcation
	analysis of parameter-dependent ODEs. AUTO~\cite{Auto} employs orthogonal
	collocation to approximate solutions and is able to continue solution
	families in two or three parameters. The MATLAB-based package
	Matcont~\cite{Matcont} addresses some of the limitations of AUTO, albeit
	at a loss of computational performance. Additionally, Matcont can also
	perform normal form analysis. $\textsc{coco}$~\cite{Dankowicz}
	is an extensively documented and object-oriented MATLAB package which
	enables continuation via multidimensional atlas algorithms (Dankowicz et al.~\cite{Dankowicz2020}) and implements
	collocation as well as spectral methods. Another recent MATLAB package,
	NLvib~\cite{Krack2019}, implements the pseudo arc-length technique
	for the continuation of single-parameter family of periodic orbits
	in nonlinear mechanical systems via the shooting method or the spectral
	method (commonly referred to as harmonic balance in mechanics). Similar continuation
	packages are also available in the context of delay differential equations
	(see DDE-BIFTOOL~\cite{DDE_BIFTool_2001}, written in MATLAB and Knut~\cite{Knut_2013}, written in C++). The main focus of all these and
	other similarly useful packages is to implement automated
	continuation procedures, including demonstrations on low-dimensional
	examples, but not on the computational complexity of the operations
	as $N$ increases. As discussed above, the collocation/spectral/shooting
	techniques that are invariably employed in such packages limit their
	ability to compute invariant manifolds in high-dimensional mechanics
	problems, where system dimensionality $N$ varies from several thousands
	to millions. 
	
	\subsection*{Local techniques}
	
	In contrast to global techniques, local techniques for invariant manifold computations produce approximations valid in a neighborhood of a fixed point, periodic orbit or invariant
	torus. As a result, local
	techniques are generally unable to compute homoclinic
	and heteroclinic connections. Nonetheless, in engineering applications, local approximations of
	invariant manifolds often suffice for assessing the influence of nonlinearities
	on a well-understood linearized response. 
	
	Center manifold computations and their associated local bifurcation
	analysis via \emph{normal forms} (see Guckenheimer \& Holmes~\cite{Guckenheimer1983})
	are classic applications of local approximations where the manifold
	is expressed locally as a graph over the center subspace. For an $M-$dimensional
	manifold, this local graph is sought via an $M-$variate Taylor series
	where the coefficient of each monomial term is unknown. These unknown
	coefficients are determined by solving the invariance equations in
	a recursive manner at each polynomial order of approximation, i.e.,
	the solution at a lower order can be computed without the knowledge
	of the higher-order terms. The computational procedure simply involves
	the solution of a system of  $\mathcal{O}(N)$ \emph{linear} equations for each
	monomial in the Taylor expansion (see Simo~\cite{Simo1990} for flows,
	Fuming \& Küpper,~\cite{Fuming1994} for maps). Thus, in the computational
	context of high-dimensional problems, local techniques that employ
	Taylor series approximations exhibit far greater feasibility in comparison
	to global techniques that involve the continuation of collocation,
	spectral or shooting based solutions. 
	
	More recently, the parametrization method has emerged as a rigorous
	framework for the local analysis and computation of invariant manifolds
	of discrete and continuous time dynamical systems. This method was
	first developed in papers by Cabré, Fontich \& de la Llave~\cite{Cabre2003,Cabre2003a,Cabre2005}
	for invariant manifolds tangent to eigenspaces of fixed points of
	nonlinear mappings on Banach spaces, and then extended to
	whiskered tori by Haro \& de la Llave~\cite{Haro2006,Haro2006a,Haro2007}.
	We refer to the monograph by Haro et al.~\cite{Haro2016} for an overview
	of the results. An important feature of the parametrization method
	is that it does not require the manifold parametrization to be the
	graph of a function and hence allows for folds
	in the manifold. Furthermore, the method returns the dynamics on the
	invariant manifold along with its embedding. The formal computation
	can again be carried out via Taylor series expansions when the invariant
	manifold is attached to a fixed point and via Fourier-Taylor expansions
	when it is attached to an invariant torus perturbing
	from a fixed point (see, e.g., Mireles James~\cite{MirelesJames2015},
	Castelli et al.~\cite{Castelli2015}, Ponsioen et al.~\cite{Ponsieon2018,Ponsioen2020}).
	
	The main focus of the parametrization method has been on the computer-assisted
	proofs of existences and uniqueness of invariant manifolds, for which
	the dynamical system is conveniently diagonalized at the linear level.
	Furthermore, as discussed by Haro et al.~\cite{Haro2016}, this diagonalization
	allows a choice between different styles of parametrization for
	the reduced dynamics on the manifold, such as a normal form style,
	a graph style or a mixed style. Recent applications of the parametrization
	method include the computation of spectral submanifolds or SSMs (Haller \&
	Ponsioen~\cite{Haller2016}) and Lyapunov subcenter manifolds or LSMs (Kelley~\cite{Kelley1969}). For these manifolds the normal form parametrization style can
	be used to directly extract \emph{forced response curves} (FRC) and
	\emph{backbone} curves in nonlinear mechanical systems, as we will discuss in this paper (see Ponsioen
	et al.~\cite{Ponsieon2018,Ponsioen2020}, Breunung \& Haller~\cite{Breunung2018},
	Veraszto et al.~\cite{Verasto2020}).

	\subsection*{Our contributions}
	While helpful for proofs and expositions, the routinely performed diagonalization and the associated linear
	coordinate change in invariant manifold computations, render
	the parametrization method unfeasible to high-dimensional mechanics
	problems for two reasons. First, diagonalization
	involves the computation of all $N$ eigenvalues of an $N-$dimensional
	dynamical system and second, the nonlinear coefficients in physical
	coordinates exhibit an inherent sparsity in mechanics applications
	that is annihilated by the linear coordinate change associated to
	diagonalization. Both these factors lead to unmanageable computation
	times and memory requirements when $N$ becomes large, as we discuss
	in Section~\ref{subsec:Pitfalls-of-transformation} of this manuscript. 
	
	To address these issues, we develop here a new computational methodology for local approximations
	to invariant manifolds via the parametrization method. The key aspects
	making this methodology scalable to high-dimensional mechanics
	problems are the use of physical coordinates and
	just the minimum number of eigenvectors. In the
	autonomous setting, we seek to compute invariant manifolds attached
	to fixed points where we develop expressions for the Taylor
	series coefficients that determine the parametrization of the invariant
	manifold as well as its reduced dynamics in different styles of parametrization (see Section~\ref{sec:computation_autonomous}).
	We develop similar expressions in the non-autonomous periodic or quasiperiodic
	setting, where we seek to compute invariant manifolds or \emph{whiskers}
	attached to an invariant torus perturbed from a hyperbolic fixed point
	under the addition of small-amplitude non-autonomous terms. In this
	case, we seek to compute the coefficients in Fourier-Taylor series that parametrize
	the invariant manifold as well as its reduced dynamics in different parametrization styles (see Section
	\ref{sec:computations_non-autonomous}). Finally, we apply this methodology
	to high-dimensional examples arising from a finite-element discretization
	of structural mechanics problems, whose forced
	response curves we recover from a normal form style parametrization of SSMs (see Section~\ref{sec:Numerical-examples}).

	Related computational ideas have already been used in other contexts. For instance, Beyn \& Kleß~\cite{Beyn1998}
	performed similar Taylor series-based computations of invariant manifolds
	attached to fixed points in physical coordinates using master modes.
	Their work predates the parametrization method and does not involve
	the choice of reduced dynamics or normal forms. Recently, Carini et
	al.~\cite{Carini2015} focused on computing center manifolds of fixed
	points and their normal forms using only master modes in physical
	coordinates. While this an application of the parametrization method in the normal form style, they attribute their results to earlier related work by Coullet \& Spiegel~\cite{Coullet1983} and use these center manifolds for analyzing stability of flows around bifurcating parameter
	values. 
	
	{ More recently, Vizzaccaro et al.~\cite{Vizzaccaro2020} and Opreni et al.~\cite{Opreni2021} have computed normal
		forms on second-order, proportionally-damped
		mechanical systems with up to cubic nonlinearities and derived explicit
		expressions up to cubic order accuracy (see also Touz\'e et al.~\cite{Touze2021} for a review). This is a direct application of the parametrization method via a normal-form style parametrization to formally compute
		assumed invariant manifolds whose existence/uniqueness is a priori
		unclear (cf. Haller \& Ponsioen~\cite{Haller2016}). These results in~\cite{Vizzaccaro2020,Opreni2021} provide low-order approximations to SSMs~\cite{Haller2016}, whose computation up to arbitrarily high orders of accuracy has already been automated in prior work~\cite{Ponsieon2018,Ponsioen2020} for mechanical systems with diagonalized linear part. A major computational advance in the approach of Vizzaccaro et al.~\cite{Vizzaccaro2020} is the nonintrusive use of finite element software to compute normal form coefficients up to cubic order. All these prior results, however, are fundamentally developed for unforced (non-autonomous) systems. 
		
		The computation procedure we develop is generally applicable
		to first-order systems with smooth nonlinearities, periodic or quasiperiodic forcing, and enables automated computation of various types of invariant manifolds such as stable-, unstable-, and center manifolds, LSMs, and SSMs, up to arbitrarily high orders of accuracy in physical coordinates.} Finally, a numerical implementation of these computational techniques is available
	in the form of an open-source MATLAB package, SSMTool 2.0~\cite{SSMTool2.0}, which
	is integrated with a generic finite-element solver (Jain et al.~\cite{FECode})
	for mechanics problems. We describe some key symbols and the notation
	used in the remainder of this paper in Table~\ref{tab:Notation} before
	proceeding towards the technical setup in the next section. 
	
	\FloatBarrier
	\section{General setup}	
	\begin{table}[H]
		\centering{}%
		
		\begin{tabular}{r>{\raggedright}p{10cm}}
			\hline 
			Symbol & Meaning\tabularnewline
			\hline 
			$n$ & Dimensionality of full second-order mechanical system~\eqref{eq:second-order}\tabularnewline
			$N$ & Dimensionality of the full first-order system~\eqref{eq:first_order} in the first-order
			($N=2n$ for mechanical systems)\tabularnewline
			$E$ & Master (spectral) subspace of a fixed point of system~\eqref{eq:first_order}  \tabularnewline
			$\mathcal{W}(E)$ & Invariant manifold tangent to $E$ at its fixed point\tabularnewline
			
			$M$ & $\dim(E)=\dim(\mathcal{W}(E))$: Dimensionality of the invariant manifold 
			constructed around $E$\tabularnewline
			
			$\mathbf{W}:\mathbb{C}^M\to \mathbb{R}^N$ & Parametrization for the invariant manifold $\mathcal{W}(E)$ \tabularnewline
			
			$\mathbf{R}:\mathbb{C}^M\to \mathbb{C}^M$ & Parametrization for the reduced dynamics on $\mathcal{W}(E)$ \tabularnewline
			
			$\mathbf{p}\in\mathbb{C}^{M}$ & Parametrization coordinates describing reduced dynamics: $\dot{\mathbf{p}}=\mathbf{R}(\mathbf{p},t)$\tabularnewline
			$(\bullet)^{\top}$ & Transpose of a matrix or vector\tabularnewline
			$\bar{(\bullet)}$ & Complex conjugation operation\tabularnewline
			$(\bullet)^{\star}$ & Complex conjugate transpose for a matrix or vector; adjoint for an
			operator.\tabularnewline
			$K$ & Number of rationally incommensurate forcing frequencies in system~\eqref{eq:second-order} or \eqref{eq:first_order} \tabularnewline
			$\boldsymbol{\Omega}\in\mathbb{R}_{+}^{K}$ & Quasiperiodic forcing frequencies\tabularnewline
			$\gamma_{\epsilon}$ & small amplitude invariant torus of system~\eqref{eq:first_order}
			\tabularnewline
			$\mathcal{W}(E, \gamma_{\epsilon})$ & Invariant manifold (whisker) of torus $\gamma_{\epsilon}$ perturbed from spectral subspace $E$.  \tabularnewline
			
			$\mathbf{W}_{\epsilon}:\mathbb{C}^{M}\times\mathbb{T}^{K}\to\mathbb{R}^{N}$ & Parametrization for the whisker $\mathcal{W}(E, \gamma_{\epsilon})$ \tabularnewline
			
			$\mathbf{R}_{\epsilon}:\mathbb{C}^{M}\times\mathbb{T}^{K}\to\mathbb{C}^{M}$ & Parametrization for the reduced dynamics on $\mathcal{W}(E, \gamma_{\epsilon})$ \tabularnewline
			
			$\mathbf{e}_{j}$ & $[0,\dots,0,\overset{\overset{j-\mathrm{th\,position}}{\uparrow}}{1},0,\dots,0]^{\top}$
			: vector aligned along the $j-$th coordinate axis in Euclidean space\tabularnewline
			$\mathrm{vec}(\bullet)$ & Vectorization operation\tabularnewline
			$\Delta_{i,M}$ & An ordered set, $\{\boldsymbol{\ell}_{1},\dots,\boldsymbol{\ell}_{M^{i}}\in\{1,\dots,M\}^{i}\}$, which contains all possible $i-$tuples drawn from the set $\{1,\dots,M\}$
			for any $i\in\mathbb{N}$ \tabularnewline
			$\mathrm{i}$ & imaginary unit\tabularnewline
			$\otimes$ & Kronecker product\tabularnewline
			\hline 
		\end{tabular}\caption{\label{tab:Notation}Notation}
	\end{table}

	We are mainly interested here in dynamical systems arising from mechanics
	problems. Such problems are governed by PDEs
	that are spatially discretized typically via the finite element method.
	The discretization results in a system of second-order ordinary differential
	equations for the generalized displacement $\mathbf{x}(t)\in\mathbb{R}^{n}$,
	which can be written as 
	\begin{equation}
	\mathbf{M}\ddot{\mathbf{x}}+\mathbf{C}\dot{\mathbf{x}}+\mathbf{K}\mathbf{x}+\mathbf{f}(\mathbf{x},\dot{\mathbf{x}})=\epsilon\mathbf{f}^{ext}(\mathbf{x},\dot{\mathbf{x}},\mathbf{\Omega}t).\label{eq:second-order}
	\end{equation}
	Here $\mathbf{M},\mathbf{C},\mathbf{K}\in\mathbb{R}^{n\times n}$
	are the mass, stiffness and damping matrices; $\mathbf{f}(\mathbf{x},\dot{\mathbf{x}})\in\mathbb{R}^{n}$
	is the purely nonlinear internal force; $\mathbf{f}^{ext}(\mathbf{x},\dot{\mathbf{x}},\mathbf{\Omega}t)\in\mathbb{R}^{n}$
	denotes the (possibly linear) external forcing with frequency vector
	$\mathbf{\Omega}\in\mathbb{R}^{K}$ for some $K\ge0$. The function
	$\mathbf{f}^{ext}$ is autonomous for $K=0$, periodic in $t$ for
	$K=1$, and quasi-periodic for $K>1$ with $ K $ rationally incommensurate frequencies. The second-order system~\eqref{eq:second-order}
	may be expressed in a first-order form as 
	\begin{align}
	\mathbf{B}\dot{\mathbf{z}} & =\mathbf{Az}+\mathbf{F}(\mathbf{z})+\epsilon\mathbf{F}^{ext}(\mathbf{z},\boldsymbol{\phi}),\label{eq:first_order}\\
	\dot{\boldsymbol{\phi}} & =\boldsymbol{\Omega},
	\end{align}
	where $\mathbf{z}=\left[\begin{array}{c}
	\mathbf{x}\\
	\dot{\mathbf{x}}
	\end{array}\right]$, $\mathbf{A},\mathbf{B}\in\mathbb{R}^{2n\times2n},\text{\ensuremath{\mathbf{F}:\mathbb{R}^{2n}\to\mathbb{R}^{2n}}},\mathbf{F}^{ext}:\mathbb{R}^{2n}\times\mathbb{T}^{K}\to\mathbb{R}^{2n}$
	denote the first-order quantities derived from system~\eqref{eq:second-order}.
	Such a first-order conversion is not unique: two equivalent choices
	are given by (see Tisseur \& Meerbergen~\cite{Tisseur2001}) {\footnotesize{}
		\begin{equation}
		(L1):\quad\mathbf{A}=\left[\begin{array}{cc}
		\mathbf{0} & \mathbf{\mathbf{N}}\\
		\mathbf{-\mathbf{K}} & -\mathbf{C}
		\end{array}\right],\quad\mathbf{B}=\left[\begin{array}{cc}
		\mathbf{N} & \mathbf{0}\\
		\mathbf{0} & \mathbf{M}
		\end{array}\right],\quad\mathbf{F}(\mathbf{z})=\left[\begin{array}{c}
		\mathbf{0}\\
		-\mathbf{f}(\mathbf{x},\dot{\mathbf{x}})
		\end{array}\right],\quad\mathbf{F}^{ext}(\mathbf{z},\boldsymbol{\phi})=\left[\begin{array}{c}
		\mathbf{0}\\
		\mathbf{f}^{ext}(\mathbf{x},\dot{\mathbf{x}},\boldsymbol{\phi})
		\end{array}\right],\label{eq:L1}
		\end{equation}
	}{\footnotesize\par}
	
	{\footnotesize{}
		\begin{equation}
		(L2):\quad\mathbf{A}=\left[\begin{array}{cc}
		-\mathbf{K} & \mathbf{0}\\
		\mathbf{0} & \mathbf{N}
		\end{array}\right],\quad\mathbf{B}=\left[\begin{array}{cc}
		\mathbf{C} & \mathbf{M}\\
		\mathbf{N} & \mathbf{0}
		\end{array}\right],\quad\mathbf{F}(\mathbf{z})=\left[\begin{array}{c}
		\mathbf{-\mathbf{f}(\mathbf{x},\dot{\mathbf{x}})}\\
		\mathbf{0}
		\end{array}\right],\quad\mathbf{F}^{ext}(\mathbf{z},\boldsymbol{\phi})=\left[\begin{array}{c}
		\mathbf{f}^{ext}(\mathbf{x},\dot{\mathbf{x}},\boldsymbol{\phi})\\
		\mathbf{0}
		\end{array}\right],\label{eq:L2}
		\end{equation}
	}where $\mathbf{N}\in\mathbb{R}^{n\times n}$ may be chosen as any
	non-singular matrix. If the matrices $\mathbf{M},\mathbf{C},\mathbf{K}$
	are symmetric, then the choice of $\mathbf{N=-K}$ for $(L1)$ and
	$\mathbf{N}=\mathbf{M}$ for $(L2)$ results in the first-order matrices
	$\mathbf{A},\mathbf{B}$ being symmetric. The computation methodology
	we will discuss is for any first-order system of the form~\eqref{eq:first_order}
	for $\mathbf{z}\in\mathbb{R}^{N}$. In particular, we have $N=2n$
	for second-order mechanical systems of the form~\eqref{eq:second-order}.
	
	We first focus on the autonomous ($\epsilon=0$) limit of the system
	\eqref{eq:first_order}, given by
	\begin{equation}
	\mathbf{B}\dot{\mathbf{z}}=\mathbf{Az}+\mathbf{F}(\mathbf{z}),\label{eq:first_order_aut}
	\end{equation}
	whose linearization at the fixed point $\mathbf{z}=0$ is 
	\begin{equation}
	\mathbf{B}\dot{\mathbf{z}}=\mathbf{Az}.\label{eq:first_order_linear}
	\end{equation}
	The linear system~\eqref{eq:first_order_linear} has invariant manifolds
	defined by eigenspaces of the generalized eigenvalue problem 
	\begin{equation}
	\left(\mathbf{A}-\lambda_{j}\mathbf{B}\right)\mathbf{v}_{j}=\mathbf{0},\quad j=1,\dots,N,\label{eq:lefteig}
	\end{equation}
	where for each distinct eigenvalue $\lambda_{j}$, there exists an
	\emph{eigenspace} $E_{j}\subset\mathbb{R}^{N}$ spanned by the real
	and imaginary parts of the corresponding generalized eigenvector $\mathbf{v}_{j}\in\mathbb{C}^{N}$.
	These eigenspaces are invariant for the linearized system~\eqref{eq:first_order_linear}
	and, by linearity, a subspace spanned by any combination of eigenspaces
	is also invariant for the system~\eqref{eq:first_order_linear}. A
	general invariant subspace of this type is known as a \emph{spectral
		subspace}~\cite{Haller2016} and is obtained by the direct-summation
	of eigenspaces as
	\[
	E_{j_{1},\dots,j_{q}}=E_{j_{1}}\oplus\dots\oplus E_{j_{q}},
	\]
	where $\oplus$ denotes the direct sum of vector spaces and $E_{j_{1},\dots,j_{q}}$
	is the spectral subspace obtained from the eigenspaces $E_{j_{1}},\dots,E_{j_{q}}$
	for some $q\in\mathbb{N}$. Classic examples of spectral subspaces
	are the stable, unstable and center subspaces, which are denoted by
	$E^{s}$, $E^{u}$ and $E^{c}$ and are obtained from eigenspaces
	associated to eigenvalues with negative, positive and zero real parts,
	respectively. By the center manifold theorem, these classic invariant
	subspaces of the linear system~\eqref{eq:first_order_linear} persist
	as invariant manifolds under the addition of nonlinear terms in system
	\eqref{eq:first_order_aut}. Specifically, there exist stable, unstable
	and center invariant manifolds $W^{s},W^{u}$ and $W^{c}$ tangent
	to $E^{s},E^{u}$ and $E^{c}$ at the origin $\mathbf{0}\in\mathbb{R}^{N}$
	respectively. All these manifolds are invariant and $W^{s}$ and $W^{u}$
	are also unique (see, e.g., Guckenheimer \& Holmes~\cite{Guckenheimer1983}).
	
	In analogy with the stable manifold $W^{s}$, which is the nonlinear
	continuation of the stable subspace $E^{s}$, a spectral submanifold
	(SSM)~\cite{Haller2016} is an invariant submanifold of the stable
	manifold $W^{s}$ that serves as the smoothest nonlinear continuation
	of a given stable spectral subspace of $E^{s}$. The existence and
	uniqueness results for such spectral submanifolds under appropriate
	conditions are derived by Haller \& Ponsioen~\cite{Haller2016} using
	the parametrization method of Cabré et al.~\cite{Cabre2003,Cabre2003a,Cabre2005}.
	The parametrization method also serves as a tool to compute these
	manifolds.
	
	We are interested in locally approximating the invariant manifolds
	of the fixed point $\mathbf{0}\in\mathbb{R}^{N}$ of system~\eqref{eq:first_order_aut}
	using the parametrization method. Let $\mathcal{W}(E)$
	be an invariant manifold of system~\eqref{eq:first_order_aut} which
	is tangent to a master spectral subspace $E\subset\mathbb{R}^{N}$
	at the origin such that
	\begin{equation}
	\mathrm{dim}(E)=\mathrm{dim}\left(\mathcal{W}(E)\right)=M<N.\label{eq:dimM}
	\end{equation}

	Let $\mathbf{V}_{E}=[\mathbf{v}_{1},\dots,\mathbf{v}_{M}]\in\mathbb{C}^{N\times M}$
	be a matrix whose columns contain the (right) eigenvectors corresponding
	to the master modal subspace $E$. Furthermore, we define
	a dual matrix $\mathbf{U}_{E}=[\mathbf{u}_{1},\dots,\mathbf{u}_{M}]\in\mathbb{C}^{N\times M}$
	which contains the corresponding left-eigenvectors that span the adjoint
	subspace $E^{\star}$ as
	\begin{equation}
	\mathbf{u}_{j}^{\star}\left(\mathbf{A}-\lambda_{j}\mathbf{B}\right)=\mathbf{0},\quad j=1,\dots,M,\label{eq:righteig}
	\end{equation}
	where we choose these eigenvectors to satisfy the normalization condition
	\begin{equation}
	\mathbf{u}_{i}^{\star}\mathbf{B}\mathbf{v}_{j}=\delta_{ij}~(\textrm{Kronecker delta)}.\label{eq:normalization}
	\end{equation}
	Using the eigenvalue problems~\eqref{eq:lefteig}-\eqref{eq:righteig},
	we obtain the following relations for the matrices $\mathbf{V}_{E}$
	and $\mathbf{U}_{E}$ 
	\begin{equation}
	\mathbf{A}\mathbf{V}_{E}=\mathbf{B}\mathbf{V}_{E}\mathbf{\Lambda}_{E},\label{eq:matrighteig}
	\end{equation}
	\begin{equation}
	\mathbf{U}_{E}^{\star}\mathbf{A}=\mathbf{\Lambda}_{E}\mathbf{U}_{E}^{\star}\mathbf{B},\label{eq:matlefteig}
	\end{equation}
	where $\mathbf{\Lambda}_{E}:=\mathrm{diag}(\lambda_{1},\dots,\lambda_{M})$.
	We note that if the matrices $\mathbf{M},\mathbf{C},\mathbf{K}$ are
	symmetric, then the matrices $\mathbf{A},\mathbf{B}$ will be symmetric
	as well. In that case, the left and the right eigenvectors $\mathbf{u}_{j},\mathbf{v}_{j}$
	are identical and we may conveniently choose $\mathbf{U}_{E}=\bar{\mathbf{V}}_{E}$,
	with the overbar denoting complex conjugation. 
	
	The common approach to local invariant manifold computation involves
	diagonalizing the system~\eqref{eq:first_order_aut} as 
	\begin{align}
	\dot{\mathbf{q}} & =\mathbf{\Lambda q}+\mathbf{T}(\mathbf{q}),\label{eq:diag_autonomous}
	\end{align}
	where
	\begin{equation}
	\mathbf{\Lambda}:=\mathrm{diag}(\lambda_{1},\dots,\lambda_{N}),\quad\mathbf{T}(\mathbf{q}):=\mathbf{U}^{\star}\mathbf{F}(\mathbf{V}\mathbf{q}),\quad\mathbf{V}=[\mathbf{v}_{1},\dots,\mathbf{v}_{N}],\quad\mathbf{U}=[\mathbf{u}_{1},\dots,\mathbf{u}_{N}],\label{eq:diag_terms}
	\end{equation}
	and $\mathbf{q}\in\mathbb{C}^{N}$ are modal coordinates with $\mathbf{z}=\mathbf{V}\mathbf{q}$.
	When $\mathbf{B}=\mathbf{I}_{N}$, then using the normalization condition
	$\eqref{eq:normalization}$, we obtain $\mathbf{U}^{\star}=\mathbf{V}^{-1}$,
	which results in the familiar diagonalized form~\eqref{eq:diag_autonomous}
	with $\mathbf{T}(\mathbf{q})=\mathbf{V}^{-1}\mathbf{F}(\mathbf{V}\mathbf{q})$.
	
	While the form~\eqref{eq:diag_autonomous} is very helpful for the
	purposes of proving the existence and uniqueness properties of invariant
	manifolds, it presents a computationally intractable form for the
	actual computation of invariant manifolds in high-dimensional finite element problems, as we will see next.
	\FloatBarrier
	\section{Pitfalls of the diagonalized form~\eqref{eq:diag_autonomous}}
	
	\label{subsec:Pitfalls-of-transformation}
	In this work, we use the Kronecker notation for expressing smooth nonlinear functions as a multi-variate Taylor series in terms of their arguments. The Kronecker product (also known as the outer/dyadic product) is commonly denoted by the symbol $ \otimes $. For a column vector $ \mathbf{z} \in \mathbb{R}^N$, the Kronecker product operation $ \mathbf{z} \otimes \mathbf{z} $ returns the matrix $ \mathbf{z}\mathbf{z}^{\top} \in \mathbb{R}^{N\times N}$. In index notation, we write
	\begin{equation}\label{key}
	(\mathbf{z}\otimes \mathbf{z})_{ij} = z_i z_j, \quad \forall i,j \in {1,\dots,N}.  
	\end{equation}
	The Kronecker notation is more generally defined for obtaining the product of higher-order tensors, where a first-order tensor can be viewed as a vector, a second-order tensor, as a matrix and an order-$ k $ tensor as a $k-$dimensional array. Specifically, the Kronecker product of two tensors of orders $ p $ and $ q $ yields a tensor of order $ p+q $. We refer to Van Loan~\cite{vanLoan2000} for a concise review of the Kronecker product and its properties. 
	
	Now, the system nonlinearity
	$\mathbf{F}$ (see eq.~\eqref{eq:first_order_aut}) can be expanded
	in terms of the physical coordinates $\mathbf{z}\in\mathbb{R}^{N}$
	as
	\begin{equation}
	\mathbf{F}(\mathbf{z})=\sum_{k\in\mathbb{N}}\mathbf{F}_{k}\mathbf{z}^{\otimes k},\label{eq:expF}
	\end{equation}
	where $\mathbf{z}^{\otimes k}$
	denotes the term $\mathbf{z}\otimes\dots\otimes\mathbf{z}$~($k$-times), containing $ N^k $ monomial terms at degree $ k $ in the variables $ \mathbf{z} $. The array~$\mathbf{F}_{k}\in\mathbb{R}^{N\times N^{k}}$ contains the coefficients of the nonlinearity $\mathbf{F}$ associated to each of these monomials. Similarly, the nonlinearity $\mathbf{T}$ (see eq.~\eqref{eq:diag_autonomous})
	in modal coordinates $\mathbf{q}\in\mathbb{C}^{N}$ can be expanded
	as 
	\begin{equation}
	\mathbf{T}(\mathbf{q})=\sum_{k\in\mathbb{N}}\mathbf{T}_{k}\mathbf{q}^{\otimes k}.\label{eq:expT}
	\end{equation}
	
	\subsection{Eigenvalue and eigenvector computation}
	\FloatBarrier
	For local approximations of invariant manifolds around a fixed point
	of~\eqref{eq:first_order_aut}, it is commonly assumed that the complete
	generalized spectrum of the matrix $\mathbf{B}^{-1}\mathbf{A}$ (or
	generalized eigenvalues of the pair $\mathbf{A},\mathbf{\mathbf{B}}$)
	is known and that a basis in which the linear system~\eqref{eq:first_order_linear}
	takes its Jordan canonical form is readily available (see, e.g., Simo
	\cite{Simo1990}, Homburg et al.~\cite{Homburg1995}, Tian \& Yu~\cite{Tian2013},
	Haro et al.~\cite{Haro2016}, Ponsioen et al.~\cite{Ponsieon2018,Ponsioen2020}).
	For small to moderately-sized systems, obtaining a complete set of
	(generalized) eigenvalues/eigenvectors can indeed be accomplished
	using numerical eigensolvers, but this quickly transforms into an
	intractable task as the system size increases.

	\begin{figure}[H]
		\centering{}\includegraphics[width=0.4\linewidth]{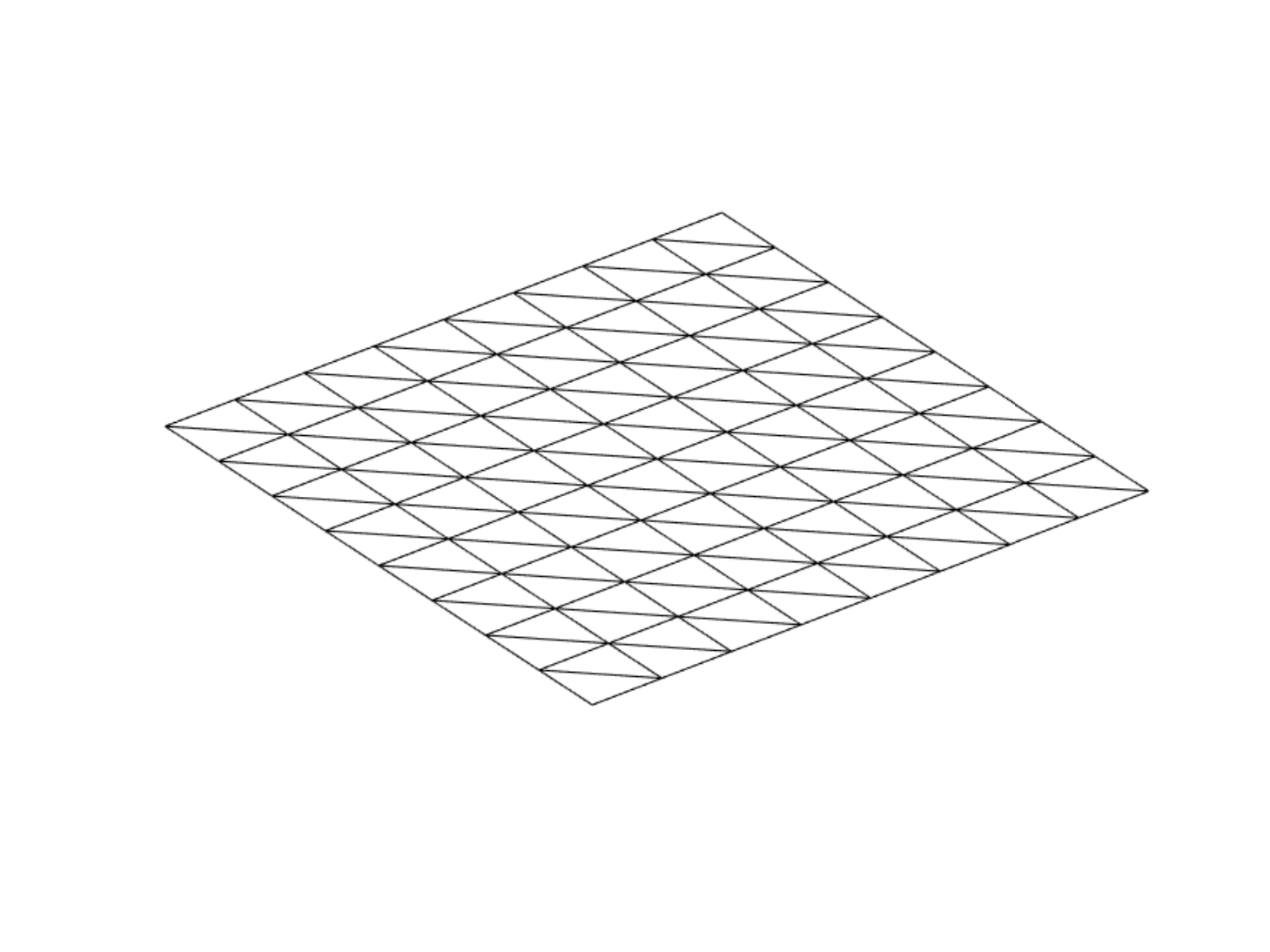} \caption{\label{fig:mesh} \small \textbf{Mesh for case study}: We use a shell-based
			finite element mesh of a square plate with geometric nonlinearities,
			arising from von Kármán strains for performing numerical experiments
			(see~Figures~\ref{fig:ev} and~\ref{fig:mem}). The material is
			linear elastic with Young's modulus $70$ GPa, density $2700$ kg/$\mathrm{mm}^{3}$
			and Poisson's ratio $0.33.$ The plate has a thickness of $8$ mm
			and length is proportional to the square root of the total number
			of elements. This fixes the size of the elements in the mesh and avoids numerical errors that may otherwise arise in larger meshes.}
	\end{figure}

	While techniques in numerical linear algebra can help us determine
	a small subset of eigenvalues and eigenvectors for very high-dimensional
	systems using a variety of iterative methods (see, e.g., Golub \&
	Van Loan~\cite{GVL}), obtaining a complete set of eigenvectors of
	such systems remains unfeasible despite the availability of modern
	computing tools. To emphasize this, we illustrate in Figure~\ref{fig:ev}
	the time and memory required for the eigenvalue computation for the
	finite element mesh for a square plate (see Figure~\ref{fig:mesh}).
	The purpose of this comparison is to report trends in computational
	complexity rather than precise numbers. To this end, we have used
	MATLAB across all comparisons, which may not be the fastest computing platform generally but is known to assimilate the state-of-the-art algorithms for numerical linear algebra computations (Golub \& Van
	Der Vorst~\cite{Golub2000}). 
	
	\begin{figure}[H]
		\begin{centering}
			\subfloat[]{\centering{}\includegraphics[width=0.48\linewidth]{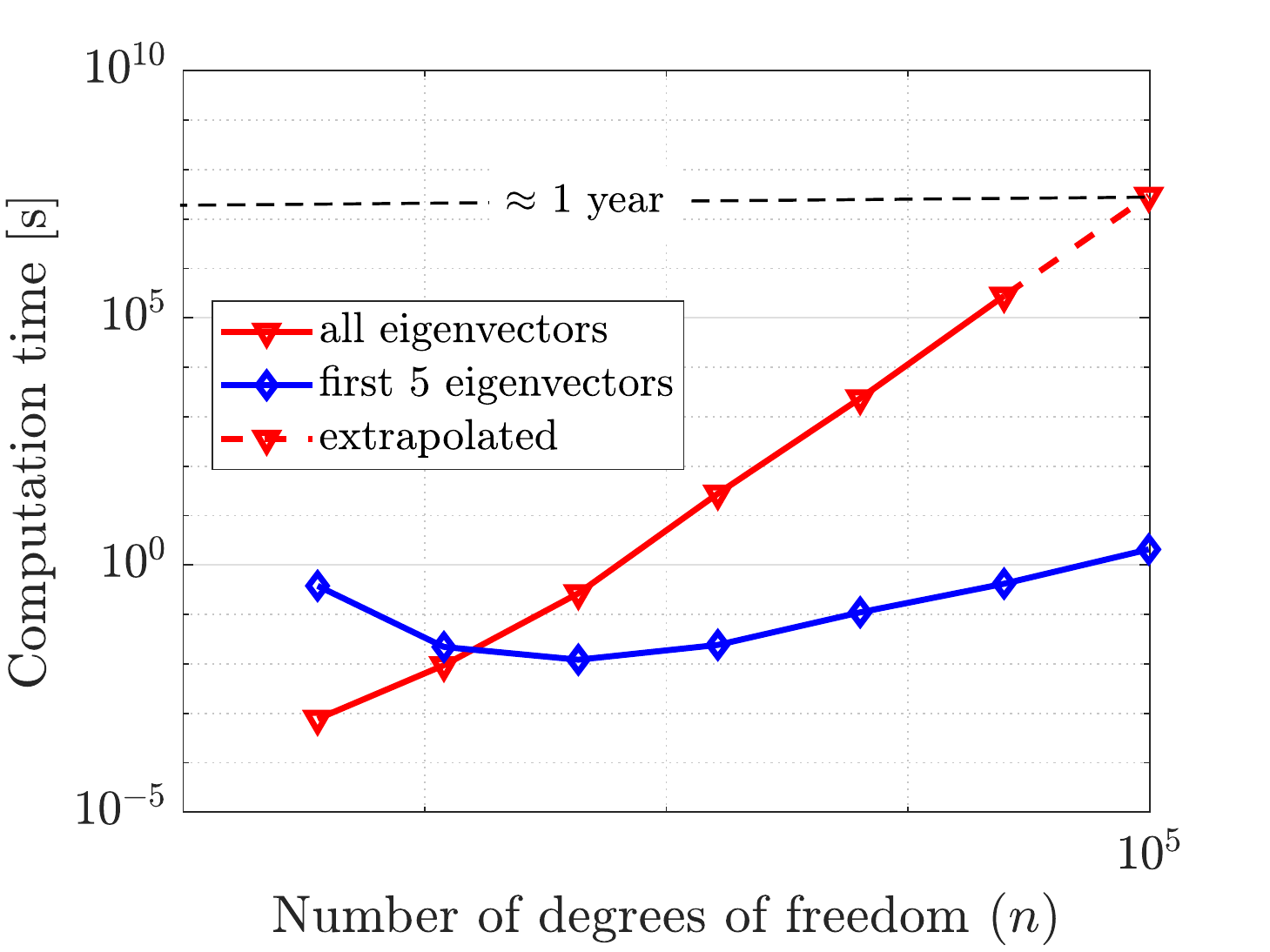}
				
			}\hfill{}\subfloat[]{\centering{}\includegraphics[width=0.48\linewidth]{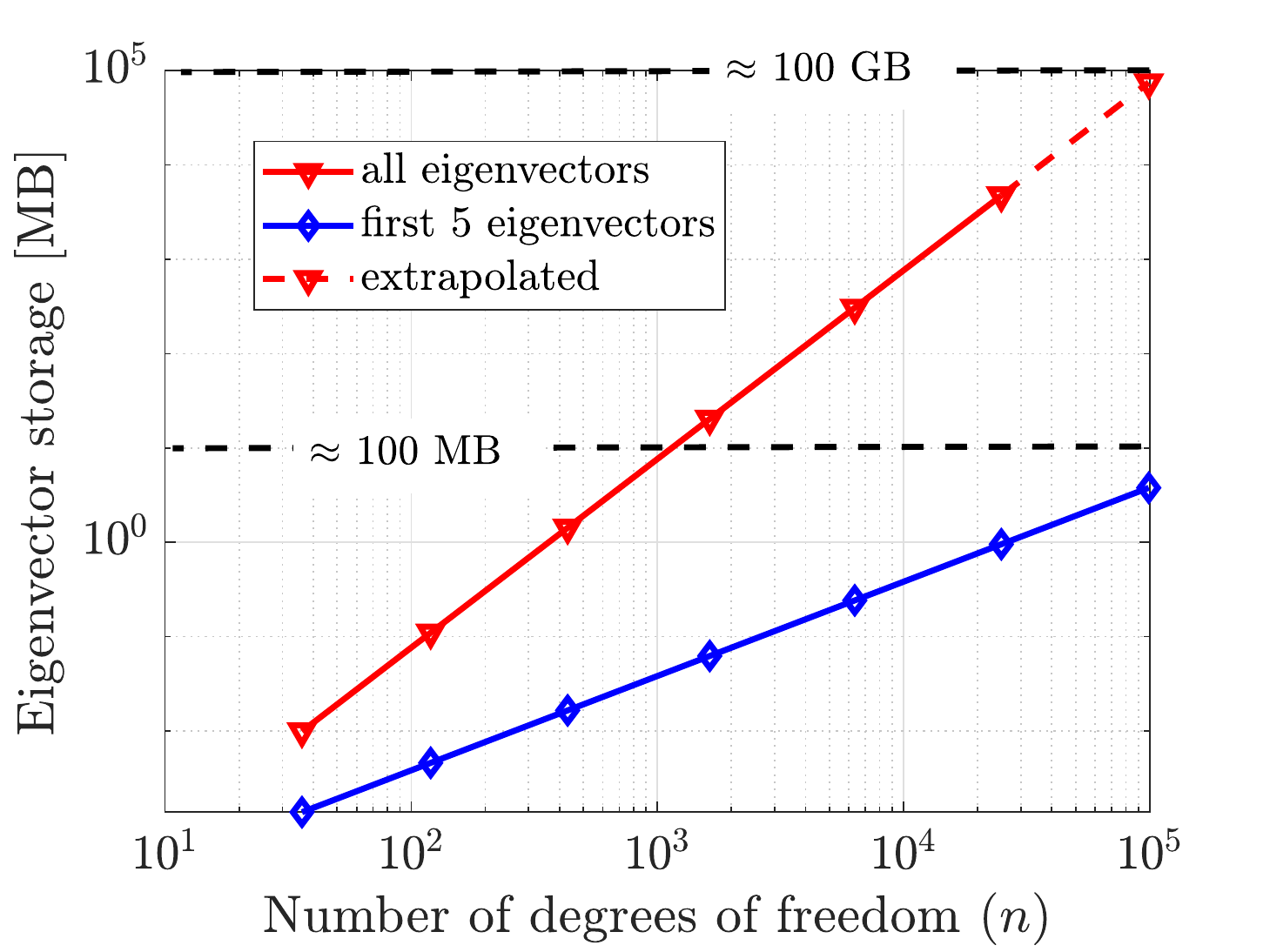}
			
		}
		\par\end{centering}
	\caption{\label{fig:ev} \small\textbf{Cost of computing eigenvalues and eigenvectors
		}of an $n-$degree-of-freedom plate example (see Figure~\ref{fig:mesh})
		for $n=$ 36, 120, 432, 1,632, 6,336, 24,960, 99,072. (a) Computation time and
		(b) memory required for obtaining all $n$ eigenvalues and eigenvectors
		using the $\texttt{eig}$ command of MATLAB compared to those for
		obtaining a subset of $5$ smallest-magnitude eigenvalues and their
		associated eigenvectors using the $\texttt{eigs}$ commands of MATLAB.
		These computations were performed on the ETH Z\"{u}rich Euler cluster
		with $10^{5}$ MB RAM. The computation of all eigenvalues in the case
		of $n=$ 99,072 degrees of freedom was manually terminated as the estimated
		computation time via extrapolation was found to be around 341 days.}
\end{figure}

Figure~\ref{fig:ev}a shows that as the number of degrees of freedom,
$n$, increases from a few tens to approximately a hundred thousand,
the computational time required for computing a full set of eigenvalues
of the system grows polynomially up to almost a year. For computing
a subset of eigenvalues in discretized PDEs, sparse iterative eigensolvers
are used, such as the routines (e.g., Stewart~\cite{Stewart2002},
Lehoucq et al.~\cite{Lehoucq1998}) implemented by the MATLAB's $\texttt{eigs}$
command (cf. direct eigensolvers implemented by the $\texttt{eig}$
command). These sparse solvers are considered inefficient for nearly
full or less sparse matrices. There are, therefore, two competing factors here,
sparsity of the matrices and the size of the matrices. The small matrices
in the beginning have very low sparsity and sparse eigensolver \texttt{eigs}
of MATLAB is inefficient for computing eigenvalues here. Indeed, we
see that computing the full set of eigenvalues for a small matrix (using the $\texttt{eig}$ command) ends up being less expensive compared to computing a subset of eigenvalues. As sparsity is governed by the number of DOFs that are shared by neighboring
elements relative to the total number of degrees of freedom, it increases
with mesh refinement. Thus, sparse eigensolvers become more efficient
as we refine the mesh initially, but after the refinement reaches an
optimum value, the computation time is governed solely by the size
of the matrix. 

Furthermore, all these eigenvectors must be held in the computer's
active memory (RAM) in typical invariant manifold computations, which
contributes towards very high memory requirements, as shown in Figure~\ref{fig:ev}b.
At the same time, these figures also show that a small subset of eigenvectors
can be quickly computed and easily stored even for very high dimensional
systems. 
\FloatBarrier
\subsection{Unfeasible memory requirements due to coordinate-change}

\label{subsec:memory}

\noindent Aside from the cost of eigenvalue computation, invariant
manifold computations typically involve local approximations via Taylor
series. These are obtained by transforming the system into modal coordinates
(see eq~\eqref{eq:diag_autonomous}), expressing the manifold locally
as a graph over the master subspace, substituting the polynomial ansatz
into eq.~\eqref{eq:diag_autonomous} and solving the invariance equations
recursively at each order. While such a modal transformation results
in decoupling of the governing equations at the linear level, it generally
annihilates the inherent sparsity in the nonlinear terms, as shown
in Figure~\ref{fig:mem}a. That sparsity generally arises because
only neighboring elements of the numerical mesh share coupled degrees
of freedom. Due to the loss of this sparsity upon transformation to
the diagonal form~\eqref{eq:diag_autonomous}, the number of polynomial
coefficients required to describe the nonlinearities increases by
orders of magnitude, resulting in unfeasible memory requirements.

\begin{figure}[H]
	\begin{centering}
		\subfloat[]{\centering{}\centering \includegraphics[width=0.55\linewidth]{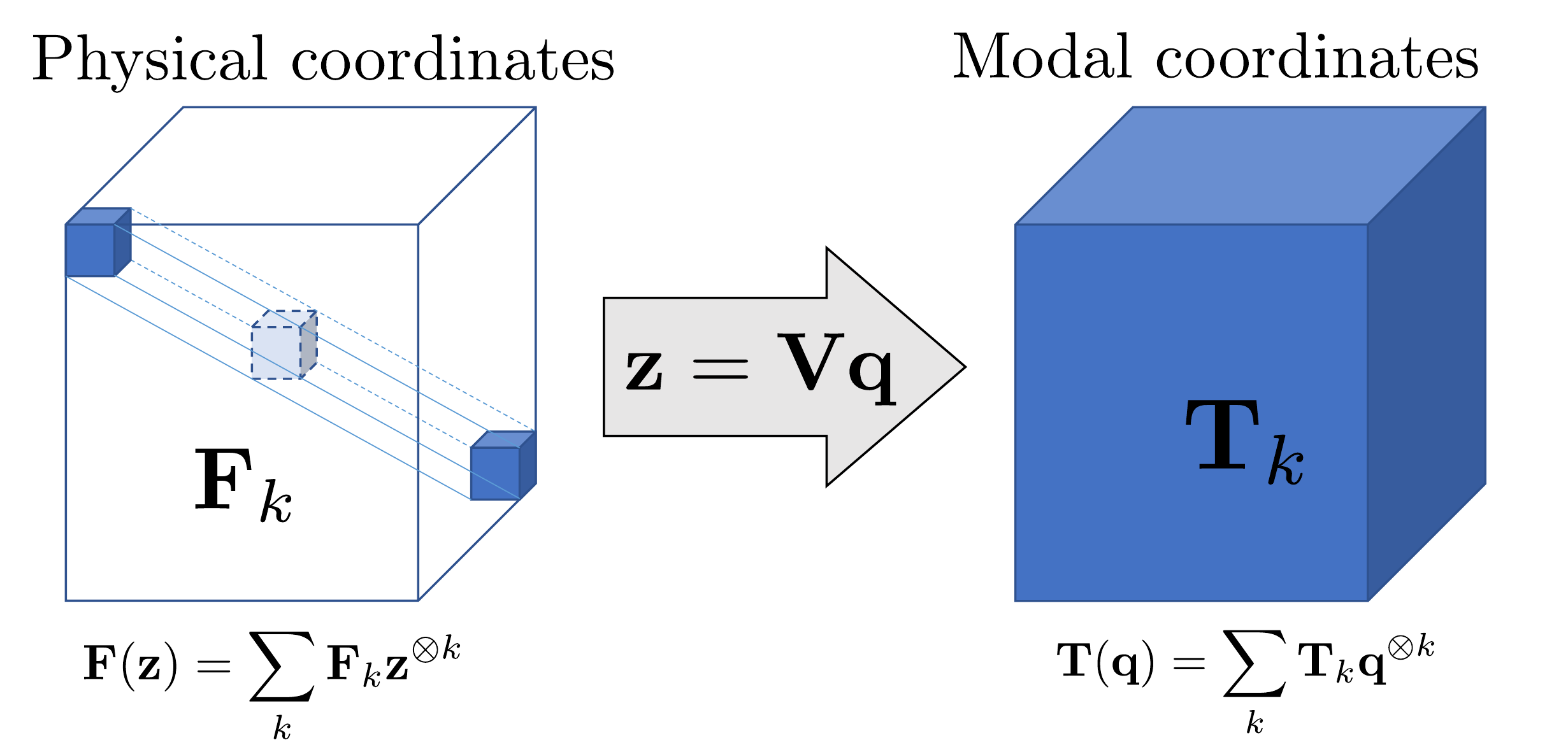}
			
		}\hfill{}\subfloat[]{\centering{}\includegraphics[width=0.4\linewidth]{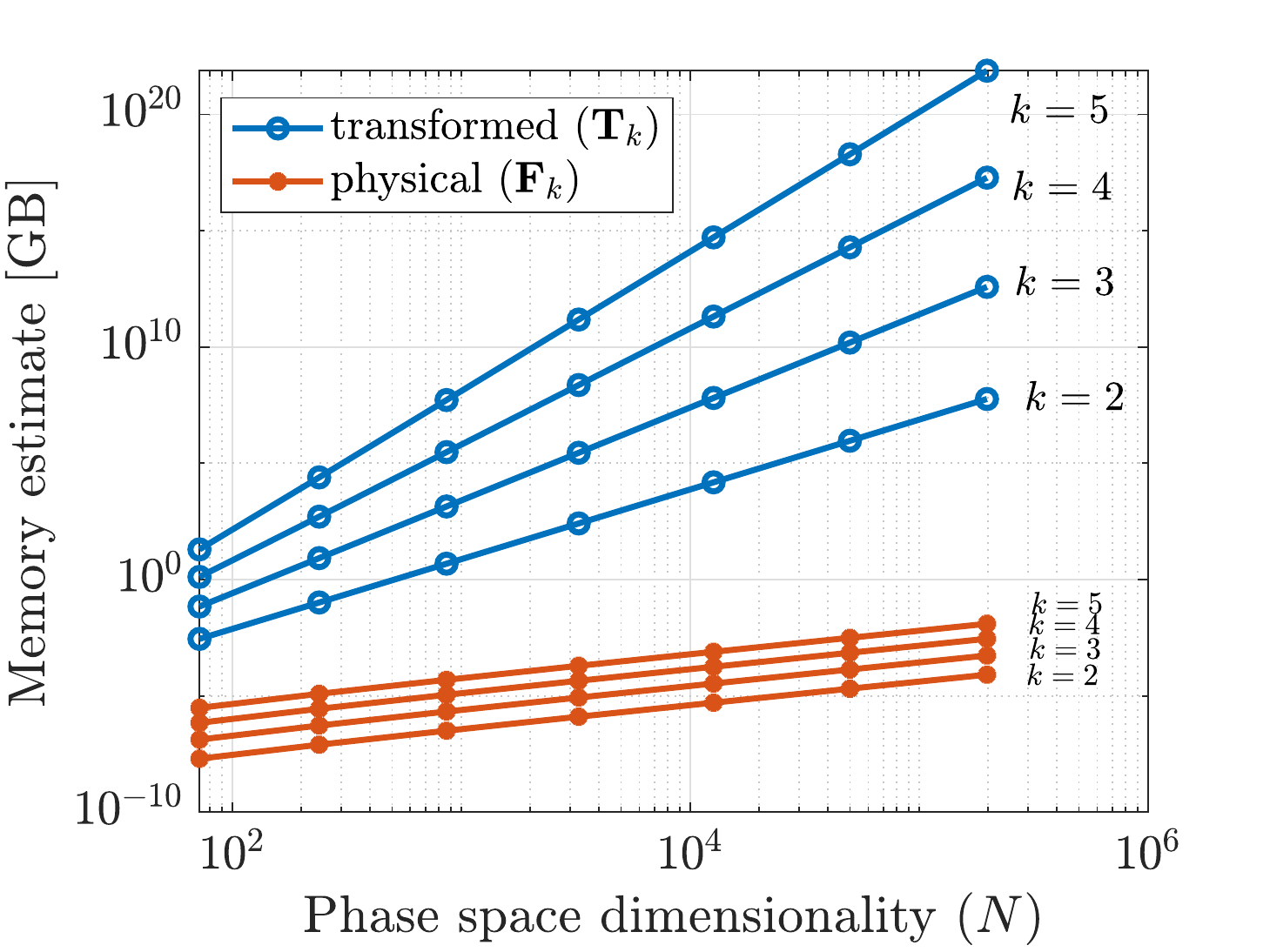}
		
	}
	\par\end{centering}
\caption{\label{fig:mem} \small (a)  \textbf{Destruction of sparsity}: Transforming
	the system~\eqref{eq:first_order_aut} into~\eqref{eq:diag_autonomous}
	via the linear transformation $\mathbf{z}=\mathbf{Vq}$ results in
	destruction of the inherent sparsity of the governing equations in
	physical coordinates $\mathbf{z}$, which leads to unfeasible memory
	(RAM) requirements for storing nonlinearity coefficients. Here, the
	multi-dimensional array $\mathbf{F}_{k}$ and $\mathbf{T}_{k}$ represent
	the polynomial coefficients at degree $k$ for the nonlinearities
	$\mathbf{F}$ (in physical coordinates) and $\mathbf{T}$ (in modal
	coordinates); see eqs.~\eqref{eq:first_order_aut},~\eqref{eq:diag_autonomous},
	\eqref{eq:expF} and~\eqref{eq:expT}. (b) Comparison of memory requirements
	for storing the nonlinearities $\mathbf{F}$ and $\mathbf{T}$ at
	degrees $k=2,3,4,5$ in the $n-$degree-of-freedom square plate example
	(see Figure~\ref{fig:mesh}) with $n=$ 36, 120, 432, 1,632, 6,336, 24,960, 99,072
	and phase space dimension $N=2n$.}
\end{figure}

Indeed, in Figure~\ref{fig:mem}b, we compare the memory estimates
for storing these coefficients in physical vs. modal coordinates as
a function of the system's phase space dimension $N$. We see
that even for the moderately sized meshes of the square plate example
(see Figure~\ref{fig:mesh}) considered here, the storage requirements
for the transformed coefficients reaches astronomically high values
in the order of several terabytes/petabytes. At the same time, however,
note that the RAM requirements for handling the same coefficients
in physical coordinates are much less than a gigabyte, which is easily
manageable for modern computers.

\section{Computing invariant manifolds of fixed points in physical coordinates}

\label{sec:computation_autonomous}

Unlike commonly employed computational approaches~\cite{Simo1990,Homburg1995,Tian2013,Haro2016,Ponsieon2018,Ponsioen2019},
we now describe the computation of general invariant manifolds in
physical coordinates using the eigenvectors and eigenvalues associated
to the master subspace $E$ only. This is motivated by the
computational advantages we expect based on Figures~\ref{fig:ev}
and~\ref{fig:mem}.

We seek to compute an invariant manifold $\mathcal{W}\left(E\right)$
tangent to a spectral subspace $E$ at the origin of system
\eqref{eq:first_order_aut}. Let $\mathbf{W}:\mathbb{C}^{M}\to\mathbb{R}^{N}$
be a mapping that parametrizes the $M-$dimensional manifold~$\mathcal{W}\left(E\right)$
and let $\mathbf{p}\in\mathbb{C}^{M}$ be its parametrization coordinates.
Then, $\mathbf{W}(\mathbf{p})$ provides us the coordinates of the
manifold in the phase space of system~\eqref{eq:first_order_aut}, as
shown in Figure~\ref{fig:param}. For any trajectory $\mathbf{z}(t)$
on the invariant manifold $\mathcal{W}\left(E\right)$,
we have a \emph{reduced dynamics} trajectory $\mathbf{p}(t)$ in the
parametrization space such that
\begin{equation}
\mathbf{\mathbf{z}}(t)=\mathbf{W}(\mathbf{p}(t)).\label{eq:invariance}
\end{equation}
Let $\mathbf{R}:\mathbb{C}^{M}\to\mathbb{C}^{M}$ be a parametrization
for the reduced dynamics. Then, any reduced dynamics trajectory~$\mathbf{p}(t)$ satisfies
\begin{equation}
\dot{\mathbf{p}}=\mathbf{R}(\mathbf{p}).\label{eq:red_dyn}
\end{equation}
\begin{figure}[H]
	\centering \includegraphics[width=0.8\linewidth]{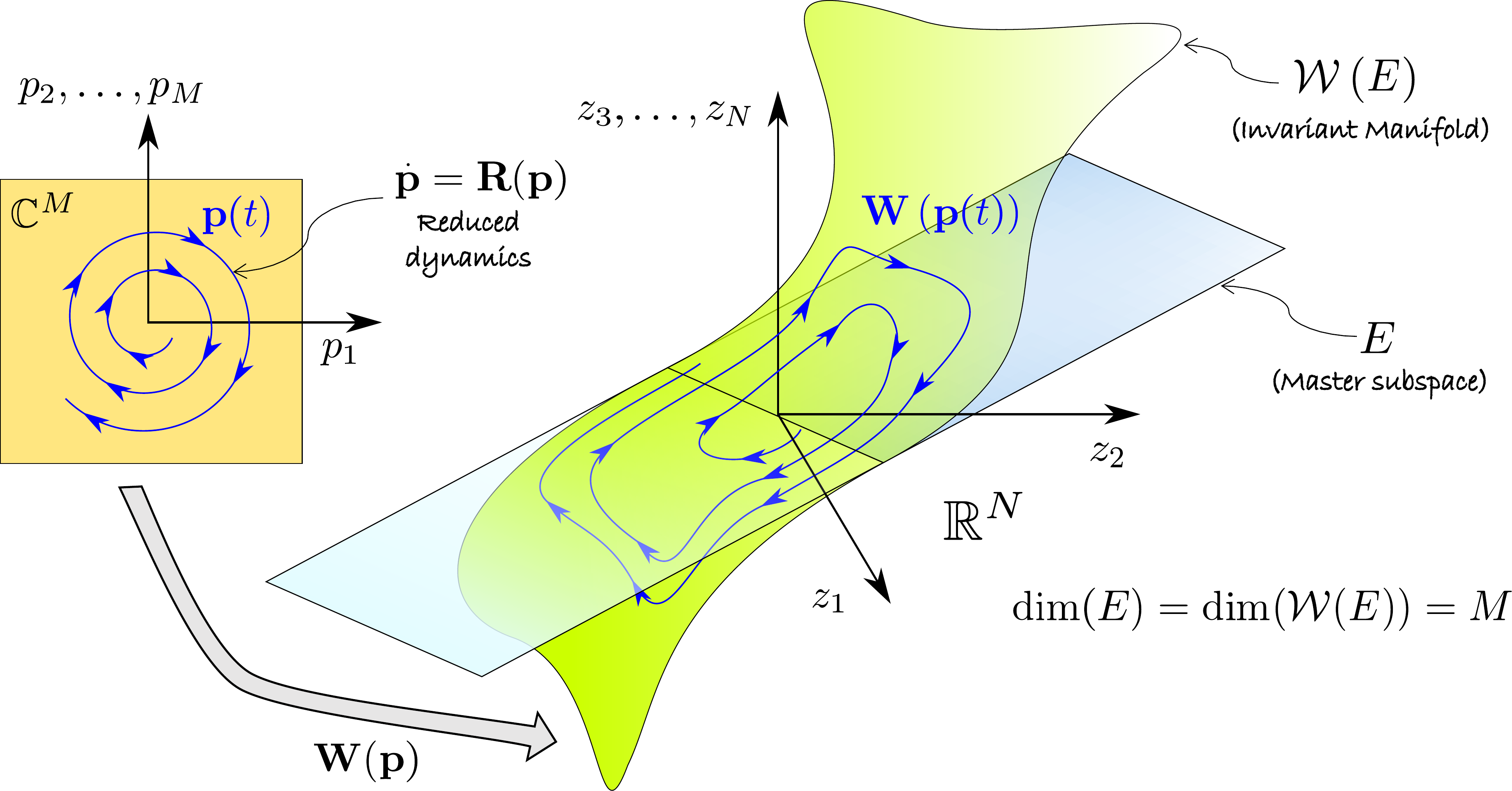}
	\caption{\label{fig:param}\small Using the parametrization method, we obtain the
		parametrization $\mathbf{W}:\mathbb{C}^{M}\to\mathbb{R}^{N}$ of an
		$M-$dimensional invariant manifold for the system~\eqref{eq:first_order_aut}
		constructed around a spectral subspace $E$ with $\dim(E)=\dim(\mathcal{W}(E))=M$.
		This manifold is tangent to $E$ at the origin. Furthermore,
		we have the freedom to choose the parametrization $\mathbf{R}:\mathbb{C}^{M}\to\mathbb{C}^{M}$
		of the reduced dynamics on the manifold such that the function $\mathbf{W}$
		also maps the reduced system trajectories $\mathbf{p}(t)$ onto the
		full system trajectories on the the invariant manifold, i.e., $\mathbf{z}(t)=\mathbf{W}\left(\mathbf{p}(t)\right)$.}
\end{figure}
Differentiating eq.~\eqref{eq:invariance} with respect to $t$ and
using eqs.~\eqref{eq:first_order} and~\eqref{eq:red_dyn}, we obtain
the \emph{invariance equation} of $\mathcal{W}(E)$ as
\begin{equation}
\mathbf{B}\,(D\mathbf{W})\,\mathbf{R}=\mathbf{A}\mathbf{W}+\mathbf{F}\circ\mathbf{W}.\label{eq:inv_eqn}
\end{equation}
To solve this invariance equation, we need to determine the parametrizations
$\mathbf{W}$ and $\mathbf{R}$. We choose to parameterize the manifold
and its reduced dynamics in the form of multivariate polynomial expansions
as 
\begin{equation}
\mathbf{W}(\mathbf{p})=\sum_{i\in\mathbb{N}}\mathbf{W}_{i}\mathbf{p}^{\otimes i},\label{eq:expW}
\end{equation}
\begin{equation}
\mathbf{R}(\mathbf{p})=\sum_{i\in\mathbb{N}}\mathbf{R}_{i}\mathbf{p}^{\otimes i},\label{eq:expR}
\end{equation}
where $\mathbf{W}_{i}\in\mathbb{C}^{N\times M^{i}}$, $\mathbf{R}_{i}\in\mathbb{C}^{M\times M^{i}}$
are matrix representation of multi-dimensional arrays containing the
unknown polynomial coefficients at degree $i$ for the parametrizations
$\mathbf{W}$ and $\mathbf{R}$. Furthermore, we have the expansion
\eqref{eq:expF} for the nonlinearity $\mathbf{F}$ in physical coordinates,
where $\mathbf{F}_{i}\in\mathbb{R}^{N\times N^{i}}$ are sparse arrays,
which are straight-forward to store despite their large size (see
Section~\ref{subsec:memory}). 

Using the expansions~\eqref{eq:expF}, \eqref{eq:expW} and \eqref{eq:expR}, we collect
the coefficients of the multivariate polynomials in the invariance
equation~\eqref{eq:inv_eqn} at degree $i\ge1$, similarly to Ponsioen
et al.~\cite{Ponsieon2018}, as 
\begin{equation}
\left(\mathbf{B}D\mathbf{W}\mathbf{R}\right)_{i}=\mathbf{A}\mathbf{W}_{i}+\left(\mathbf{F}\circ\mathbf{W}\right)_{i},\label{eq:inv_eqn_i}
\end{equation}
where
\begin{equation}
\left(\mathbf{B}D\mathbf{W}\mathbf{R}\right)_{i}=\mathbf{B}\ensuremath{\mathbf{W}_{1}}\ensuremath{\mathbf{R}_{i}}+\mathbf{B}\text{\ensuremath{\sum_{j=2}^{i}}\ensuremath{\mathbf{W}_{j}\boldsymbol{\mathcal{R}}_{i,j}}},\label{eq:BDW_i}
\end{equation}
with 
\begin{equation}
\boldsymbol{\mathcal{R}}_{i,j}:=\ensuremath{\sum_{k=1}^{j}}\ensuremath{\ensuremath{\underbrace{\mathbf{I}_{M}\otimes\dots\otimes\mathbf{I}_{M}\otimes\overset{\overset{k-\mathrm{th\,position}}{\uparrow}}{\mathbf{R}_{i-j+1}}\otimes\mathbf{I}_{M}\otimes\dots\otimes\mathbf{I}_{M}}_{j-\mathrm{terms}}}},\label{eq:R_ij}
\end{equation}
and
\begin{equation}
\left(\mathbf{F}\circ\mathbf{W}\right)_{i}=\sum_{j=2}^{i}\mathbf{F}_{j}\left(\sum_{\mathbf{q}\in\mathbb{N}^{j},|\mathbf{q}|=i}\mathbf{W}_{q_{1}}\otimes\dots\otimes\mathbf{W}_{q_{j}}\right).\label{eq:FoW_i}
\end{equation}
At leading-order, i.e., for $i=1$, equation~\eqref{eq:inv_eqn_i}
simply yields
\begin{equation}
\mathbf{A}\mathbf{W}_{1}=\mathbf{B}\mathbf{W}_{1}\mathbf{R}_{1}.\label{eq:eigenvalue}
\end{equation}
Comparing equation~\eqref{eq:eigenvalue} with the eigenvalue problem
\eqref{eq:matrighteig}, we choose a solution for $\mathbf{W}_{1},\mathbf{R}_{1}$
in terms of the master modes and their eigenvalues as 
\begin{equation}
\mathbf{W}_{1}=\mathbf{V}_{E},\quad\mathbf{R}_{1}=\boldsymbol{\Lambda}_{E}.\label{eq:W1}
\end{equation}

\begin{rem}
	The solution choice~\eqref{eq:W1} for $\mathbf{W}_{1},\mathbf{R}_{1}$
	is not unique. Indeed, we may choose $\mathbf{W}_{1}\in\mathbb{C}^{N\times M}$
	to be any matrix whose columns span the master subspace $E$,
	generally resulting in a non-diagonal $\mathbf{R}_{1}$. Since our
	system is defined in the space of reals ($\mathbb{R}$), a real choice
	of $\mathbf{W}_{1},\mathbf{R}_{1}$ allows us to choose the parametrization
	coordinates $\mathbf{p}$ in $\mathbb{R}^{M}$ instead of $\mathbb{C}^{M}$.
	This will result in $\mathbf{W}_{i}\in\mathbb{R}^{N\times M^{i}}$,
	$\mathbf{R}_{i}\in\mathbb{R}^{M\times M^{i}}$ for each $i$, which
	reduces the computational memory requirements by half relative to
	the complex setting. 
\end{rem}
At any order $i\ge2$ in eq.~\eqref{eq:inv_eqn_i}, we collect the terms
containing the coefficients $\mathbf{W}_{i}$ on the left-hand side
and the lower degree terms on the right-hand side as
\begin{equation}
\mathbf{B}\text{\ensuremath{\mathbf{W}_{i}\boldsymbol{\mathcal{R}}_{i,i}}}-\mathbf{A}\mathbf{W}_{i}=\mathbf{C}_{i}-\mathbf{B}\mathbf{W}_{1}\mathbf{R}_{i}\label{eq:ith_order_inv_vec}
\end{equation}
where $\boldsymbol{\mathcal{R}}_{i,i}$ is defined according to eq.
\eqref{eq:R_ij} and
\[
\mathbf{C}_{i}:=\left(\mathbf{F}\circ\mathbf{W}\right)_{i}-\mathbf{B}\text{\ensuremath{\sum_{j=2}^{i-1}}\ensuremath{\mathbf{W}_{j}\boldsymbol{\mathcal{R}}_{i,j}}}.
\]
We solve~\eqref{eq:ith_order_inv_vec} recursively for $i\ge2$ by
vectorizing it as (see, e.g., Van Loan~\cite{vanLoan2000})
\begin{equation}
\boldsymbol{\mathcal{L}}_{i}\mathbf{w}_{i}=\mathbf{h}_{i}(\mathbf{R}_{i}),\label{eq:vectorized_inv_eqn}
\end{equation}
where 
\begin{align}
\mathbf{w}_{i} & :=\mathrm{vec}\left(\mathbf{W}_{i}\right)\in\mathbb{C}^{NM^{i}},\label{eq:wi_def}\\
\boldsymbol{\mathcal{L}}_{i} & :=\left(\boldsymbol{\mathcal{R}}_{i,i}^{\top}\otimes\mathbf{B}\right)-\left(\mathbf{I}_{M^{i}}\otimes\mathbf{A}\right)\in\mathbb{C}^{NM^{i}\times NM^{i}},\label{eq:L_i_def}\\
\mathbf{h}_{i}(\mathbf{R}_{i}) & :=\mathrm{vec}\left(\mathbf{C}_{i}\right)-\mathbf{D}_{i}\mathrm{vec}\left(\mathbf{R}_{i}\right)\label{eq:hi_def}\\
\boldsymbol{\mathcal{R}}_{i,i} & =\text{\ensuremath{\sum_{j=1}^{i}}\ensuremath{\left(\mathbf{I}_{M}\right)^{\otimes j-1}\otimes\mathbf{R}_{1}\otimes\left(\mathbf{I}_{M}\right)^{\otimes i-j}}}\in\mathbb{C}^{M^{i}\times M^{i}},\textrm{(from definition\,\eqref{eq:R_ij})}\label{eq:R_i_def}\\
\mathbf{D}_{i} & :=\left(\mathbf{I}_{M^{i}}\otimes\mathbf{B}\mathbf{W}_{1}\right)\in\mathbb{C}^{NM\times M^{i}}.\label{eq:D_i_def}
\end{align}

In eq.~\eqref{eq:vectorized_inv_eqn}, the matrix $\boldsymbol{\mathcal{L}}_{i}$
is often called the order$-i$ \emph{cohomological} operator induced
by the linear flow~\eqref{eq:first_order_linear} and the master subspace
$E$ on the linear space whose coefficients are homogeneous,
$M-$variate polynomials of degree $i$ (see Haro et al.~\cite{Haro2016},
Murdock~\cite{Murdock2003}). Hence, at any order of expansion $i$,
the entries of $\boldsymbol{\mathcal{L}}_{i}$ are completely determined
using only the linear part of the full and reduced systems, i.e.,
via the matrices $\mathbf{A},\mathbf{B}$ and $\mathbf{R}_{1}$ (which
is equal to $\boldsymbol{\Lambda}_{E}$ due to the choice~\eqref{eq:W1}). 
\begin{rem}
	\label{rem:parallelization} For a diagonal choice of $\mathbf{R}_{1}$
	(see, e.g., choice~\ref{eq:W1}), the matrix $\boldsymbol{\mathcal{L}}_{i}$
	has a block-diagonal structure, i.e., system~\eqref{eq:vectorized_inv_eqn}
	can be split into $M^{i}$ decoupled linear systems containing $N$
	equations each. Hence, the coefficients parametrizing the manifold
	and its reduced dynamics can be determined independently for each
	monomial in $\mathbf{p}^{\otimes i}$. This splitting of the large
	system~\eqref{eq:vectorized_inv_eqn} into smaller decoupled systems
	not only eases computations but also makes these computations appealing
	for a parallel computing, which has the potential to speed these computations
	up by a factor of $M^{i}$ at each order $i$. 
\end{rem}
Note that the system~\eqref{eq:ith_order_inv_vec} is under-determined
in terms of the unknowns $\mathbf{W}_{i},\mathbf{R}_{i}$. As discussed
by Haro et al.~\cite{Haro2016}, this underdeterminacy turns out to
be an advantage, as it provides us the freedom to choose a particular
style of parametrization depending on the context. When the matrix
$\boldsymbol{\mathcal{L}}_{i}$ is non-singular for every $i\ge2$,
the cohomological equations~\eqref{eq:vectorized_inv_eqn} have a
unique solution $\mathbf{W}_{i}$ for any choice of the reduced dynamics
$\mathbf{R}_{i}$. The trivial choice
\begin{equation}
\mathbf{R}_{i}=\mathbf{0}\quad\forall i\ge2,\label{eq:red_dyn_trivial}
\end{equation}
leads to linear reduced dynamics. However, as we will see next, $\boldsymbol{\mathcal{L}}_{i}$
may be singular in the presence of \emph{resonances} and yet system
\eqref{eq:vectorized_inv_eqn} may be solvable under an appropriate
choice of parametrization. 

\subsection{Choice of parametrization}

\subsubsection*{Eigenstructure of $\boldsymbol{\mathcal{L}}_{i}$}

In order to choose the reduced dynamics $\mathbf{R}$ appropriately,
we seek to explore the eigenstructure of $\boldsymbol{\mathcal{L}}_{i}$
in relation to that of the matrix $\boldsymbol{\mathcal{R}}_{i}$
and the generalized matrix pair $\left(\mathbf{B},\mathbf{A}\right)$.
We first derive a general result which helps us compute the eigenstructure
of $\boldsymbol{\mathcal{R}}_{i}$. For notational purposes, we introduce
an ordered set $\Delta_{i}$ which contains all $i$-tuples $\boldsymbol{\ell}_{j}$
(indexed lexicographically) taking values in the range $1,\dots,M$,
defined as
\begin{equation}
\Delta_{i,M}:=\{\boldsymbol{\ell}_{1},\dots,\boldsymbol{\ell}_{M^{i}}\in\{1,\dots,M\}^{i}\subset\mathbb{N}^{i}\}.\label{eq:index_set}
\end{equation}
As an example, consider the case of $i=3$ and $M=2$. Then, we may
order the $3-$tuples in $\Delta_{3,2}$ lexicographically as $\{(1,1,1),(1,1,2),(1,2,1),(1,2,2),(2,1,1),(2,1,2),(2,2,1),(2,2,2)\}$,
which contains $2^{3}$ elements. Essentially, each $ \boldsymbol{\ell} \in \Delta_{i,M} $ corresponds to a monomial of degree $ i $ in the reduced variables $ \mathbf{p}\in \mathbb{C}^M $, i.e., $ p_{\ell_1}p_{\ell_2}\dots p_{\ell_i} $   

At a high order $i$ for the manifold expansion, the matrix $\boldsymbol{\mathcal{R}}_{i,i}$
in eq.~\eqref{eq:ith_order_inv_vec} may be high-dimensional, even
though its components only involve the low-dimensional matrices $\mathbf{I}_{M}$
and $\mathbf{R}_{1}$ (see eq.~\eqref{eq:R_i_def}). Proposition~\ref{thm:1}
in Appendix~\ref{sec:Proof-of-Proposition1} allows us to compute
all the eigenvalues and eigenvectors of $\boldsymbol{\mathcal{R}}_{i,i}$
simply in terms of those of $\mathbf{R}_{1}$. Indeed, let the eigenvalues
of $\mathbf{R}_{1}^{\top}$ be given by $\lambda_{1},\dots,\lambda_{M}$.
Note that for a diagonal choice of ~$\mathbf{R}_{1}$ (see choice~\eqref{eq:W1}), the left and right eigenvectors are simply given by the unit vectors aligned with the coordinate axes
in $\mathbb{C}^{M}$, i.e. $\mathbf{e}_{1},\dots,\mathbf{e}_{M}$.
Then, from Proposition~\ref{thm:1}, the eigenvalues and eigenvectors
of $\boldsymbol{\mathcal{R}}_{i,i}^{\top}$ are given as 
\begin{equation}
\lambda_{\mathbf{\boldsymbol{\ell}}}:=\lambda_{\ell_{1}}+\dots+\lambda_{\ell_{i}},\quad\mathbf{e}_{\boldsymbol{\ell}}=\mathbf{e}_{\ell_{1}}\otimes\dots\otimes\mathbf{e}_{\ell_{i}},\quad\boldsymbol{\ell}\in\Delta_{i,M}.\label{eq:eig_R1}
\end{equation}
Furthermore, Proposition~\ref{thm:2} in Appendix~\ref{sec:Proof-of-Proposition1}
characterizes the eigenstructure of $\boldsymbol{\mathcal{L}}_{i}$
in relation to that of the matrices $(\mathbf{B},\mathbf{A})$ and $\boldsymbol{\mathcal{R}}_{i,i}$.
From Proposition~\ref{thm:2}, we deduce that $\boldsymbol{\mathcal{L}}_{i}$
is singular whenever the resonance $\lambda_{\mathbf{\boldsymbol{\ell}}}=\lambda_{j}$
occurs for some $\boldsymbol{\ell}\in\Delta_{i,M}$, $j\in\{1,\dots,N\}$.
In this case, the solvability of eq.~\eqref{eq:vectorized_inv_eqn}
depends on the nature of such resonances. Hence, these resonances
are distinguished into inner and outer resonances as 
\begin{align}
\textrm{Inner resonances:}\quad\lambda_{\mathbf{\boldsymbol{\ell}}} & =\lambda_{j},\quad\boldsymbol{\ell}\in\Delta_{i,M},j\in\{1,\dots,M\},\label{eq:internal_res}\\
\textrm{Outer resonances:}\quad\lambda_{\mathbf{\boldsymbol{\ell}}} & =\lambda_{j},\quad\boldsymbol{\ell}\in\Delta_{i,M},j\in\{M+1,\dots,N\}.\label{eq:cross_res}
\end{align}
Both inner and outer resonances result in the cohomological operator
$\boldsymbol{\mathcal{L}}_{i}$ becoming singular. The main difference
between these resonances is that the cohomological equation~\eqref{eq:ith_order_inv_vec}
can be solved in the presence of inner resonances by adjusting the
parametrization choice of $\mathbf{R}_{i}$ so that the right-hand
side of~\eqref{eq:ith_order_inv_vec} belongs to $\mathrm{im(}\boldsymbol{\mathcal{L}}_{i})$,
which we will discuss shortly. In the presence of outer resonances,
however, the right-hand side of equation~\eqref{eq:ith_order_inv_vec}
cannot be adjusted to lie in the range of the operator~$\boldsymbol{\mathcal{L}}_{i}$
and, hence, system~\eqref{eq:ith_order_inv_vec} has no solution.
Indeed, the manifold does not exist in presence of certain outer resonances
(see Cabré et al.~\cite{Cabre2003}, Haller \& Ponsioen~\cite{Haller2016}).
Haro et al.~\cite{Haro2016} refer to inner and outer resonances as
internal and cross resonances, we use this terminology of Ponsioen
et al.~\cite{Ponsieon2018} as internal resonances carry a different
meaning in the context of mechanics.

Next, we discuss two common choices for reduced dynamics parametrization,
i.e., the \emph{normal form} and the \emph{graph} style parametrizations,
which are useful for solving the cohomological equation~\eqref{eq:vectorized_inv_eqn}
in the presence of inner resonances. 

\subsubsection*{Normal form parametrization}

Normal forms provide us tools for the qualitative and quantitative
understanding of local bifurcations in dynamical systems. Normal form
computations for any dynamical system involve successive near-identity
coordinate transformations to simplify the transformed dynamics. The
simplest form of dynamics that one can hope for is linear. In the
presence of resonances, however, a transformation that linearizes
the dynamics does not exist and the normal form procedure results
in nonlinear dynamics that is only ``as simple as possible''. This
is achieved by systematically removing the nonessential terms
from the Taylor series up to any given order (see, e.g., Guckenheimer
\& Holmes~\cite{Guckenheimer1983}). 

Using the parametrization method, we can simultaneously compute the
normal form parametrization for the reduced dynamics $\mathbf{R}$
along with the parametrization $\mathbf{W}$ for the manifold. As
discussed earlier, in the absence of any inner resonances, the
trivial choice (see eq.~\eqref{eq:red_dyn_trivial}) leads to the
simplest (i.e., linear) reduced dynamics, which is automatically obtained
from the normal form procedure. However, when the inner resonance
relations~\eqref{eq:internal_res} hold, then the dynamics cannot
be linearized. In such cases, we can compute the essential nonlinear
terms at degree $i$ following the normal form style of parametrization.
This involves first projecting the invariance equation~\eqref{eq:vectorized_inv_eqn}
onto $\mathrm{ker}(\boldsymbol{\mathcal{L}}_{i}^{\star})$ to eliminate
the unknowns $\mathbf{W}_{i}$ and then solving for the essential
nontrivial terms in $\mathbf{R}_{i}$ by computing a partial inverse
(see Murdock~\cite{Murdock2003}). In prior approaches, this is achieved
by transforming the governing equations into diagonal coordinates~\cite{Guckenheimer1983,Haro2016,Ponsieon2018,Ponsioen2020} which
causes the matrix $\boldsymbol{\mathcal{L}}_{i}$ to be diagonal and
hence simplifies the detection of its kernel. Here, we develop explicit
expressions for the computation of normal form directly in physical
coordinates using only the knowledge of the left eigenvectors $\mathbf{u}_{j}$
associated to the master subspace $E$, as summarized below. 

We focus on the case of a system with~$r_{i}\in\mathbb{N}_{0}$ inner
resonances and no outer resonances at order~$i$. Taking $\mathbf{D}=\mathcal{\boldsymbol{R}}_{i,i}^{\top}$
and $\mathbf{C}=\mathbf{I}_{M^{i}}$ in Proposition~\ref{thm:2},
we can directly estimate $\mathrm{ker}(\boldsymbol{\mathcal{L}}_{i}^{\star})$
using only the eigenvalues $\lambda_{j}$ and the corresponding eigenvectors
of the master subspace $E$. Specifically, the generalized
eigenvalues for the matrix pair $(\mathbf{I}_{M^{i}},\mathcal{\boldsymbol{R}}_{i,i}^{\top})$
are given by $\mu_{\boldsymbol{\ell}}=\frac{1}{\lambda_{\mathbf{\boldsymbol{\ell}}}}$
with left-eigenvectors~$\mathbf{e}_{\boldsymbol{\ell}}$ according
to eq.~\eqref{eq:eig_R1} and with the left kernel $\mathcal{N}_{i}$
of $\boldsymbol{\mathcal{L}}_{i}$ given as

\begin{equation}
\mathcal{N}_{i}:=\mathrm{ker}(\boldsymbol{\mathcal{L}}_{i}^{\star})=\textrm{span}\left(\left(\mathbf{e}_{\boldsymbol{\ell}}\otimes\mathbf{u}_{j}\right)\in\mathbb{C}^{NM^{i}}|\text{\ensuremath{\lambda_{\mathbf{\boldsymbol{\ell}}}}}=\lambda_{j},\boldsymbol{\ell}\in\Delta_{i,M},j\in\{1,\dots,M\}\right).\label{eq:kernel_i}
\end{equation}
Now, let $\mathbf{N}_{i}\in\mathbb{C}^{NM^{i}\times r_{i}}$ be a
basis for $\mathcal{N}_{i}$, which can be obtained by simply stacking
the column vectors $\left(\mathbf{e}_{\boldsymbol{\ell}}\otimes\mathbf{u}_{j}\right)$
from Definition~\eqref{eq:kernel_i} that are associated to the modes
with inner resonances (see eq.~\eqref{eq:internal_res}). Then, the
reduced dynamics coefficients in the normal form parametrization are
chosen by projecting the invariance equation~\eqref{eq:vectorized_inv_eqn}
onto $\mathbf{N}_{i}$ as 
\begin{equation}
\mathbf{N}_{i}^{\star}\boldsymbol{\mathcal{L}}_{i}\mathbf{w}_{i}=\mathbf{N}_{i}^{\star}\mathbf{h}_{i}(\mathbf{R}_{i}).\label{eq:projected_inv_eqn}
\end{equation}
The left-hand-side of eq.~\eqref{eq:projected_inv_eqn} is identically
zero since columns of $\mathbf{N}_{i}$ belong to $\mathrm{ker}(\boldsymbol{\mathcal{L}}_{i}^{*})$
, i.e., \begin{equation}
\mathbf{N}_{i}^{\star}\boldsymbol{\mathcal{L}}_{i}=\mathbf{0}.
\end{equation}
Hence, we are able to eliminate the unknowns $\mathbf{W}_{i}$ from
eq.~\eqref{eq:projected_inv_eqn} to obtain
\begin{equation}
\mathbf{N}_{i}^{\star}\mathbf{D}_{i}\mathrm{vec}\left(\mathbf{R}_{i}\right)=\mathbf{N}_{i}^{\star}\mathrm{vec}\left(\mathbf{C}_{i}\right).\label{eq:reduced_dynamics_normal}
\end{equation}
To solve eq.~\eqref{eq:reduced_dynamics_normal}, we may further simplify
it using the normalization~\eqref{eq:normalization} which results
in 
\begin{align}
\mathbf{N}_{i}^{\star}\mathbf{D}_{i} & =\text{\ensuremath{\mathbf{E}_{i}}}^{\top},\label{eq:relation}
\end{align}
where $\mathbf{E}_{i}\in\mathbb{R}^{M^{i+1}\times r_{i}}$ is a matrix
whose columns are of the form $\left(\mathbf{e}_{\boldsymbol{\ell}}\otimes\mathbf{e}_{j}\right)$,
such that $\left(\mathbf{\boldsymbol{\ell}},j\right)$ are pairs with
inner resonances, i.e., $\text{\ensuremath{\lambda_{\mathbf{\boldsymbol{\ell}}}}}=\lambda_{j}$
and $\mathbf{e}_{j}\in\mathbb{R}^{M}$ is the unit vector aligned
along the $j^{\mathrm{th}}$ coordinate axis. Here, the columns $\left(\mathbf{e}_{\boldsymbol{\ell}}\otimes\mathbf{e}_{j}\right)$
of $\mathbf{E}_{i}$ must be arranged analogous to the columns $\left(\mathbf{e}_{\boldsymbol{\ell}}\otimes\mathbf{u}_{j}\right)$
of $\mathbf{N}_{i}$. Using the relation~\eqref{eq:relation}, and
noting that $\mathbf{N}_{i}$ is a Boolean matrix with $\mathbf{E}_{i}\text{\ensuremath{\mathbf{E}_{i}}}^{\top}=\mathbf{I}$,
we obtain the canonical solution to~\eqref{eq:reduced_dynamics_normal}
for the coefficients $\mathbf{R}_{i}$ as 
\begin{equation}
\textrm{Normal-form style:}\quad\mathrm{vec}\left(\mathbf{R}_{i}\right)=\mathbf{E}_{i}\mathbf{N}_{i}^{\star}\mathrm{vec}\left(\mathbf{C}_{i}\right).\label{eq:normal_form}
\end{equation}
Note that for each inner resonant pair $\left(\mathbf{\boldsymbol{\ell}},j\right)$ in the definition of $ \mathcal{N}_i $~\eqref{eq:kernel_i}, the solution choice~\eqref{eq:normal_form} produces non-trivial coefficients in the $ j ^{\mathrm{th}}$ equation of reduced dynamics~\eqref{eq:red_dyn} precisely for the monomial $ p_{\ell_1}\dots p_{\ell_i} $ corresponding to an inner resonance. As a result, eq.~\eqref{eq:normal_form} directly provides the normal form coefficients of the reduced dynamics on the manifold in physical
coordinates, using only the knowledge of the master modes spanning
the adjoint modal subspace $E^{\star}$. 

\begin{rem}
	\label{rem:near_resonances}(Near-resonances) As the resonance relations
	\eqref{eq:internal_res}-\eqref{eq:cross_res} are meant for the real
	as well as imaginary parts of the eigenvalues simultaneously, these
	are seldom satisfied exactly. However, in lightly damped systems (where
	the real parts of the eigenvalues are small), near-resonances might
	exist between the imaginary parts of the eigenvalues. In such cases,
	it is desirable to include the corresponding near-resonant modes in
	the normal form parametrization of the reduced dynamics otherwise
	it leads to \emph{small divisors} (ill-conditioning) in solving system~\eqref{eq:vectorized_inv_eqn}
	and the domain of the validity of the Taylor approximation to the
	manifold shrinks (see, e.g., Guckenheimer \& Holmes~\cite{Guckenheimer1983}). 
\end{rem}

\subsubsection*{Graph style parametrization}

As the name suggests, a graph style of parametrization for the reduced
dynamics is the result of expressing the manifold as a graph over
the master subspace $E$ (see Haro et al~\cite{Haro2016}),
as done in the graph transform method. The graph style of parametrization
may be appealing in the context of center manifold computation, where
an infinite number of inner resonances may arise. For instance, in
a system with a two-dimensional center subspace with eigenvalues $\lambda_{1,2}=\pm\mathrm{i}\omega$,
its center manifold exhibits the inner resonances 
\begin{align}
\lambda_{1} & =\left(\ell+1\right)\lambda_{1}+\ell\lambda_{2},\\
\lambda_{2} & =\ell\lambda_{1}+\left(\ell+1\right)\lambda_{2},\quad\forall\ell\in\mathbb{N}.
\end{align}
In our setting, a graph style of parametrization is achieved by projecting
the invariance equations~\eqref{eq:vectorized_inv_eqn} onto the subspace
$\mathcal{G}_{i}$ defined as
\begin{equation}
\mathcal{G}_{i}:=\textrm{span}\left(\left(\mathbf{e}_{\boldsymbol{\ell}}\otimes\mathbf{u}_{j}\right)\in\mathbb{C}^{NM^{i}},\boldsymbol{\ell}\in\Delta_{i,M},j\in\{1,\dots,M\}\right).\label{eq:subspace_graph}
\end{equation}
Note that in the case of inner resonances, $\mathcal{N}_{i}\subset\mathcal{G}_{i}$
(cf. Definition~\eqref{eq:kernel_i}) and, hence, $\mathcal{G}_{i}$
includes all possible resonant subspaces at order $i$.
Then, similarly to the normal form style, we define a basis $\mathbf{G}_{i}\in\mathbb{C}^{NM^{i}\times M^{i}}$
for $\mathcal{G}_{i}$, obtained by stacking the column vectors $\left(\mathbf{e}_{\boldsymbol{\ell}}\otimes\mathbf{u}_{j}\right)$
in the definition~\eqref{eq:subspace_graph}. We obtain a graph style
of parametrization by projecting~\eqref{eq:vectorized_inv_eqn}
on to $\mathcal{G}_{i}$ and equating the right-hand side to zero
as

\begin{equation}
\mathbf{G}_{i}^{\star}\mathbf{h}_{i}(\mathbf{R}_{i})=\mathbf{0}.\label{eq:project_inv_eqn_graph}
\end{equation}
In contrast to the normal form style~\eqref{eq:normal_form}, where only the coefficients of resonant monomials are nontrivial, we generally obtain a larger set of monomials with nontrivial coefficients in the graph style. Hence, the normal form style retains only the minimal number of nonlinear terms in the reduced that are essential for solving the invariance equation~\eqref{eq:inv_eqn} at each order $ i $, whereas the graph style generally leads to more complex expressions of the reduced dynamics. 

For the choice of $\mathbf{W}_{1}=\mathbf{V}_{E}$ (see eq.
\eqref{eq:W1}) and the normalization condition~\eqref{eq:normalization},
eq.~\eqref{eq:project_inv_eqn_graph} can be simplified to obtain
the reduced dynamics coefficients in the graph style as 
\begin{equation}
\textrm{Graph style:}\quad\mathbf{R}_{i}=\mathbf{U}_{E}^{\star}\mathbf{C}_{i}.\label{eq:reduced_dyn_graph}
\end{equation}
Note that eq.~\eqref{eq:reduced_dyn_graph} directly provides the
reduced dynamics coefficients in the graph style without the evaluation
of $\mathbf{G}_{i}$. Hence, an advantage of using the graph style
parametrization relative to the normal form style is that specific
inner resonances need not be identified a priori.

More generally, a combination of graph and normal form styles of parametrization
may also be used depending on the problem. This is referred to as
a \emph{mixed} style of parametrization as discussed by Haro et al.~\cite{Haro2016}. A mixed style may be particularly appealing in the
context of parameter-dependent manifold computation, where the parameters
are dummy dynamic variables and the associated modes always have trivial
dynamics. Thus, it is desirable to choose a graph style for the parametric
modes and a normal form style for remaining master modes which may
feature inner resonances (see Haro et al.~\cite{Haro2016}, Murdock~\cite{Murdock2003}). 

\subsubsection*{Computing the parametrization coefficients $\mathbf{w}_{i}$}

Once the reduced dynamics coefficients $\mathbf{R}_{i}$ (specific
to the choice of parametrization style) are determined (see eqs.~\eqref{eq:normal_form},
\eqref{eq:reduced_dyn_graph}), we can compute the manifold parametrization
coefficients $\mathbf{W}_{i}$ by solving eq.~\eqref{eq:ith_order_inv_vec}. 

When the coefficient matrix $\boldsymbol{\mathcal{L}}_{i}$ is (nearly)
singular in the presence of (near) resonances, numerical blow-up errors (ill-conditioning)
may occur due to the small divisors that arise in solving system~\eqref{eq:ith_order_inv_vec}
using conventional solvers (see also Remark~\ref{rem:near_resonances}).
As an alternative, we adopt a norm-minimizing solution to~\eqref{eq:ith_order_inv_vec}
given by
\begin{equation}
\mathbf{w}_{i}=\min_{\mathbf{x}\in\mathbb{C}^{NM^{i}},\boldsymbol{\mathcal{L}}_{i}\mathbf{x}=\mathbf{h}_{i}(\mathbf{R}_{i})}\|\mathbf{x}\|^{2},\label{eq:lsqminnorm}
\end{equation}
which can be obtained using existing routines, such as the $\texttt{lsqminnorm}$
in MATLAB. Other commonly used techniques in the literature include
the simultaneous solution of equations~\eqref{eq:vectorized_inv_eqn}
and~\eqref{eq:reduced_dynamics_normal} or~\eqref{eq:subspace_graph}.
This involves the inversion of a bordered matrix that extends $\boldsymbol{\mathcal{L}}_{i}$
and ends up being non-singular (see, e.g., Beyn \& Kleß~\cite{Beyn1998},
Kuznetsov~\cite{Kuznetsov2004}).

To summarize, we have developed an automated procedure
for computing invariant manifolds attached to fixed points of system
\eqref{eq:first_order_aut} and for choosing different styles of parametrizations
for their reduced dynamics (eqs.~\eqref{eq:normal_form},~\eqref{eq:reduced_dyn_graph})
by solving invariance equations~\eqref{eq:vectorized_inv_eqn} in
the physical coordinates and only using the eigenvectors associated
to its master master spectral subspace. The open-source MATLAB package~\cite{SSMTool2.0} automates this computational procedure. Next, we
illustrate applications of this computation procedure developed so
far.

\subsection{Applications}

\subsubsection*{Parameter-dependent center manifolds and their reduced dynamics}

We illustrate the automated procedure developed above to compute the
center manifold in the Lorenz system and its normal form style of reduced dynamics without performing any diagonalization and using only the modes associated to the center subspace. In the following example, we
compute the $\rho$-dependent center manifold and the normal form
of the reduced dynamics to analyze the local bifurcation around $\rho=1$
(see section 3.2 in Guckenheimer \& Holmes~\cite{Guckenheimer1983}).
Consider the Lorenz system 
\begin{equation}
\begin{aligned}\dot{x} & =\sigma(y-x),\\
\dot{y} & =\rho x-y-xz,\\
\dot{z} & =-\beta z+xy,
\end{aligned}
\label{eq:lorenz}
\end{equation}
where $(x,y,z)\in\mathbb{R}^{3},\sigma,\rho,\beta>0$. The basic steps
are as follows. 
\begin{enumerate}
	\item Setup: With a new variable $\mu:=\rho-1$ in the Lorenz system~\eqref{eq:lorenz},
	we obtain an extended system of the form~\eqref{eq:first_order_aut}
	with
	\begin{equation}
	\mathbf{z}=\left[\begin{array}{c}
	x\\
	y\\
	z\\
	\mu
	\end{array}\right],\quad\mathbf{A}=\left[\begin{array}{cccc}
	-\sigma & \sigma & 0 & 0\\
	1 & -1 & 0 & 0\\
	0 & 0 & -\beta & 0\\
	0 & 0 & 0 & 0
	\end{array}\right],\quad\mathbf{B}=\mathbf{I}_{4},\quad\mathbf{F}(\mathbf{z})=\left[\begin{array}{c}
	0\\
	x\mu-xz\\
	xy\\
	0
	\end{array}\right].\label{eq:lorenz_extended}
	\end{equation}
	For $\sigma=\beta=1$, the eigenvalues of $\mathbf{A}$ are given
	by $\lambda_{1}=\lambda_{2}=0$, $\lambda_{3}=-2$ , and $\lambda_{4}=1$.
	The nonlinearity $\mathbf{F}$ is quadratic and can be expressed according
	to the Kronecker expansion~\eqref{eq:expF} as
	\[
	\mathbf{F}(\mathbf{z})=\mathbf{F}_{2}\mathbf{z}^{\otimes2},
	\]
	where the term $\mathbf{z}^{\otimes2}=(\mathbf{z}\otimes\mathbf{z})=(x^{2},xy,xz,x\mu,yx,y^{2},yz,y\mu,zx,zy,z^{2},z\mu,\mu x,\mu y,\mu z,\mu^{2})^{T}$
	contains the monomials and $\mathbf{F}_{2}\in\mathbb{R}^{4\times4^{2}}$
	is a sparse matrix representation of the coefficients of the quadratic
	nonlinearity. The non-zero entries of $\mathbf{F}_{2}$ corresponding
	to the monomials $x\eta$, $xz$ and $xy$ in the definition of $\mathbf{F}$
	(see eq.~\eqref{eq:lorenz_extended}) are given as 
	\[
	\left(\mathbf{F}_{2}\right)_{24}=1,\quad\left(\mathbf{F}_{2}\right)_{23}=-1,\quad\left(\mathbf{F}_{2}\right)_{32}=1.
	\]
	\item Choose master subspace: We construct a center manifold over the center-subspace
	$E$ spanned by the eigenvectors corresponding to the two
	zero eigenvalues. We obtain the eigenvectors associated with this
	$E$ satisfying the normalization condition~\eqref{eq:normalization},
	as 
	\begin{align}
	\boldsymbol{\Lambda}_{E} & =\left[\begin{array}{cc}
	0 & 0\\
	0 & 0
	\end{array}\right],\quad\mathbf{V}_{E}=\mathbf{U}_{E}=\left[\begin{array}{cc}
	\begin{array}{c}
	\frac{1}{\sqrt{2}}\\
	\frac{1}{\sqrt{2}}\\
	0\\
	0
	\end{array} & \begin{array}{c}
	0\\
	0\\
	0\\
	1
	\end{array}\end{array}\right].\label{eq:eigs_lorenz}
	\end{align}
	\item Assemble invariance equations: At leading-order, the parametrization
	coefficients for the center manifold and its reduced dynamics can
	be simply chosen as (see eqs.~\eqref{eq:eigs_lorenz} and~\eqref{eq:W1})
	\[
	\mathbf{W}_{1}=\mathbf{V}_{E},\quad\mathbf{R}_{1}=	\boldsymbol{\Lambda}_{E}= \mathbf{0}\in\mathbb{R}^{2\times2}.
	\]
	To obtain the parametrization coefficients at order $2$, we need
	to solve the vectorized invariance equation~\eqref{eq:vectorized_inv_eqn}
	for $i=2$, i.e., 
	\begin{equation}
	\boldsymbol{\mathcal{L}}_{2}\mathrm{vec}\left(\mathbf{W}_{2}\right)=\mathrm{vec}\left(\mathbf{C}_{2}\right)-\left(\mathbf{I}_{4}\otimes\mathbf{W}_{1}\right)\mathrm{vec}\left(\mathbf{R}_{2}\right),\label{eq:inv_lorenz_2}
	\end{equation}
	where 
	\begin{align*}
	\boldsymbol{\mathcal{L}}_{2} & =\left(\boldsymbol{\mathcal{R}}_{2,2}^{\top}\otimes\mathbf{I}_{4}\right)-\left(\mathbf{I}_{4}\otimes\mathbf{A}\right)\in\mathbb{R}^{64\times64},\\
	\mathbf{C}_{2} & =\mathbf{F}_{2}\mathbf{W}_{1}^{\otimes2}.
	\end{align*}
	\item Resonance detection: We deduce the eigenvalues of $\boldsymbol{\mathcal{R}}_{2,2}^{\top}$
	from the formula~\eqref{eq:eig_R1} as 
	\[
	\lambda_{(i,j)}=\lambda_{i}+\lambda_{j}=0,\quad\forall i,j\in\{1,2\}.
	\]
	Thus, as per definition~\eqref{eq:internal_res}, we observe inner-resonances
	between the two (zero) eigenvalues of the center-subspace, i.e.,
	\[
	\lambda_{(i,j)}=\lambda_{k}=0,\quad\forall i,j,k\in\{1,2\},
	\]
	and no outer-resonances, i.e.,
	\[
	\lambda_{(i,j)}\neq\lambda_{k}=0,\quad\forall i,j\in\{1,2\},\quad k\in\{3,4\}.
	\]
	\item Choice of parametrization: Hence, the singular system~\eqref{eq:inv_lorenz_2}
	is solvable for a nontrivial choice of reduced dynamics the $\mathbf{R}_{2}$.
	Using eq.~\eqref{eq:normal_form}, we choose the normal form parametrization
	of the reduced dynamics which results in the following non-zero entry
	in the coefficient array $\mathbf{R}_{2}$:
	\[
	\left(\mathbf{R}_{2}\right)_{12}=\frac{1}{2}.
	\]
	\item Recursion: This procedure can be recursively applied to obtain higher-order
	terms on the center manifold dynamics. The reduced dynamics on the
	two-dimensional center manifold up to cubic terms is given as
	\begin{equation}
	\dot{\mathbf{p}}=\left[\begin{array}{c}
	\dot{p}_{1}\\
	\dot{p}_{2}
	\end{array}\right]=\left[\begin{array}{c}
	\frac{1}{2}p_{1}p_{2}+\frac{1}{4}p_{1}^{3}-\frac{1}{8}p_{1}p_{2}^{2}\\
	0
	\end{array}\right].\label{eq:red_lorenz}
	\end{equation}
	Here, the variable $p_{2}$ is the modal coordinate along the center
	direction associate to the parameter $\mu$. The normal form parametrization
	automatically results in trivial dynamics along this direction. Indeed,
	the near-identity transformation associated to the normal form leaves
	the coordinate $\mu$-mode unchanged, which prompts us to replace
	$p_{2}$ by $\mu$ in eq.~\eqref{eq:red_lorenz}. Hence, we obtain
	the parameter-dependent dynamics on the center manifold as
	\[
	\dot{p}=R_{\mu}(p)=\frac{1}{2}p\mu-\frac{1}{8}p\mu^{2}+\frac{1}{4}p^{3},
	\]
	which recovers the pitchfork bifurcation (see section 3.4 in Guckenheimer
	\& Holmes~\cite{Guckenheimer1983}) with respect to the parameter
	$\mu$.
\end{enumerate}
For more involved applications to center manifold computation, we
refer to the work of Carini et al.~\cite{Carini2015}, who analyze
the stability of bifurcating flows using a similar methodology for
computing parameter-dependent center manifold and normal forms.

\subsubsection*{Lyapunov subcenter manifolds and conservative backbone curves}

The Lyapunov subcenter manifolds (LSM) form centerpieces of periodic
response in conservative, unforced, mechanical systems (see Kerschen
et al.~\cite{Kerschen2006}, de la Llave \& Kogelbauer~\cite{dela2019}).
We discuss how the above methodology can be applied in such systems
to compute LSMs and directly extract conservative backbone curves, i.e.,
the functional relationship between amplitudes and frequency of the
periodic orbits on the LSM. 

We consider the following form of a conservative mechanical system
\begin{equation}
\mathbf{M}\ddot{\mathbf{x}}+\mathbf{K}\mathbf{x}+\mathbf{f}(\mathbf{x},\dot{\mathbf{x}})=\mathbf{0},\quad\mathbf{x}\in\mathbb{R}^{n},\label{eq:conservative}
\end{equation}
where $\mathbf{M},\mathbf{K}\in\mathbb{R}^{n\times n}$ are positive
definite mass and stiffness matrices and $\mathbf{f}=\mathcal{O}(\|\mathbf{x}\|^{2},\|\dot{\mathbf{x}}\|^{2},\|\mathbf{x}\|\|\dot{\mathbf{x}}\|)$
is a conservative nonlinearity. The quadratic eigenvalue problem 
\begin{equation}
\mathbf{K}\mathbf{\boldsymbol{\varphi}}_{j}=\omega_{j}^{2}\mathbf{M}\mathbf{\boldsymbol{\varphi}}_{j},\quad j=1,\dots,n\label{eq:undamped_evp}
\end{equation}
provides us the vibration modes $\mathbf{\boldsymbol{\varphi}}_{j}\in\mathbb{R}^n$
and the corresponding natural frequencies $\omega_{j}$ of system~\eqref{eq:conservative}. In the first-order form~\eqref{eq:L2} with
$\mathbf{C}=\mathbf{0},\mathbf{N}=\mathbf{M}$, the eigenvalues and
eigenvectors can be expressed using eq.~\eqref{eq:undamped_evp} as
\begin{align}
\lambda_{2j-1} & =\mathrm{i}\omega_{j},\quad\lambda_{2j}=\bar{\lambda}_{2j-1},\\
\mathbf{v}_{2j-1} & =\left[\begin{array}{c}
\mathbf{\boldsymbol{\varphi}}_{j}\\
\lambda_{2j-1}\mathbf{\boldsymbol{\varphi}}_{j}
\end{array}\right],\quad\mathbf{v}_{2j}=\bar{\mathbf{v}}_{2j-1},\quad j=1,\dots,n.
\end{align}

Any distinct pair of eigenvalues $\pm\mathrm{i}\omega_{m}$, where
$m=1,\dots,n$, spans a two-dimensional linear modal subspace. An LSM
is a unique, analytic, two-dimensional, nonlinear extension to such
a linear modal subspace and is guaranteed to exist if the master eigenfrequency
$\omega_{m}$ is not in resonance with any of the remaining eigenfrequencies
of the system (Kelley~\cite{Kelley1969}), i.e, under the non-resonance
conditions 
\begin{equation}
\omega_{m}\neq k\omega_{i},\quad\forall k\in\mathbb{Z},\quad i=\{1,\dots,n\}\backslash m.\label{eq:nonres_LSM}
\end{equation}
The LSM over the $m^{\mathrm{th}}$ mode can be computed by solving
the invariance equation~\eqref{eq:inv_eqn} in the physical coordinates
using only the master mode $\mathbf{\boldsymbol{\varphi}}_{m}$ that
spans the two-dimensional modal subspace $E=\mathrm{span}\left(\mathbf{v}_{2m-1},\mathbf{v}_{2m}\right)$.
The leading order coefficients in the parametrizations for the LSM
and its reduced dynamics are given by eq.~\eqref{eq:W1} as 
\begin{align}
\mathbf{W}_{1} & =[\mathbf{v}_{2m-1},\mathbf{v}_{2m}],\\
\mathbf{R}_{1} & =\mathbf{\Lambda}_{E}=\mathrm{diag}(\mathrm{i}\omega_{m},-\mathrm{i}\omega_{m}).
\end{align}
Note that for any $\ell\in\mathbb{N}$, the master subspace $E$
satisfies the inner resonance relations
\begin{align}
\lambda_{2m-1} & =\left(\ell+1\right)\lambda_{2m-1}+\ell\lambda_{2m},\\
\lambda_{2m} & =\ell\lambda_{2m-1}+\left(\ell+1\right)\lambda_{2m},
\end{align}
which result in the following reduced dynamics in the normal-form
parametrization style (see eq.~\eqref{eq:reduced_dynamics_normal})
\begin{equation}
\dot{\mathbf{p}}=\mathbf{R}(\mathbf{p})=\left[\begin{array}{c}
i\omega_{m}p\\
-i\omega_{m}\bar{p}
\end{array}\right]+\sum_{\ell\in\mathbb{N}}\left[\begin{array}{c}
\gamma_{\ell}p^{\ell+1}\bar{p}^{\ell}\\
\bar{\gamma}_{\ell}p^{\ell}\bar{p}^{\ell+1}
\end{array}\right],\label{eq:red_2D_cons}
\end{equation}
where the $\gamma_{\ell}$ are the nontrivial coefficients associated to the monomials in the normal form~\eqref{eq:reduced_dynamics_normal}. Then, the following statement directly provides us the conservative
backbone associated to the $m^{\mathrm{th}}$ mode. 
\begin{lem}
	\label{lemma:cons_bb}Under the non-resonance condition~\eqref{eq:nonres_LSM},
	the backbone curve, i.e., the functional relationship between the
	polar response amplitude $\rho$ and the oscillation frequency $\omega$
	of the periodic orbits of the LSM associated to the mode $\mathbf{\boldsymbol{\varphi}}_{m}$
	of the conservative mechanical system~\eqref{eq:conservative} is
	given as 
	\begin{equation}
	\omega(\rho)=\omega_{m}+\sum_{\ell\in\mathbb{N}}\mathrm{Im}(\gamma_{\ell})\rho^{2\ell}.\label{eq:cons_backbone}
	\end{equation}
\end{lem}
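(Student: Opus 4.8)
The plan is to extract the backbone curve directly from the normal-form reduced dynamics~\eqref{eq:red_2D_cons} by passing to polar coordinates. First I would write $p = \rho e^{\mathrm{i}\theta}$ with $\rho \geq 0$ and $\theta \in \mathbb{T}$, so that $\bar p = \rho e^{-\mathrm{i}\theta}$, and substitute into the first component $\dot p = \mathrm{i}\omega_m p + \sum_{\ell \in \mathbb{N}} \gamma_\ell p^{\ell+1}\bar p^{\ell}$. Since $p^{\ell+1}\bar p^{\ell} = \rho^{2\ell+1} e^{\mathrm{i}\theta}$, every term on the right-hand side is a scalar multiple of $e^{\mathrm{i}\theta}$, which is the structural consequence of the normal-form style retaining only the resonant monomials $p^{\ell+1}\bar p^{\ell}$. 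Dividing through by $e^{\mathrm{i}\theta}$ and using $\dot p = (\dot\rho + \mathrm{i}\rho\dot\theta)e^{\mathrm{i}\theta}$, I obtain the decoupled real system
\begin{align}
\dot{\rho} &= \sum_{\ell \in \mathbb{N}} \mathrm{Re}(\gamma_\ell)\,\rho^{2\ell+1},\label{eq:pf_rho}\\
\dot{\theta} &= \omega_m + \sum_{\ell \in \mathbb{N}} \mathrm{Im}(\gamma_\ell)\,\rho^{2\ell}.\label{eq:pf_theta}
\end{align}

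Next I would invoke the conservative structure. Because the system~\eqref{eq:conservative} is conservative (positive definite $\mathbf{M},\mathbf{K}$ and a conservative nonlinearity $\mathbf{f}$), and because the LSM is guaranteed to exist and to be filled with a one-parameter family of periodic orbits under the non-resonance condition~\eqref{eq:nonres_LSM} by Kelley's theorem, the reduced dynamics on the two-dimensional LSM must itself be conservative: it cannot have the amplitude-dependent decay or growth that a nonzero $\dot\rho$ would produce. Hence every trajectory of~\eqref{eq:pf_rho}--\eqref{eq:pf_theta} must have $\rho$ constant in time, which forces $\mathrm{Re}(\gamma_\ell) = 0$ for all $\ell$ (the power series $\sum_\ell \mathrm{Re}(\gamma_\ell)\rho^{2\ell+1}$ vanishes on an interval of $\rho$ values, so all coefficients vanish). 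Consequently $\dot\rho \equiv 0$ and the orbit at amplitude $\rho$ is exactly periodic with constant angular velocity $\dot\theta = \omega_m + \sum_{\ell} \mathrm{Im}(\gamma_\ell)\rho^{2\ell}$, which is precisely the claimed oscillation frequency $\omega(\rho)$ in~\eqref{eq:cons_backbone}.

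The one point that requires a little care — and what I expect to be the main obstacle — is justifying rigorously that $\mathrm{Re}(\gamma_\ell) = 0$, i.e. that the normal-form reduced dynamics inherits the conservative (Hamiltonian/reversible) character of the full system rather than merely asserting it. The clean way to see this is via the time-reversal symmetry: system~\eqref{eq:conservative}, written in first order, is reversible under $(\mathbf{x},\dot{\mathbf{x}}) \mapsto (\mathbf{x}, -\dot{\mathbf{x}})$ together with $t \mapsto -t$ when $\mathbf{f}$ depends only on $\mathbf{x}$ (or more generally is the gradient of a potential), and this reversibility is respected by the parametrization method when the master eigenvectors are chosen in the complex-conjugate pair structure $\mathbf{v}_{2m},\mathbf{v}_{2m-1} = \bar{\mathbf{v}}_{2m-1}$ as above; it induces the symmetry $\mathbf{p} \mapsto \bar{\mathbf{p}}$, $t \mapsto -t$ on the reduced dynamics, which is exactly the second component of~\eqref{eq:red_2D_cons} being the conjugate of the first. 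Imposing this symmetry on the polar form~\eqref{eq:pf_rho}--\eqref{eq:pf_theta} forces $\dot\rho$ to be odd under $t \mapsto -t$ while simultaneously being a function of $\rho$ alone, hence $\dot\rho \equiv 0$. Alternatively, one can argue more directly from the known result (de la Llave \& Kogelbauer~\cite{dela2019}) that the LSM is foliated by periodic orbits, each at a distinct amplitude, so that $\rho$ is necessarily a conserved quantity of the reduced flow. Either route closes the argument; the rest is the elementary polar-coordinate computation sketched above.
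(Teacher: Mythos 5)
Your proposal is correct and follows essentially the same route as the paper: pass to polar coordinates to obtain the decoupled system $\dot\rho=\sum_\ell\mathrm{Re}(\gamma_\ell)\rho^{2\ell+1}$, $\dot\theta=\omega_m+\sum_\ell\mathrm{Im}(\gamma_\ell)\rho^{2\ell}$, then use the fact that the LSM is foliated by periodic orbits (so a nontrivial $\dot\rho$ would force isolated periodic orbits) to conclude $\mathrm{Re}(\gamma_\ell)=0$ and read off $\omega(\rho)=\dot\theta$. Your supplementary time-reversal-symmetry argument is an additional justification the paper does not give, but the core argument coincides.
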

\begin{proof}
	See Appendix~\ref{sec:Proof-of-Lemma}.
\end{proof}

\section{Invariant manifold{s} and their reduced dynamics under non-autonomous forcing}

\label{sec:computations_non-autonomous}

In the non-autonomous setting of system~\eqref{eq:first_order}, i.e.,
for $\epsilon>0$, the fixed point is typically replaced by an invariant
torus created by the quasiperiodic term $\epsilon\mathbf{F}^{ext}(\mathbf{z},\boldsymbol{\Omega}t)$,
or by a periodic orbit when $\Omega$ is one-dimensional. Indeed,
for small enough $\epsilon>0$, the existence of a small-amplitude
invariant torus $\gamma_{\epsilon}$ in the extended phase space of
system~\eqref{eq:first_order} is guaranteed if the origin is a hyperbolic
fixed point in its $\epsilon=0$ limit (see Guckenheimer \& Holmes~\cite{Guckenheimer1983}). In this setting, we have an invariant manifold
$\mathcal{W}(E,\gamma_{\epsilon})$, which can be viewed
as a fiber bundle that perturbs smoothly from the spectral subbundle
$\gamma_{\epsilon}\times E$ under the addition
of the nonlinear terms, as long as appropriate resonance conditions
hold (see Theorem~4 of Haller \& Ponsioen~\cite{Haller2016}, Theorem~4.1 of Haro \& de la Llave~\cite{Haro2006,Haro2006a,Haro2007}). 

\begin{figure}[H]
	\centering \includegraphics[width=0.8\linewidth]{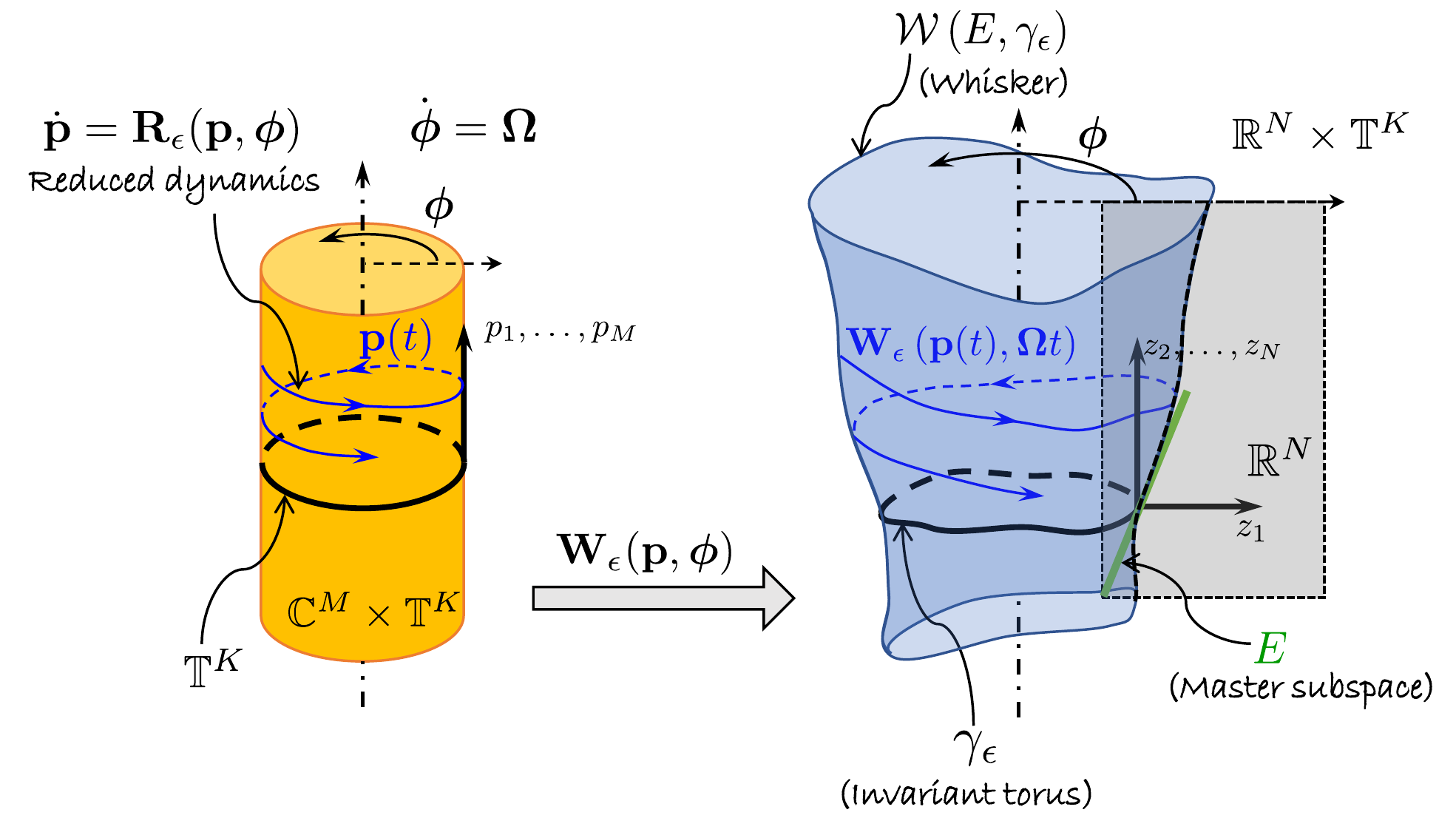}
	\caption{\label{fig:param-whisker} \small Applying the parametrization method to
		system~\eqref{eq:first_order} with $\epsilon>0$, we obtain a parametrization
		$\mathbf{W}_{\epsilon}:\mathbb{C}^{M}\times\mathbb{T}^{K}\to\mathbb{R}^{N}$
		of an $(M+K)-$dimensional perturbed manifold (whisker) attached
		to a small amplitude whiskered torus $\gamma_{\epsilon}$ parameterized
		by the angular variables $\boldsymbol{\phi}\in\mathbb{T}^{K}$ with
		$\dot{\boldsymbol{\phi}}=\mathbf{\Omega}$. This whisker is perturbed
		from the master spectral subspace $E$ of the linear system
		\eqref{eq:first_order_linear} under the addition of nonlinear and
		$\mathcal{O}(\epsilon)$ terms of system~\eqref{eq:first_order} (cf.
		Figure~\ref{fig:param}). Furthermore, we have the freedom to choose
		a parametrization $\mathbf{R}_{\epsilon}:\mathbb{C}^{M}\times\mathbb{T}^{K}\to\mathbb{C}^{M}$
		of the reduced dynamics on the manifold such that the function $\mathbf{W}_{\epsilon}$
		also maps the reduced system trajectories $\mathbf{p}(t)$ onto the
		full system trajectories on the invariant manifold, i.e., $\mathbf{z}(t)=\mathbf{W}_{\epsilon}\left(\mathbf{p}(t),\mathcal{\mathbf{\Omega}}t\right)$.}
\end{figure}

In contrast to the invariant manifold $\mathcal{W}(E)$
from the autonomous setting, the perturbed manifold or \emph{whisker},
$\mathcal{W}(E,\gamma_{\epsilon})$, is attached to $\gamma_{\epsilon}$
instead of the origin and $\dim\left(\mathcal{W}(E,\gamma_{\epsilon})\right)=\dim\left(E\right)+\dim\left(\gamma_{\epsilon}\right)=M+K$,
as shown in Figure~\ref{fig:param-whisker}. From a computational
viewpoint, now the manifold $\mathcal{W}(E,\gamma_{\epsilon})$
and its reduced dynamics need to be additionally parameterized by
the angular variables $\boldsymbol{\phi}\in\mathbb{T}^{K}$ that correspond
to the multi-frequency vector $\boldsymbol{\Omega}\in\mathbb{R}^{K}$
as 
\begin{align}
\mathbf{W}_{\epsilon}(\mathbf{p},\boldsymbol{\phi}) & =\mathbf{W}(\mathbf{p})+\epsilon\mathbf{X}(\mathbf{p},\boldsymbol{\phi})+\mathcal{O}\left(\epsilon^{2}\right),\label{eq:Weps}\\
\mathbf{R}_{\epsilon}(\mathbf{p},\boldsymbol{\phi}) & =\mathbf{R}(\mathbf{p})+\epsilon\mathbf{S}(\mathbf{p},\boldsymbol{\phi})+\mathcal{O}\left(\epsilon^{2}\right).\label{eq:Reps}
\end{align}
Here, $\mathbf{W}_{\epsilon}:\mathbb{C}^{M}\times\mathbb{T}^{K}\to\mathbb{R}^{N}$,
$\mathbf{R}_{\epsilon}:\mathbb{C}^{M}\times\mathbb{T}^{K}\to\mathbb{C}^{M}$
are parametrizations for the invariant manifold $\mathcal{W}(E,\gamma_{\epsilon})$
and its reduced dynamics; $\mathbf{W}(\mathbf{p}),\mathbf{R}(\mathbf{p})$
recover the manifold $\mathcal{W}(E)$ and its reduced dynamics
in the unforced limit of $\epsilon=0$; and $\mathbf{X}(\mathbf{p},\boldsymbol{\phi}),\mathbf{S}(\mathbf{p},\boldsymbol{\phi})$
denote the $\mathcal{O}(\epsilon)$ terms, which depend on the angular
variables $\boldsymbol{\phi}$ due to the presence of forcing $\mathbf{F}^{ext}(\mathbf{z},\boldsymbol{\Omega}t)$.
Invoking the invariance of $\mathcal{W}(E,\gamma_{\epsilon})$,
we substitute the expansions~\eqref{eq:Weps}-\eqref{eq:Reps} into
the governing equations~\eqref{eq:first_order} and collect the $\mathcal{O}(\epsilon)$
terms to obtain (cf. Ponsioen et al.~\cite{Ponsioen2020})
\begin{equation}
\mathbf{B}\left[D\mathbf{W}(\mathbf{p})\mathbf{S}(\mathbf{p},\boldsymbol{\phi})+\partial_{\mathbf{p}}\mathbf{X}(\mathbf{p},\boldsymbol{\phi})\mathbf{R}(\mathbf{p})+\partial_{\mathbf{\boldsymbol{\phi}}}\mathbf{X}(\mathbf{p},\boldsymbol{\phi})\cdot\boldsymbol{\Omega}\right]=\left[\mathbf{A}+D\mathbf{F}(\mathbf{W}(\mathbf{p}))\right]\mathbf{X}(\mathbf{p},\boldsymbol{\phi})+\mathbf{F}^{ext}(\boldsymbol{\phi},\mathbf{W}(\mathbf{p})).\label{eq:inv_non-autonomous}
\end{equation}
The terms $\mathbf{X}(\mathbf{p},\boldsymbol{\phi}),\mathbf{S}(\mathbf{p},\boldsymbol{\phi})$
can be further expanded into Taylor series in $\mathbf{p}$ with coefficients
that depend on the angular variables $\boldsymbol{\phi}$ as 
\begin{equation}
\mathbf{X}(\mathbf{p},\boldsymbol{\phi})=\mathbf{X}_{0}(\boldsymbol{\phi})+\sum_{j=1}^{\Gamma_{S}}\mathbf{X}_{j}(\boldsymbol{\phi})\mathbf{p}^{\otimes j},\label{eq:S1_i}
\end{equation}

\begin{equation}
\mathbf{S}(\mathbf{p},\boldsymbol{\phi})=\mathbf{S}_{0}(\boldsymbol{\phi})+\sum_{j=1}^{\Gamma_{R}}\mathbf{S}_{j}(\boldsymbol{\phi})\mathbf{p}^{\otimes j}.\label{eq:R1_i}
\end{equation}
Collecting the $\mathcal{O}(1)$ terms in $\mathbf{p}$ from the invariance
equation~\eqref{eq:inv_non-autonomous}, we obtain
\begin{equation}
\mathbf{B}\left[\mathbf{W}_{1}\mathbf{S}_{0}(\boldsymbol{\phi})+\partial_{\mathbf{\boldsymbol{\phi}}}\mathbf{X}_{0}(\boldsymbol{\phi})\cdot\boldsymbol{\Omega}\right]=\mathbf{A}\mathbf{X}_{0}(\boldsymbol{\phi})+\mathbf{F}^{ext}(\boldsymbol{\phi}),\label{eq:leading_order_forced}
\end{equation}
which is a system of linear differential equations for the unknown,
time-dependent coefficients $\mathbf{X}_{0}(\boldsymbol{\phi})$.
Similarly to the autonomous setting, the choice of reduced dynamics
$\mathbf{S}_{0}(\boldsymbol{\phi})$ again provides us the freedom
to remove (near-) resonant terms via a normal-form style of parametrization. 

In this work, we restrict our attention to the computation of the
leading-order non-autonomous contributions, i.e., $\mathbf{X}_{0}(\boldsymbol{\phi}),\mathbf{S}_{0}(\boldsymbol{\phi})$.
To this end, we perform a Fourier expansion of the different terms
in eq.~\eqref{eq:leading_order_forced} as

\begin{equation}
\mathbf{F}^{ext}(\mathbf{z},\boldsymbol{\phi})=\sum_{\boldsymbol{\kappa}\in\mathbb{Z}^{K}}\mathbf{F}_{0,\boldsymbol{\kappa}}e^{\mathrm{i}\left\langle \boldsymbol{\kappa},\boldsymbol{\phi}\right\rangle }+\mathcal{O}(|\mathbf{z}|),\label{eq:Fourier_ext_force}
\end{equation}
\begin{equation}
\mathbf{X}_{0}(\boldsymbol{\phi})=\sum_{\boldsymbol{\kappa}\in\mathbb{Z}^{K}}\mathbf{x}_{0,\boldsymbol{\kappa}}e^{\mathrm{i}\left\langle \boldsymbol{\kappa},\boldsymbol{\phi}\right\rangle },\label{eq:Fourier_X0}
\end{equation}
\begin{equation}
\mathbf{S}_{0}(\boldsymbol{\phi})=\sum_{\boldsymbol{\kappa}\in\mathbb{Z}^{K}}\mathbf{s}_{0,\boldsymbol{\kappa}}e^{\mathrm{i}\left\langle \boldsymbol{\kappa},\boldsymbol{\phi}\right\rangle },\label{eq:Fourier_S0}
\end{equation}
where $\mathbf{F}_{0,\boldsymbol{\kappa}}\in\mathbb{C}^{N}$ are the
known Fourier coefficients for the forcing $\mathbf{F}^{ext}(\mathbf{z},\boldsymbol{\Omega}t)$
and $\mathbf{x}_{0,\boldsymbol{\kappa}}\in\mathbb{C}^{N}$; and $\mathbf{s}_{0,\boldsymbol{\kappa}}\in\mathbb{C}^{M}$
are the unknown Fourier coefficients for the leading-order, non-autonomous
components of $\mathbf{X},\mathbf{S}$. Upon substituting eqs.~\eqref{eq:Fourier_ext_force}-\eqref{eq:Fourier_S0}
into eq.~\eqref{eq:leading_order_forced} and comparing Fourier coefficients
at order $\boldsymbol{\kappa}$, we obtain linear equations in terms
of the variables $\mathbf{x}_{0,\boldsymbol{\kappa}},\mathbf{s}_{0,\boldsymbol{\kappa}}$
as

\begin{align}
\boldsymbol{\mathcal{L}}_{0,\boldsymbol{\kappa}}\mathbf{x}_{0,\boldsymbol{\kappa}} & =\mathbf{h}_{0,\boldsymbol{\kappa}}(\mathbf{s}_{0,\boldsymbol{\kappa}}),\quad\boldsymbol{\kappa}\in\mathbb{Z}^{K},\label{eq:vec_inv_non_aut}
\end{align}
where
\begin{align*}
\boldsymbol{\mathcal{L}}_{0,\boldsymbol{\kappa}} & :=\mathrm{i}\left\langle \boldsymbol{\kappa},\boldsymbol{\Omega}\right\rangle \mathbf{B}-\mathbf{A}\in\mathbb{C}^{N\times N},\\
\mathbf{h}_{0,\boldsymbol{\kappa}}(\mathbf{s}_{0,\boldsymbol{\kappa}}) & :=\mathbf{F}_{0,\boldsymbol{\kappa}}-\mathbf{B}\mathbf{W}_{1}\mathbf{s}_{0,\boldsymbol{\kappa}}\in\mathbb{C}^{N}.
\end{align*}
The coefficient matrix $\boldsymbol{\mathcal{L}}_{0,\boldsymbol{\kappa}}$
in~\eqref{eq:vec_inv_non_aut} becomes (nearly) singular when the forcing
is (nearly) resonant with any of eigenvalues of the system ($\mathbf{A},\mathbf{B})$, i.e., when
\begin{equation}
\mathrm{i}\left\langle \boldsymbol{\kappa},\boldsymbol{\Omega}\right\rangle \approx\lambda_{j},\quad j\in\{1,\dots,N\}.\label{eq:ext_res}
\end{equation}
Similarly to the autonomous setting, such nearly resonant forcing
leads to small divisors while we are solving system~\eqref{eq:vec_inv_non_aut}
(cf. Remark~\ref{rem:near_resonances}) and hence it is desirable
to include such terms in the reduced dynamics as per the normal form
style of parametrization. This results in (cf. eq.~\eqref{eq:normal_form})
\begin{equation}
\textrm{Normal-form style:}\quad\mathbf{s}_{0,\boldsymbol{\kappa}}=\mathbf{e}_{j}\left(\mathbf{u}_{j}^{\star}\mathbf{F}_{0,\boldsymbol{\kappa}}\right)\quad\forall\boldsymbol{\kappa}\in\mathbb{Z}^{K},\quad j\in\{1,\dots,M\}:\mathrm{i}\left\langle \boldsymbol{\kappa},\boldsymbol{\Omega}\right\rangle \approx\lambda_{j}.\label{eq:non_auto_normal}
\end{equation}
Alternatively, using a graph style parametrization, we obtain (cf.
eq.~\eqref{eq:reduced_dyn_graph})
\begin{equation}
\textrm{Graph style:}\quad\mathbf{s}_{0,\boldsymbol{\kappa}}=\mathbf{e}_{j}\left(\mathbf{u}_{j}^{\star}\mathbf{F}_{0,\boldsymbol{\kappa}}\right)\quad\forall\boldsymbol{\kappa}\in\mathbb{Z}^{K},\quad j\in\{1,\dots,M\}.\label{eq:non_auto_graph}
\end{equation}
Note, however, that these choices are only available for the modes
in the master subspace that are resonant with the external frequency
$\boldsymbol{\Omega}$, i.e., $j=1,\dots,M$ in the approximation~\eqref{eq:ext_res}.
If the near-resonance relation~\eqref{eq:ext_res} holds for any eigenvalues
outside the master subspace, i.e., $j=M+1,\dots,N$, then the domain
of convergence of our Taylor approximations is reduced. Depending
on the application, a workaround for this may be to include any nearly
resonant modes in the master subspace from the start. 

Finally, upon determining the reduced dynamics coefficients $\mathbf{s}_{0,\boldsymbol{\kappa}}$
specific to the chosen parametrization style (see eqs.~\eqref{eq:non_auto_normal},
\eqref{eq:non_auto_graph}), we compute a norm-minimizing solution
to~\eqref{eq:vec_inv_non_aut} given by
\begin{equation}
\mathbf{x}_{0,\boldsymbol{\kappa}}=\min_{\mathbf{y}\in\mathbb{C}^{N},\boldsymbol{\mathcal{L}}_{0,\boldsymbol{\kappa}}\mathbf{y}=\mathbf{h}_{0,\boldsymbol{\kappa}}(\mathbf{s}_{0,\boldsymbol{\kappa}})}\|\mathbf{y}\|^{2},\label{eq:lsqminnorm-1}
\end{equation}
as we did in the autonomous setting (cf. eq.~\eqref{eq:lsqminnorm}). 

\begin{rem}
	
	\label{rem:parallelization2} (Parallelization) For each $\boldsymbol{\kappa}\in\mathbb{Z}^{K}$, the reduced dynamics coefficients $\mathbf{s}_{0,\boldsymbol{\kappa}}$ and the manifold coefficients $\mathbf{x}_{0,\boldsymbol{\kappa}}$ can be determined independently of each other. Hence, parallel computation of these coefficients will result in high speed up due to minimal cross communication across the processes (see also Remark~\ref{rem:parallelization}). 
\end{rem}
\subsection*{Spectral submanifolds and forced response curves }

In structural dynamics, predicting the steady-state response of mechanical
systems in response to periodic forcing is often the end goal of the
analysis. This response is commonly expressed in terms of the FRC depicting
the response amplitude as a function of the external forcing frequency. FRCs are computationally expensive to obtain for large structural systems of engineering significance
(see, e.g., Ponsioen et al.~\cite{Ponsioen2020}, Jain et al.~\cite{Jain2019}).
The recent theory of spectral submanifolds~\cite{Haller2016} (SSM),
however, has enabled the fast extraction of such FRCs
via exact reduced-order models. The analytic results of Breunung \&
Haller~\cite{Breunung2018}, Ponsioen et al.~\cite{Ponsioen2019}
make it possible to obtain FRCs from the normal
form of the reduced dynamics on two-dimensional SSMs without any numerical
simulation. These approaches, however, develop SSM computations in
diagonal coordinates assuming semi-simplicity of the matrix $\mathbf{B}^{-1}\mathbf{A}$.
This has limited applicability for high-dimensional finite element-based
problems, as we have discussed in Section~\ref{subsec:Pitfalls-of-transformation}.
Here, we revisit their results in our context.

We consider the mechanical system~\eqref{eq:second-order} under periodic,
position-independent forcing as 
\begin{equation}
\mathbf{M}\ddot{\mathbf{x}}+\mathbf{C}\dot{\mathbf{x}}+\mathbf{K}\mathbf{x}+\mathbf{f}(\mathbf{x},\dot{\mathbf{x}})=\epsilon\mathbf{f}^{ext}(\Omega t),\label{eq:second-order-simple}
\end{equation}
where $\Omega\in\mathbb{R}_{+}$ is the external forcing frequency
and the periodic forcing $\mathbf{F}^{ext}(\Omega t)$ can be expressed
in a Fourier expansion as 
\begin{equation}
\mathbf{F}^{ext}(\Omega t)=\sum_{\kappa\in\mathbb{Z}}\mathbf{F}_{\kappa}^{ext}e^{\mathrm{i}\kappa\Omega t}.\label{eq:Fourier_periodic}
\end{equation}
We assume that the system~\eqref{eq:second-order-simple} represents
a lightly damped structure. This implies that $\Omega$ satisfies
the following near-resonance relationship\footnote{Note that exact resonance is not possible for damped eigenvalues, which exhibit non-zero (strictly negative) real parts.}  with a two-dimensional spectral
subspace associated with the eigenvalues $\lambda,\bar{\lambda}$:
\begin{equation}
\lambda-\mathrm{i}\eta\Omega\approx0,\quad\bar{\lambda}+\mathrm{i}\eta\Omega\approx0,\label{eq:near_ext_res}
\end{equation}
for some $\eta\in\mathbb{N}$. The left and right eigenvectors associated to the eigenvalues  $\{\lambda,\bar{\lambda}\}$ are $\{\mathbf{u},\bar{\mathbf{u}}\}$ and $\{\mathbf{v},\bar{\mathbf{v}}\}$. Furthermore, under light damping (i.e.,
$\frac{|\mathrm{Re}(\lambda)|}{|\mathrm{Im}(\lambda)|}\ll1$), the
near-resonance relationships
\begin{align}
\lambda & \approx\left(\ell+1\right)\lambda+\ell\bar{\lambda},\quad\bar{\lambda}\approx\ell\lambda+\left(\ell+1\right)\bar{\lambda},\label{eq:near_int_res}
\end{align}
will hold for any finite $\ell\in\mathbb{N}$ (see Szalai et al.~\cite{Szalai2017}).
As per eqs.~\eqref{eq:normal_form} and~\eqref{eq:non_auto_normal},
the near-resonances~\eqref{eq:near_ext_res}-\eqref{eq:near_int_res}
lead to the following normal form for the reduced dynamics (cf. Breunung
\& Haller~\cite{Breunung2018}):
\begin{equation}
\mathbf{R}_{\epsilon}(\mathbf{p},\Omega t)=\left[\begin{array}{c}
\lambda p\\
\bar{\lambda}\bar{p}
\end{array}\right]+\sum_{\ell\in\mathbb{N}}\left[\begin{array}{c}
\gamma_{\ell}p^{\ell+1}\bar{p}^{\ell}\\
\bar{\gamma}_{j}p^{\ell}\bar{p}^{\ell+1}
\end{array}\right]+\epsilon\left[\begin{array}{c}
\mathbf{u}^{\star}\mathbf{F}_{\eta}^{ext}e^{\mathrm{i}\eta\Omega t}\\
\bar{\mathbf{u}}^{\star}\bar{\mathbf{F}}_{\eta}^{ext}e^{-\mathrm{i}\eta\Omega t}
\end{array}\right]+\mathcal{O}(\epsilon|p|),\label{eq:normal_2D}
\end{equation}
{where the coefficients $\gamma_{\ell}$ are determined automatically from the normal-form style parametrization~\eqref{eq:normal_form} of the reduced dynamics on the two-dimensional SSM.}

Theorem 3.8 of Breunung \& Haller~\cite{Breunung2018} provides explicit
expressions for extracting FRCs from the reduced
dynamics on two-dimensional SSMs near a resonance with the forcing
frequency. Their expressions are derived under the assumption of proportional
damping; mono-harmonic, cosinusoidal and synchronous forcing on the
structure. The following statement generalizes their expressions to
system~\eqref{eq:second-order-simple} with periodic forcing~\eqref{eq:Fourier_periodic}
and provides us a tool to extract forced-response curves near
resonance from two-dimensional SSMs in physical coordinates. 
\begin{lem}
	\label{thm:frc}Under the near-resonance relationships~\eqref{eq:near_ext_res}
	and~\eqref{eq:near_int_res}\emph{:}
	
	(i) Reduced-order model on SSMs: The reduced dynamics~\eqref{eq:normal_2D} in polar coordinates
	$(\rho,\theta)$ is given by 
	\begin{align}
	\left[\begin{array}{c}
	\dot{\rho}\\
	\rho\dot{\psi}
	\end{array}\right] & =\mathbf{r}(\rho,\psi,\Omega):=\left[\begin{array}{c}
	a(\rho)\\
	b(\rho,\Omega)
	\end{array}\right]+\left[\begin{array}{cc}
	\cos\psi & \sin\psi\\
	-\sin\psi & \cos\psi
	\end{array}\right]\left[\begin{array}{c}
	\mathrm{Re}\left(f\right)\\
	\mathrm{Im}\left(f\right)
	\end{array}\right],\label{eq:reduced_dyn_polar}\\
	\dot{\phi} & =\Omega,
	\end{align}
	where 
	\begin{align*}
	a(\rho) & =\mathrm{Re}\left(\rho\lambda+\sum_{\ell\in\mathbb{N}}\gamma_{\ell}\rho^{2\ell+1}\right),\\
	b(\rho,\Omega) & =\mathrm{Im}\left(\rho\lambda+\sum_{\ell\in\mathbb{N}}\gamma_{\ell}\rho^{2\ell+1}\right)-\eta\rho\Omega,\\
	f & =\epsilon\mathbf{u}^{\star}\mathbf{F}_{\eta}^{ext},\\
	\psi & =\theta-\eta\phi.
	\end{align*}
	(ii) FRC: The fixed points of the system~\eqref{eq:reduced_dyn_polar}
	correspond to periodic orbits with frequency $\eta\Omega$ and are
	given by the zero level set of the scalar function 
	\begin{align}
	\mathcal{F}(\rho,\Omega) & :=\left[a(\rho)\right]^{2}+\left[b(\rho,\Omega)\right]^{2}-|f|^{2}.\label{eq:zero_level_F}
	\end{align}
	(iii) Phase shift: The constant phase shift $\psi$ between the external
	forcing $\mathbf{f}^{ext}(\Omega t)$ and a $\rho$-amplitude periodic
	response, obtained as a zero of eq.~\eqref{eq:zero_level_F}, is given
	by 
	\begin{equation}
	\psi=\arctan\left(\frac{b(\rho,\Omega)\mathrm{Re}(f)-a(\rho)\mathrm{Im}(f)}{-a(\rho)\mathrm{Re}(f)-b(\rho,\Omega)\mathrm{Im}(f)}\right).
	\end{equation}
	(iv) Stability: The stability of the periodic response is determined
	by the eigenvalues of the Jacobian 
	\begin{equation}
	J(\rho)=\left[\begin{array}{cc}
	\partial_{\rho}a & -b(\rho,\Omega)\\
	\frac{\partial_{\rho}b(\rho,\Omega)}{\rho} & \frac{a(\rho)}{\rho}
	\end{array}\right].\label{eq:Jacobian}
	\end{equation}
\end{lem}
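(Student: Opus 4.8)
The plan is to start from the normal-form reduced dynamics~\eqref{eq:normal_2D} on the two-dimensional SSM and rewrite it first in polar coordinates and then in a frame co-rotating with the forcing; all four assertions then reduce to elementary algebra on the resulting planar, time-autonomous system. Since $\mathbf{W}_{\epsilon}$ is real-valued and $\mathbf{R}_{\epsilon}$ is conjugate-symmetric, the second line of~\eqref{eq:normal_2D} is the complex conjugate of the first, so it suffices to follow the scalar equation $\dot p=\lambda p+\sum_{\ell\in\mathbb{N}}\gamma_{\ell}\,p^{\ell+1}\bar p^{\ell}+\epsilon\,(\mathbf{u}^{\star}\mathbf{F}_{\eta}^{ext})\,e^{\mathrm{i}\eta\Omega t}$. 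Only the $\eta$-th forcing harmonic appears here because, under the near-resonance~\eqref{eq:near_ext_res}, the normal-form prescription~\eqref{eq:non_auto_normal} keeps only that harmonic in the reduced dynamics and absorbs the remaining non-resonant harmonics into the manifold parametrization $\mathbf{X}_{0}(\boldsymbol{\phi})$; the $\mathcal{O}(\epsilon|p|)$ remainder in~\eqref{eq:normal_2D} is dropped at this order.

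\emph{Part (i).} Substituting $p=\rho e^{\mathrm{i}\theta}$, using $p^{\ell+1}\bar p^{\ell}=\rho^{2\ell+1}e^{\mathrm{i}\theta}$, dividing the scalar equation through by $e^{\mathrm{i}\theta}$, and introducing the co-rotating phase $\psi=\theta-\eta\phi$ with $\dot\phi=\Omega$ (so that $\mathrm{i}\eta\Omega t-\mathrm{i}\theta=-\mathrm{i}\psi$), I obtain $\dot\rho+\mathrm{i}\rho\dot\theta=\rho\lambda+\sum_{\ell}\gamma_{\ell}\rho^{2\ell+1}+f e^{-\mathrm{i}\psi}$ with $f=\epsilon\mathbf{u}^{\star}\mathbf{F}_{\eta}^{ext}$. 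Separating real and imaginary parts and using $\rho\dot\psi=\rho\dot\theta-\eta\rho\Omega$ gives precisely~\eqref{eq:reduced_dyn_polar} with $a(\rho)$, $b(\rho,\Omega)$, $f$ and $\psi$ as defined in the statement.

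\emph{Part (ii).} A point with $\rho>0$ is an equilibrium of~\eqref{eq:reduced_dyn_polar} exactly when $\dot\rho=\rho\dot\psi=0$; there $p(t)=\rho e^{\mathrm{i}\psi}e^{\mathrm{i}\eta\Omega t}$ is a periodic solution of the reduced flow of frequency $\eta\Omega$, hence, via $\mathbf{z}(t)=\mathbf{W}_{\epsilon}(\mathbf{p}(t),\Omega t)$ and the $2\pi/\Omega$-periodicity of $\mathbf{W}_{\epsilon}$ in $\boldsymbol{\phi}$, a periodic response of~\eqref{eq:second-order-simple} with the same frequency. Writing the two equilibrium conditions as the statement that the rotation matrix in~\eqref{eq:reduced_dyn_polar} sends $(\mathrm{Re}\,f,\mathrm{Im}\,f)^{\top}$ to $-(a(\rho),b(\rho,\Omega))^{\top}$ and taking Euclidean norms of both sides (the matrix being orthogonal) yields $a(\rho)^{2}+b(\rho,\Omega)^{2}=|f|^{2}$, i.e. $\mathcal{F}(\rho,\Omega)=0$ with $\mathcal{F}$ as in~\eqref{eq:zero_level_F}.

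\emph{Parts (iii) and (iv).} For (iii) I would instead view the two equilibrium conditions as a linear system for $(\cos\psi,\sin\psi)$ with coefficient matrix built from $\mathrm{Re}\,f,\mathrm{Im}\,f$ and determinant $-|f|^{2}\neq0$; solving and forming $\tan\psi=\sin\psi/\cos\psi$ gives, after simplification, the stated arctangent formula. For (iv) I would compute the Jacobian of the planar vector field $g_{1}(\rho,\psi)=a+\mathrm{Re}(f)\cos\psi+\mathrm{Im}(f)\sin\psi$, $g_{2}(\rho,\psi)=\rho^{-1}\bigl(b-\mathrm{Re}(f)\sin\psi+\mathrm{Im}(f)\cos\psi\bigr)$ and evaluate it at an equilibrium, using the equilibrium relations $\mathrm{Re}(f)\cos\psi+\mathrm{Im}(f)\sin\psi=-a$ and $-\mathrm{Re}(f)\sin\psi+\mathrm{Im}(f)\cos\psi=-b$: the $\partial_{\psi}$ entries collapse to $-b(\rho,\Omega)$ and $a(\rho)/\rho$, and the $\rho^{-2}$ term in $\partial_{\rho}g_{2}$ vanishes, leaving exactly $J(\rho)$ as in~\eqref{eq:Jacobian}. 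Linear stability of the periodic response within the SSM is then equivalent to both eigenvalues of $J(\rho)$ having negative real part.

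\emph{Main obstacle.} The algebra above is routine; what requires care is, rather, the two reduction steps that legitimize using~\eqref{eq:normal_2D} and the reduced Jacobian in place of the full system: (a) that under the near-resonances~\eqref{eq:near_ext_res}--\eqref{eq:near_int_res} only the $\eta$-th forcing harmonic and the resonant monomials $p^{\ell+1}\bar p^{\ell}$ remain in the reduced dynamics, with everything else pushed into the parametrization (exactly what~\eqref{eq:normal_form} and~\eqref{eq:non_auto_normal} deliver); and (b) that the Floquet multipliers of the periodic orbit transverse to the SSM, together with the trivial unit multiplier along $\dot\phi=\Omega$, do not change its stability type, which rests on the SSM being the smoothest, attracting invariant manifold with a spectral gap guaranteeing exponential transverse decay --- the same hypothesis underlying the reduced-order modeling developed in Section~\ref{sec:computation_autonomous}.
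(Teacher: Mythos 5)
Your proposal is correct and follows essentially the same route as the paper's proof: polar substitution and the co-rotating phase $\psi=\theta-\eta\phi$ for (i), elimination of $\psi$ from the equilibrium conditions for (ii), solving for $(\cos\psi,\sin\psi)$ for (iii), and linearizing the $(\rho,\psi)$ vector field at the fixed point, using the equilibrium relations to simplify the $\partial_\psi$ and $\rho^{-2}$ terms, for (iv). Your closing remarks on the legitimacy of the normal-form truncation and of transverse Floquet multipliers go beyond what the paper's proof addresses but do not affect the correctness of the argument.
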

\begin{proof}
	See Appendix~\ref{sec:Proof-of-Theorem}
\end{proof}
Note that the zero level set of $\mathcal{F}$, which provides the
FRC, can also be written as the zero-level set of the functions 
\[
\mathcal{G}^{\pm}(\rho,\Omega):=b(\rho,\Omega)\pm\sqrt{|f|^{2}-\left[a(\rho)\right]^{2}}.
\]
Despite the equivalence in the zero-level sets of the functions $\mathcal{F}$
and $\mathcal{G}^{\pm}$, one over the other might be preferred
to avoid numerical difficulties. The zero level set of $\mathcal{F}$
is a one-dimensional submanifold in the $(\rho,\Omega)$ space for
a given forcing of small enough amplitude $|f|$. The parameter values
for which the FRC contains more than one connected component is referred
in literature as the emergence of \emph{detached resonance curves}
or \emph{isolas}. The non-spurious zeros of the polynomial
$a(\rho)$ result in the non-trivial steady-state for the full system
(see Ponsioen et al.~\cite{Ponsioen2019}). The analytical formulas
given in Lemma~\ref{thm:frc} enable us to compute the FRCs along
with isolas, if those exist.

In the case of (near-) outer resonances of $\lambda$ with any of
the remaining eigenvalues of the system, such a two-dimensional SSM
does not exist (see Haller \& Ponsioen~\cite{Haller2016}) and one should include the resonant eigenvalues in the master modal subspace $E$, resulting in higher-dimensional
SSMs with inner resonances. The reduced dynamics on such high-dimensional SSMs can again be used to compute FRCs via numerical continuation, as discussed by Li et al.~\cite{Li2021}. 

The automated computation procedure
developed here is also applicable for treating high-dimensional problems
with inner resonances up to arbitrarily high order of accuracy.	A numerical implementation of the computational methodology developed
in this work, is available in the form of the open-source MATLAB package, SSMTool 2.0~\cite{SSMTool2.0}, which is integrated with a generic finite-element
solver (Jain et al.~\cite{FECode}) and \textsc{coco}~~\cite{Dankowicz}. This allows us to treat high-dimensional mechanics problems, as we demonstrate over several numerical examples
in the next section.

\section{Numerical examples \label{sec:Numerical-examples}}

In the following examples, we perform local SSM computations on mechanical systems 
systems following the methodology discussed in Sections~\ref{sec:computation_autonomous}
and~\ref{sec:computations_non-autonomous}, which involves the solution
of invariance equations~\eqref{eq:inv_eqn} and~\eqref{eq:inv_non-autonomous}. We use the reduced-dynamics on two-dimensional SSMs attached
to periodic orbits for obtaining FRCs of various nonlinear mechanical
systems via Lemma~\ref{thm:frc}.

The equations of motion governing the following examples are given
in the general form: 
\begin{equation}
\mathbf{M}\ddot{\mathbf{x}}+\mathbf{C}\dot{\mathbf{x}}+\mathbf{K}\mathbf{x}+\mathbf{f}(\mathbf{x})=\epsilon\mathbf{f}^{ext}(\Omega t),\qquad\mathbf{x}(t)\in\mathbb{R}^{n}.\label{eq:oscillator}
\end{equation}	 
{ An SSM characterizes the deformation in the corresponding modal subspace that arises due to the addition of nonlinearities in the linearized counterpart of system~\eqref{eq:oscillator}. Specifically, the nonlinear terms in the Taylor expansions $\mathbf{W}$, $\mathbf{R}$ (see eqs.~\eqref{eq:expW} and \eqref{eq:expR}) end up being nontrivial precisely due to the presence of the nonlinearity $\mathbf{f}$ in system~\eqref{eq:oscillator}. For each of the following examples, we illustrate this deformation of the modal subspace  by taking a snapshot (Poincar\'e section) of the non-autonomous SSM along with its reduced dynamics at an arbitrary time instant, $t=t_0$. We then plot the SSM as a graph over the modal coordinates $[\rho \cos \theta,~\rho \sin \theta]$, where $\theta = (\psi + \eta \Omega t_0)$ (see Lemma~\ref{thm:frc}). 
	
	To this end, we simply simulate the autonomous, two-dimensional ROM~\eqref{eq:reduced_dyn_polar} which results in the reduced dynamics trajectories $\rho(t)$ and $ \theta (t)$ on the SSM in polar coordinates. We then map these trajectories onto the SSM using the parametrization $\mathbf{W}_{\epsilon} (\mathbf{p}(t), \Omega t_0)) $, where 
	\begin{equation}
	\label{eq:redsimplot}
	\mathbf{p}(t) = \rho(t) \left[\begin{array}{c}
	e^{\mathrm{i}\theta(t)}\\
	e^{-\mathrm{i}\theta(t)}
	\end{array}\right]= \rho(t) \left[\begin{array}{c}
	e^{\mathrm{i}(\psi(t) + \eta \Omega t_0)}\\
	e^{-\mathrm{i}(\psi(t) + \eta \Omega t_0)}
	\end{array}\right].
	\end{equation}
}

We also compare these results with global computational  techniques involving numerical continuation of the periodic response via collocation, spectral and shooting-based approximations. { While the local manifold computations we have discussed would benefit greatly from parallel computing (see Remarks~\ref{rem:parallelization} and \ref{rem:parallelization2}), in this work, we refrain from any parallel computations for a fair comparison of computation time with other techniques, where the tools we have employed do not use parallelization. We perform all computations via openly available  MATLAB packages on version 2019b of MATLAB.}

\subsection{Finite-element-type oscillator chain}

As a first example, we consider the nonlinear oscillator chain
example used by Jain et al.~\cite{Jain2019}, whose computational implementation
can be made to resemble a finite-element assembly, with each of the
nonlinear springs treated as an element. 

\begin{figure}[H]
	\centering \includegraphics[width=0.7\linewidth]{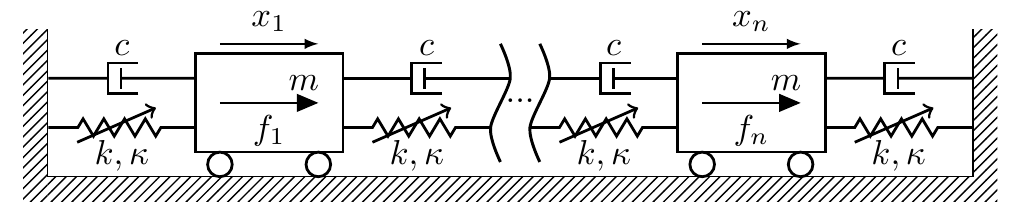}
	\caption{\label{fig:chain} \small The schematic of an $n-$degree-of-freedom,
		nonlinear oscillator chain where each spring has linear stiffness
		$k$ [N/m] and cubic stiffness $\kappa$, [N/m$^{3}$]; each damper has
		linear damping coefficient $c$ [N.s/m]; and each mass ($m$ [kg]) is
		forced periodically at frequency $\Omega$ [rad/s] (see eq.~\eqref{eq:oscillator})}
\end{figure}

The equations of motion
for the $n$-mass oscillator chain, shown in Figure~\ref{fig:chain},
are given by system~\eqref{eq:oscillator} with 
\begin{equation}
\mathbf{M}=m\mathbf{I}_{n},\quad\mathbf{K}=k\mathbf{L}_{n},\quad\mathbf{C}=c\mathbf{L}_{n},\quad\mathbf{f}(\mathbf{x})=\kappa\mathbf{f}_{3}\mathbf{x}^{\otimes3},
\end{equation}
where $\mathbf{L}_{n}$ is a Toeplitz matrix given as 
\begin{equation}
\mathbf{L}_{n}=\left[\begin{array}{ccccc}
2 & -1\\
-1 & 2 & -1\\
& \ddots & \ddots & \ddots\\
&  & -1 & 2 & -1\\
&  &  & -1 & 2
\end{array}\right]\in\mathbb{R}^{n\times n},
\end{equation}
and $\mathbf{f}_{3}\in\mathbb{R}^{n\times n^{3}}$ is a sparse cubic-coefficients
array such that 
\begin{equation}
\mathbf{f}_{3}\mathbf{x}^{\otimes3}=\left[\begin{array}{c}
x_{1}^{3}-(x_{2}-x_{1})^{3}\\
(x_{2}-x_{1})^{3}-(x_{3}-x_{2})^{3}\\
\vdots\\
(x_{n}-x_{n-1})^{3}-x_{n}^{3}
\end{array}\right].
\end{equation}
We choose the parameter values 
\begin{equation}
m=1,\quad k=1,\quad c=0.1,\quad\kappa=0.3,\quad\mathbf{f}^{ext}(\Omega t)=\mathbf{f}_{0}\cos(\Omega t),\quad\epsilon=0.1,
\end{equation}
where forcing frequency $\Omega$ in the range of 0.23-1 rad/s and
the forcing shape 
\begin{equation}
\mathbf{f}_{0}=[-0.386,-0.587,-0.521,-0.243,0.095,0.335,0.402,0.323, 0.188,0.075]^{\top}
\end{equation}
are chosen to excite the first three modes of the system. For the
chosen parameter values, the pairs of eigenvalues associated to the
first three modes are 
\begin{align}
\lambda_{1,2} & =-0.0041\pm0.2846\mathrm{i},\label{eq:lambda_12}\\
\lambda_{3.4} & =-0.0159\pm0.5632\mathrm{i},\label{eq:lambda_34}\\
\lambda_{5,6} & =-0.0345\pm0.8301\mathrm{i}.\label{eq:lambda_56}
\end{align}
For $\Omega\in[0.23,1]$, these three pairs of eigenvalues~\eqref{eq:lambda_12}-\eqref{eq:lambda_56}
are nearly resonant with $\Omega$ as per approximations~\eqref{eq:near_ext_res}
with $\eta=1$. We subdivide the frequency range into three intervals
around each of these near-resonant eigenvalue pairs. We then perform
SSM computations up to quintic order to approximate the near-resonant
FRC via Lemma~\ref{thm:frc} for each pair of near-resonant eigenvalues. 

{Figure~\ref{fig:chain_FRC}a illustrates the Poincar\'e section of the non-autonomous SSM computed around the second mode with eigenvalues~\eqref{eq:lambda_34} and near-resonant forcing frequency $\Omega = 0.6158$ rad/s (period $T = 2\pi/\Omega$). Each curve of the reduced dynamics shown in Figure 7a represents iterates of the period $T$-Poincar\'e map. In particular, any hyperbolic fixed points correspond to $T$-periodic orbits of the full system with the same hyperbolicity according to Lemma~\ref{thm:frc}. Hence, we directly obtain unstable and stable periodic orbits on the FRC by investigating the stable (blue) and unstable (red) fixed points of the reduced dynamics on the SSM for different values of $\Omega$, as shown in Figure~\ref{fig:chain_FRC}b.} 

Figure~\ref{fig:chain_FRC}b further shows that the FRC obtained from these SSM computations agrees with the spectral (harmonic balance) and collocation-based approximations. We perform these harmonic balance approximations using
an openly available MATLAB package, NLvib~\cite{Krack2019}, which
implements an alternating frequency-time (AFT) approach. We choose
$5$ harmonics for approximations in the frequency domain and $2^{7}$
time steps for the approximations in the time domain. For performing
collocation-based continuation, we use the $\texttt{po-}$toolbox
of $\textsc{coco}$~\cite{Dankowicz} with default settings and adaptive
refinement of collocation mesh and one-dimensional atlas algorithm.

{ The total computation time consumed in
	model generation, coefficient assembly and computation of all eigenvalues of this system was less than 1 second on a Windows-PC with Intel Core i7-4790 CPU @ 3.60GHz and 32 GB RAM.} We compare the computation times for obtaining the FRC using different
methods in Table~\ref{tab:comp_chain}.  
\begin{figure}[H]
	\centering \subfloat[]{\centering{}\includegraphics[width=0.49\linewidth]{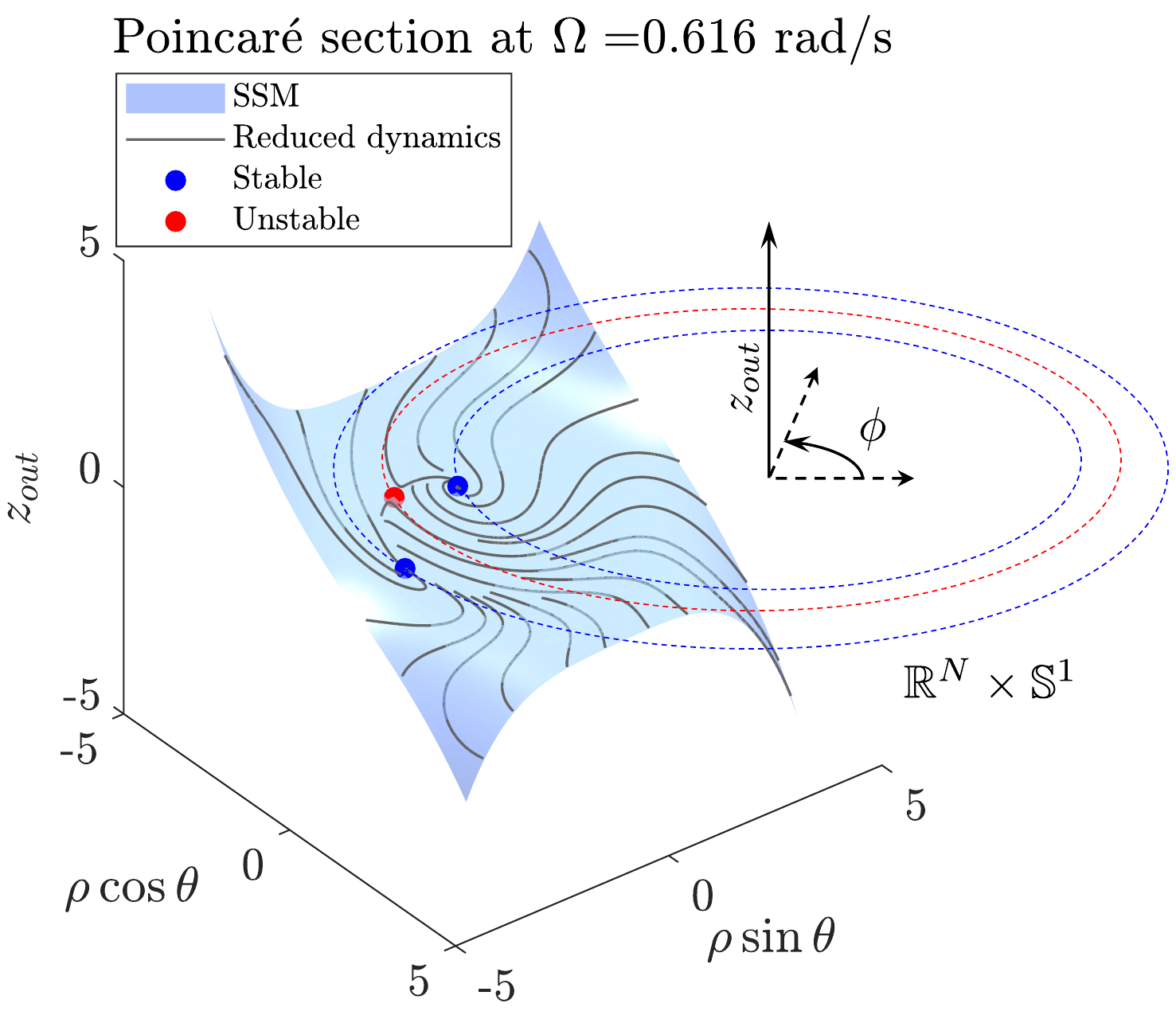}}\hfill{}\subfloat[]{\centering{}\includegraphics[width=0.49\linewidth]{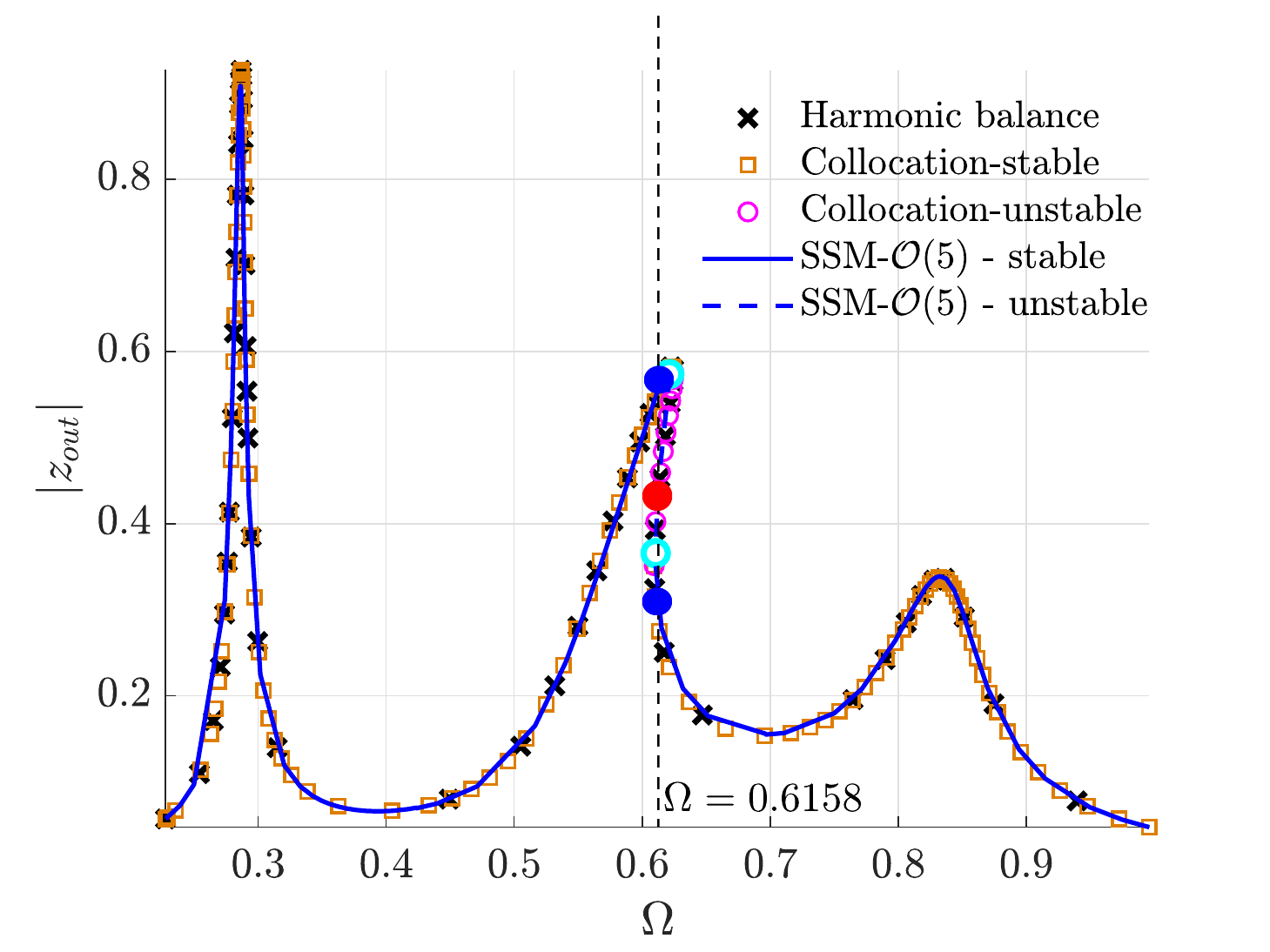}}
	\caption{\label{fig:chain_FRC}\small { (a) Poincar\'e section of the  non-autonomous SSM computed around the second mode (eigenvalues~\eqref{eq:lambda_34}) for $\Omega = 0.6158$~rad/s, where the reduced dynamics in polar coordinates $\rho$, $\theta$ is obtained by simulating the ROM~\eqref{eq:reduced_dyn_polar} (see eq.~\eqref{eq:redsimplot}). The fixed points in blue and red directly provide us the stable and unstable periodic orbits on the FRC. (b)} FRC obtained via local computations of SSM at $\mathcal{O}(5)$
		agree with those obtained using global continuation methods involving the
		harmonic balance method (NLvib~\cite{Krack2019}) and collocation
		(\textsc{coco}~\cite{Dankowicz}); the computation is performed
		for $n=10$ (see Table~\ref{tab:comp_chain} for computation times);
		the plot shows the displacement amplitude for the $5^{\mathrm{th}}$ (middle)
		degree of freedom.}
\end{figure}

In this example, the SSM-based analytic approximation to FRC using
Lemma~\ref{thm:frc} involves the computation of the $\mathcal{O}(5)$-autonomous
SSM three times (once around each resonant pair). The leading-order
non-autonomous SSM computation needs to be repeated for each $\Omega$
in the frequency span~$[0.23,1]$. We { emphasize} that while each of these
SSM computations is parallelizable (see Remark~\ref{rem:parallelization}) in contrast to
continuation-based global methods, we have reported computation times
via a sequential implementation in Table~\ref{tab:comp_chain}. As expected, we observe from Table~\ref{tab:comp_chain} that local approximations
to SSMs are a much faster means to compute FRCs in comparison to global
techniques that involve collocation or spectral (harmonic balance)
approximations. 

\begin{table}[H]
	\begin{centering}
		\begin{tabular}{|>{\centering}p{3cm}|>{\centering}p{3cm}|>{\centering}p{3cm}|>{\centering}p{3cm}|}
			\hline 
			\multirow{1}{*}{$n$} & \multicolumn{3}{c|}{Computation time [minutes:seconds]}\tabularnewline
			\hline 
			& SSM $\mathcal{O}(5)$  & Harmonic balance & Collocation \tabularnewline
			(number of degrees of freedom)  & (SSMTool 2.0~\cite{SSMTool2.0}) & (NLvib~\cite{Krack2019}) & 
			($\textsc{coco}$~\cite{Dankowicz}, atlas-1d) \tabularnewline
			\hline 
			\hline 
			10  & 00:07  & 00:14  & 02:47\tabularnewline
			\hline 
		\end{tabular}
		\par\end{centering}
	\caption{\label{tab:comp_chain} \small Computation times for obtaining the FRC depicted
		in Figure~\ref{fig:chain_FRC}. These computations were performed on
		MATLAB version 2019b, installed on a Windows-PC with Intel Core i7-4790
		CPU @ 3.60GHz and 32 GB RAM. }
\end{table}

\FloatBarrier
\subsection{Von Kármán Beam}

We now consider a finite element model of a geometrically nonlinear,
cantilevered von Kármán beam (Jain et al.~\cite{Jain2018}), illustrated
in Figure~\ref{fig:beam}a. The geometric and material properties
of the beam are given in Table~\ref{tab:pars_beam}. The equations
of motion are again given in the general form~\eqref{eq:oscillator}.
This model is programmed in the finite element solver~\cite{FECode},
which directly provides us the matrices $\mathbf{M},\mathbf{C},\mathbf{K}$
and the coefficients of the nonlinearity $\mathbf{f}$ in physical
coordinates. We discretize this model using 10 elements resulting
in $n=30$ degrees of freedom. 
\begin{figure}[H]
	\centering\includegraphics[width=0.48\linewidth]{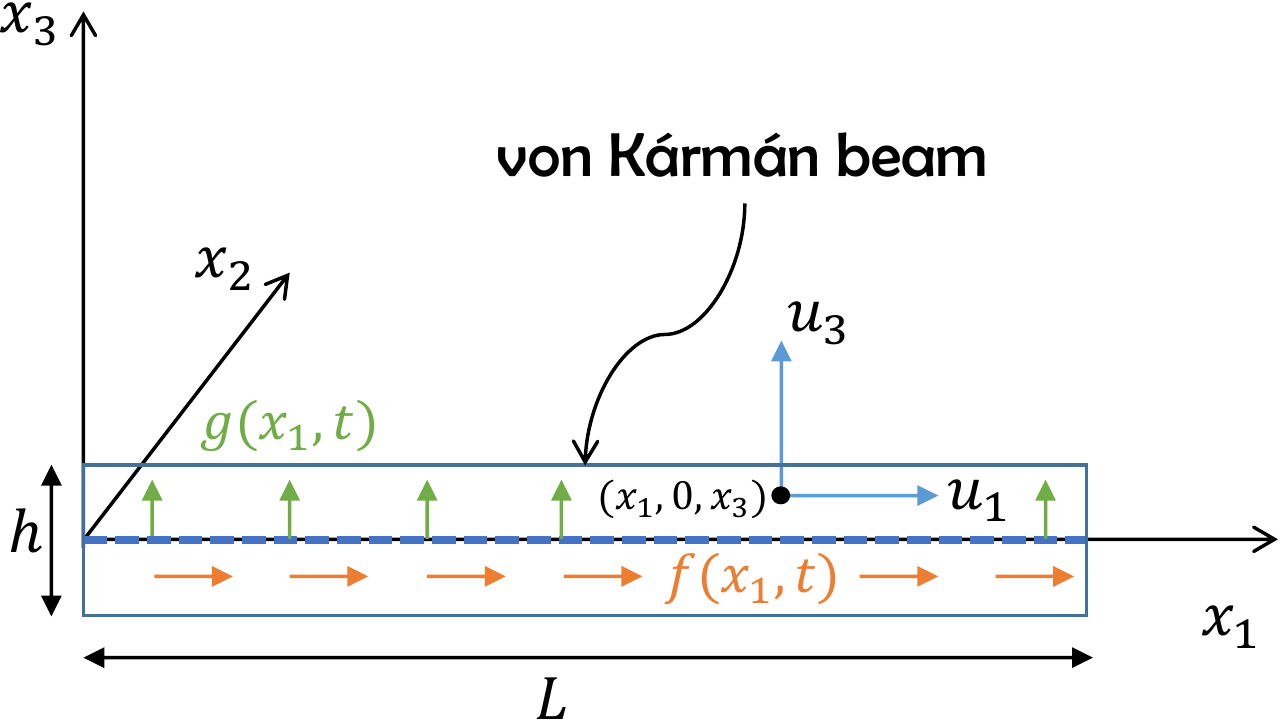}
	
	\caption{\label{fig:beam} The schematic of a two-dimensional von Kármán
		beam model (Jain et al.~\cite{Jain2018}) with height $h$ and length
		$L$, initially aligned with the $x_{1}$ axis, see Table~\ref{tab:pars_beam}
		for geometric and material properties.}
\end{figure}

\begin{table}[H]
	\begin{centering}
		\begin{tabular}{|c|l|c|}
			\hline 
			Symbol  & Meaning  & Value {[}unit{]}\tabularnewline
			\hline 
			\hline 
			$L$  & Length of beam  & 1 {[}m{]}\tabularnewline
			\hline 
			$h$  & Height of beam  & 1 {[}mm{]}\tabularnewline
			\hline 
			$b$  & Width of beam  & 0.1 {[}m{]}\tabularnewline
			\hline 
			$E$  & Young's Modulus  & 70 {[}GPa{]}\tabularnewline
			\hline 
			$\kappa$  & Viscous damping rate of material  & $10^{7}$ {[}Pa s{]}\tabularnewline
			\hline 
			$\rho$  & Density  & 2700 {[}kg/m$^{3}${]}\tabularnewline
			\hline 
		\end{tabular}
		\par\end{centering}
	\caption{\label{tab:pars_beam} \small Physical parameters of the von Kármán beam model
		(see Figure~\ref{fig:beam}a) }
\end{table}

The eigenvalue pair associated to the first mode of vibration of the
beam is given by 
\begin{equation}
\lambda_{1,2}=-0.0019+5.1681\mathrm{i},\label{eq:eig_beam}
\end{equation}
and external forcing is chosen as
\begin{equation}
\mathbf{f}^{ext}(\Omega t)=\mathbf{f}_{0}\cos(\Omega t),\quad\epsilon=10^{-3},\label{eq:beam_forcing}
\end{equation}
where $\mathbf{f}_{0}$ represents a spatially uniform forcing vector
with transverse forcing magnitude of $0.5$N/m across the length of
the beam. We choose the forcing frequency $\Omega$ in the range 4.1-6.2
rad/s for which the eigenvalue pair $\lambda_{1,2}$~\eqref{eq:eig_beam}
is nearly resonant with $\Omega$ (see~\eqref{eq:near_ext_res}).
We then perform $\mathcal{O}(5)$ SSM computations to approximate
the near-resonant FRC around the first natural frequency via Lemma
\ref{thm:frc}. 

{Figure~\ref{fig:vonkarmanbeam-ssm} illustrates the Poincar\'e section of the non-autonomous SSM computed around the first mode with eigenvalues~\eqref{eq:eig_beam} and near-resonant forcing frequency $\Omega = 5.4$ rad/s (period $T = 2\pi/\Omega$). We observe in~Figure~\ref{fig:vonkarmanbeam-ssm}a that the graph of the manifold is flat along the transverse degree of freedom, which gives the impression that there is no significant deformation of the modal subspace under the addition of nonlinearities in this system. At the same time, however, ~Figure~\ref{fig:vonkarmanbeam-ssm}b depicts a significant curvature of the SSM along the axial degree of freedom, which is related to the bending-stretching coupling introduced by the geometric nonlinearities in any beam model. Hence, we note that the invariance computation automatically accounts for the important physical effects arising due to nonlinearities in the form of the parametrizations $\mathbf{W}$ and $\mathbf{R}$ of the manifold and its reduced dynamics. These effects, otherwise, are typically captured by a heursitic projection of the governing equation onto carefully selected modes (see Jain et al.~\cite{Jain2018}, Buza et al.~\cite{Buza2020} for a discussion). }

\begin{figure}[H]
	\centering \subfloat[]{\centering{}\includegraphics[width=0.43\linewidth]{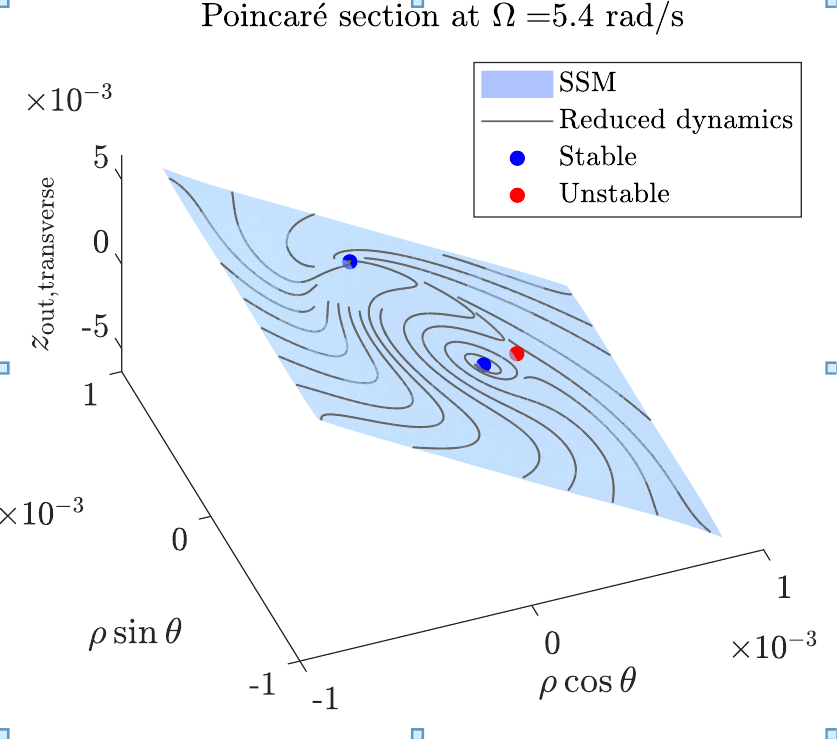}}\hfill{}\subfloat[]{\centering{}\includegraphics[width=0.53\linewidth]{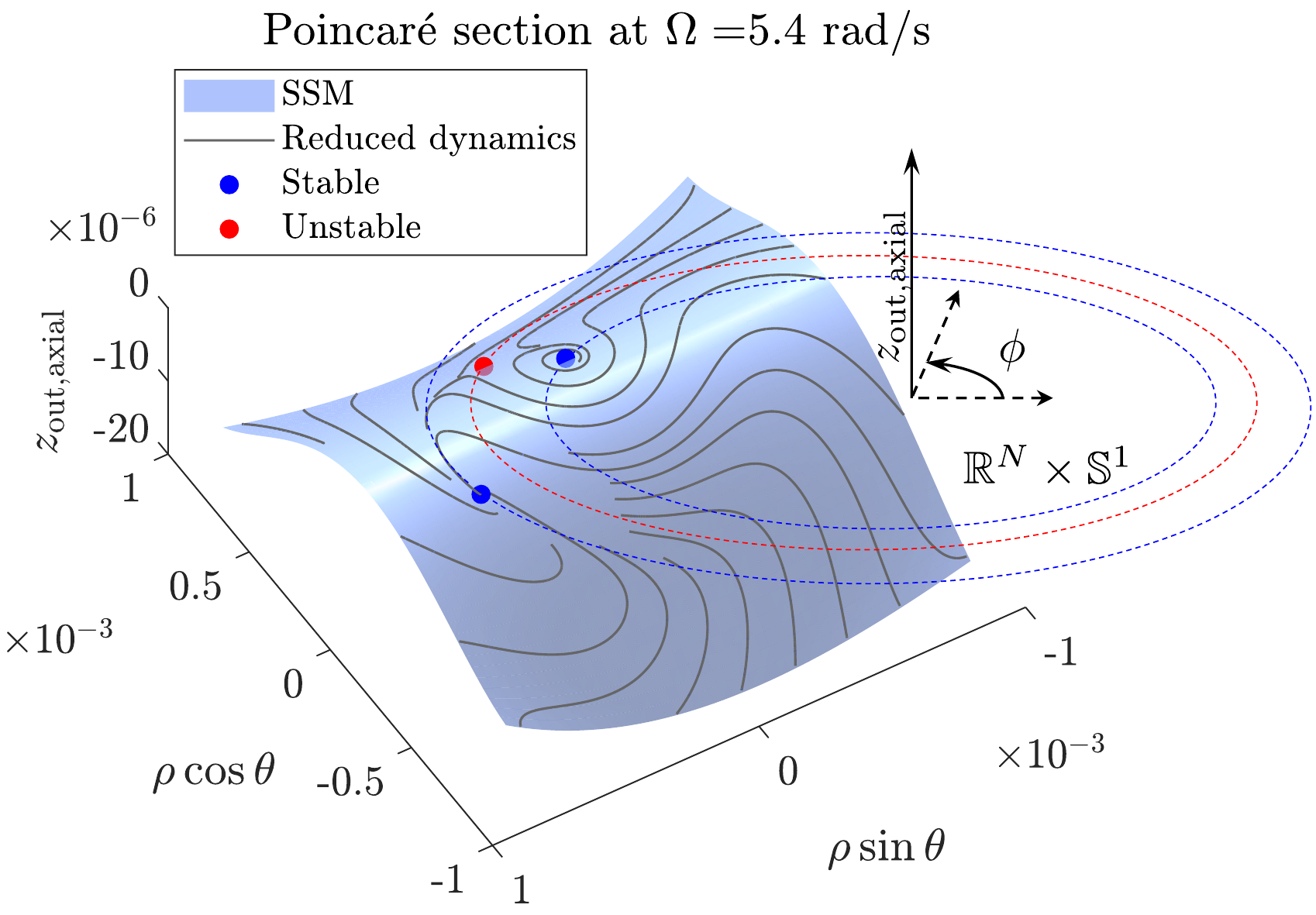}}
	
	\caption{\label{fig:vonkarmanbeam-ssm}\small  Poincar\'e sections of the  non-autonomous SSM computed around the first mode (eigenvalues~\eqref{eq:eig_beam}) of the beam  for near-resonant forcing frequency $\Omega = 5.4$ rad/s. The projection of the SSM onto the axial degree of freedom (b) located at the tip of the beam shows significant curvature in contrast to that onto the transverse degree of freedom (a), which appears relatively flat. The reduced dynamics in polar coordinates $\rho$, $\theta$ is obtained by simulating the ROM~\eqref{eq:reduced_dyn_polar} (see eq.~\eqref{eq:redsimplot}); the fixed points in blue and red directly provide us the stable and unstable periodic orbits on the FRC for different values of $\Omega$ (see Figure~\ref{fig:beam-FRC}).}
\end{figure}

{ Finally, in Figure~\ref{fig:beam-FRC}, we obtain unstable and stable periodic orbits on the FRC by investigating the stable (blue) and unstable (red) fixed points of the reduced dynamics on the SSM for different values of $\Omega$.	Figure~\ref{fig:beam-FRC} also shows that the FRC obtained via local SSM computation closely approximates the FRCs obtained using various global continuation techniques: collocation approximations via $\textsc{coco}$~\cite{Dankowicz}; and harmonic balance approximations via NLvib~\cite{Krack2019}. These continuation were performed with the same settings as in the previous example.}

\begin{figure}[H]
	\centering\includegraphics[width=0.48\linewidth]{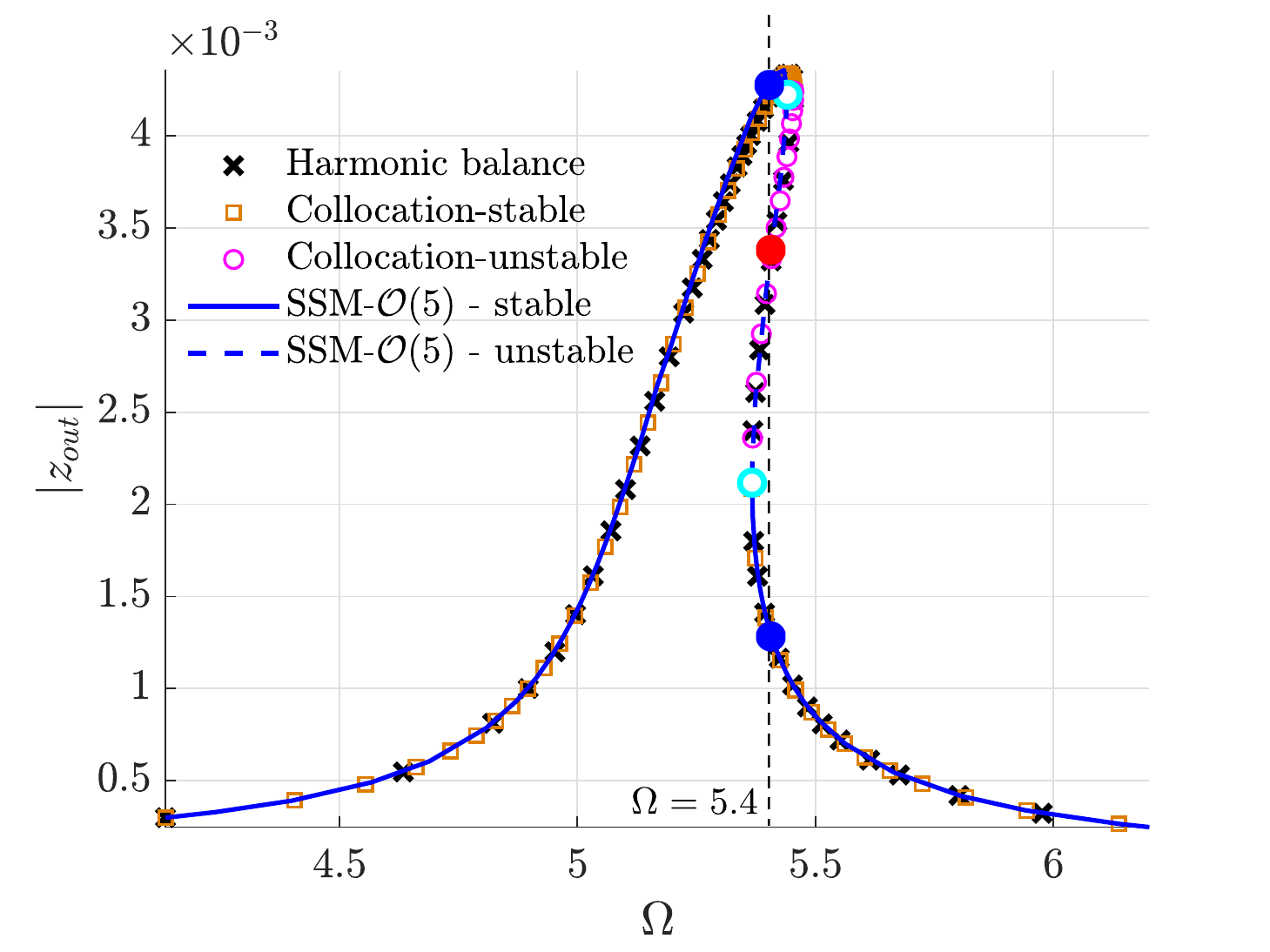}
	
	\caption{\small  \label{fig:beam-FRC} FRCs of the von K\'arm\'an beam model (see Figure~\ref{fig:beam}) with $n=30$ degrees of freedom under harmonic,
		spatially uniform transverse loading (see eq.~\eqref{eq:beam_forcing}). The FRC obtained via local
		computations of SSM at $\mathcal{O}(5)$ agrees with those obtained
		using global continuation methods involving the harmonic balance method
		(NLvib~\cite{Krack2019}) and collocation ($\textsc{coco}$~\cite{Dankowicz});
		the plot shows the displacement amplitude for in the $x_{3}$ direction
		at the tip of the beam (see Table~\ref{tab:comp_times_beam} for computation
		times).}
\end{figure}

{ Once again, the total computation time spent
	on model generation, coefficient assembly and computing the first 10 eigenvalues of this system was less than 1 second on a Windows-PC with Intel Core i7-4790 CPU @ 3.60GHz and 32 GB RAM.} Table~\ref{tab:comp_times_beam}
records the computation times to obtain FRCs via each of these methods. For the collocation-based response computation via \textsc{coco}~\cite{Dankowicz}, we also employ the atlas-$k$d algorithm~(see Dankowicz et al.~\cite{Dankowicz2020}) in addition to the default atlas-1d algorithm used in the previous example. Atlas-$k$d allows the user to choose the subspace of the continuation variables along which the continuation step size $ h $ is measured. Here, we choose this subset to be $ ({z}_{out}(0), \Omega, T) $, where $ T = \frac{2\pi}{\Omega} $ is the time period of periodic response and $ {z}_{out} $ is the response at output degree of freedom shown in Figure~\ref{fig:beam-FRC}. We allow for the continuation step size to adaptively vary between the values $ h_{min} = 10^{-5} $ to $ h_{max} = 50 $ and a maximum residual norm for the predictor step to be 10. We found these settings to be optimal for this atlas-$k$d run since relaxing these tolerances further has no effect on the continuation speed. Once again, the computation times in Table~\ref{tab:comp_times_beam} indicate orders-of-magnitude higher speed in reliably approximating
FRC via local SSMs computations in comparison to global techniques
that involve collocation or spectral approximations.

\begin{table}[H]
	\begin{centering}
		\begin{tabular}{|>{\centering}p{3cm}|>{\centering}p{3cm}|>{\centering}p{3cm}|>{\centering}p{2cm}>{\centering}p{2cm}|}
			\hline 
			$n$ & \multicolumn{4}{c|}{Computation time [hours:minutes:seconds]}\tabularnewline
			\hline 
			& SSM $\mathcal{O}(5)$ & Harmonic balance & \multicolumn{2}{c|}{Collocation } \tabularnewline
			(number of degrees  & (SSMTool 2.0~\cite{SSMTool2.0}) & (NLvib~\cite{Krack2019}) & 
			\multicolumn{2}{c|}{($\textsc{coco}$~\cite{Dankowicz})} \tabularnewline
			of freedom)& & & atlas-1d & atlas-$k$d \tabularnewline
			\hline 
			\hline 
			30  & 00:00:03  & 00:31:15  & 05:36:15 & 05:09:18\tabularnewline
			\hline 
		\end{tabular}
		\par\end{centering}
	\caption{\label{tab:comp_times_beam} \small Computation time for obtaining the FRCs
		depicted in Figure~\ref{fig:beam}b. These computations were performed
		on MATLAB version 2019b installed on a Windows-PC with Intel Core
		i7-4790 CPU @ 3.60GHz and 32 GB RAM. }
\end{table}

\FloatBarrier
\subsection{Shallow-arch structure }

Next, we consider a finite element model of a geometrically nonlinear
shallow arch structure, illustrated in Figure~\ref{fig:plate}a (Jain
\& Tiso~\cite{Jain2018-1}). 

\begin{figure}[H]
	\subfloat[]{\centering{}\includegraphics[width=0.38\linewidth]{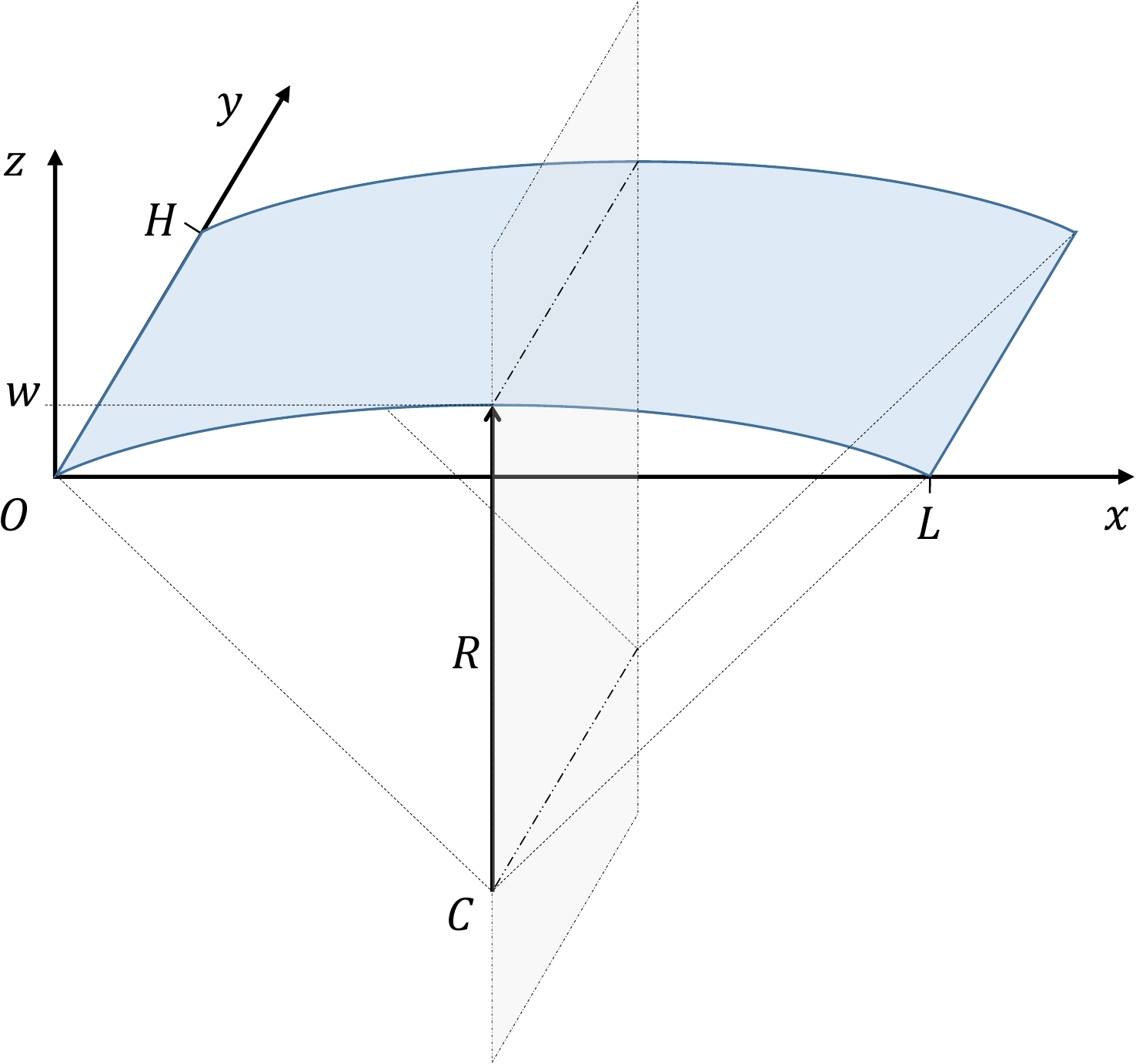}}\hfill{}\subfloat[]{\centering{}\includegraphics[width=0.48\linewidth]{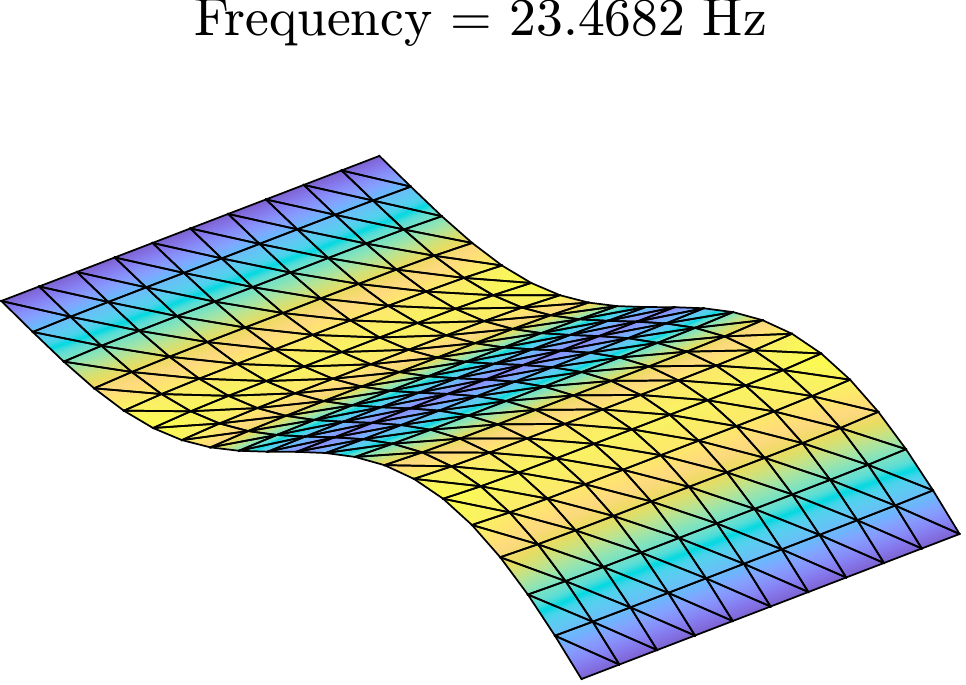}~}
	
	\caption{\label{fig:plate} \small (a) schematic of a shallow-arch structure (Jain
		\& Tiso~\cite{Jain2018-1}), see Table~\ref{tab:pars_plate} for geometric
		and material properties. This plate is simply supported at the two
		opposite edges aligned along the $y$-axis. (b) The finite element
		mesh (containing 400 elements, 1,320 degrees of freedom) deformed along first bending
		mode having undamped natural frequency of approximately 23.47 Hz.}
\end{figure}
The geometrical and material properties
of this curved plate are given in Table~\ref{tab:pars_plate}. The
plate is simply supported at the two opposite edges aligned along
the $y$-axis in Figure~\ref{fig:plate}a. The model is discretized
using flat, triangular shell elements and contains 400 elements, resulting
in $n=1320$ degrees of freedom. The open-source finite element code~\cite{FECode}
directly provides us the matrices $\mathbf{M},\mathbf{C},\mathbf{K}$
and the coefficients of the nonlinearity $\mathbf{f}$ in the equations
of motion~\eqref{eq:oscillator}.

\begin{table}[H]
	\begin{centering}
		\begin{tabular}{|c|l|c|}
			\hline 
			Symbol  & Meaning  & Value {[}unit{]}\tabularnewline
			\hline 
			\hline 
			$L$  & Length of plate  & 2 {[}m{]}\tabularnewline
			\hline 
			$t$  & thickness of plate  & 10 {[}mm{]}\tabularnewline
			\hline 
			$H$  & Width of beam  & 1 {[}m{]}\tabularnewline
			\hline 
			$E$  & Young's Modulus  & 70 {[}GPa{]}\tabularnewline
			\hline 
			$\nu$  & Poisson's ratio  & 0.33 {[}-{]}\tabularnewline
			\hline 
			$\kappa$  & Viscous damping rate of material  & $10^{5}$ {[}Pa s{]}\tabularnewline
			\hline 
			$\rho$  & Density  & 2700 {[}kg/m$^{3}${]}\tabularnewline
			\hline 
		\end{tabular}
		\par\end{centering}
	\caption{\label{tab:pars_plate} \small Geometrical and material parameters of the
		shallow-arch structure in Figure~\ref{fig:plate}a}
\end{table}

The first mode of vibration of this structure is shown in Figure~\ref{fig:plate}b
and the corresponding eigenvalue pair is given by 
\begin{equation}
\lambda_{1,2}=-0.29\pm147.45\mathrm{i}.\label{eq:eig_plate}
\end{equation}

The external forcing is again given by
\begin{equation}
\mathbf{f}^{ext}(\Omega t)=\mathbf{f}_{0}\cos(\Omega t),\quad\epsilon=0.1,\label{eq:plate_forcing}
\end{equation}
where $\mathbf{f}_{0}$ represents a vector of concentrated load in
$z$-direction with magnitude of 100 N at the mesh node located at
$x=\frac{L}{2},y=\frac{H}{2}$ in Figure~\eqref{fig:plate}a. We choose
the forcing frequency $\Omega$ in the range 133-162 rad/s for which
the eigenvalue pair $\lambda_{1,2}$ is nearly resonant with $\Omega$
(see~\eqref{eq:near_ext_res}).

\begin{figure}[H]
	\subfloat[]{\centering{}\includegraphics[width=0.41\linewidth]{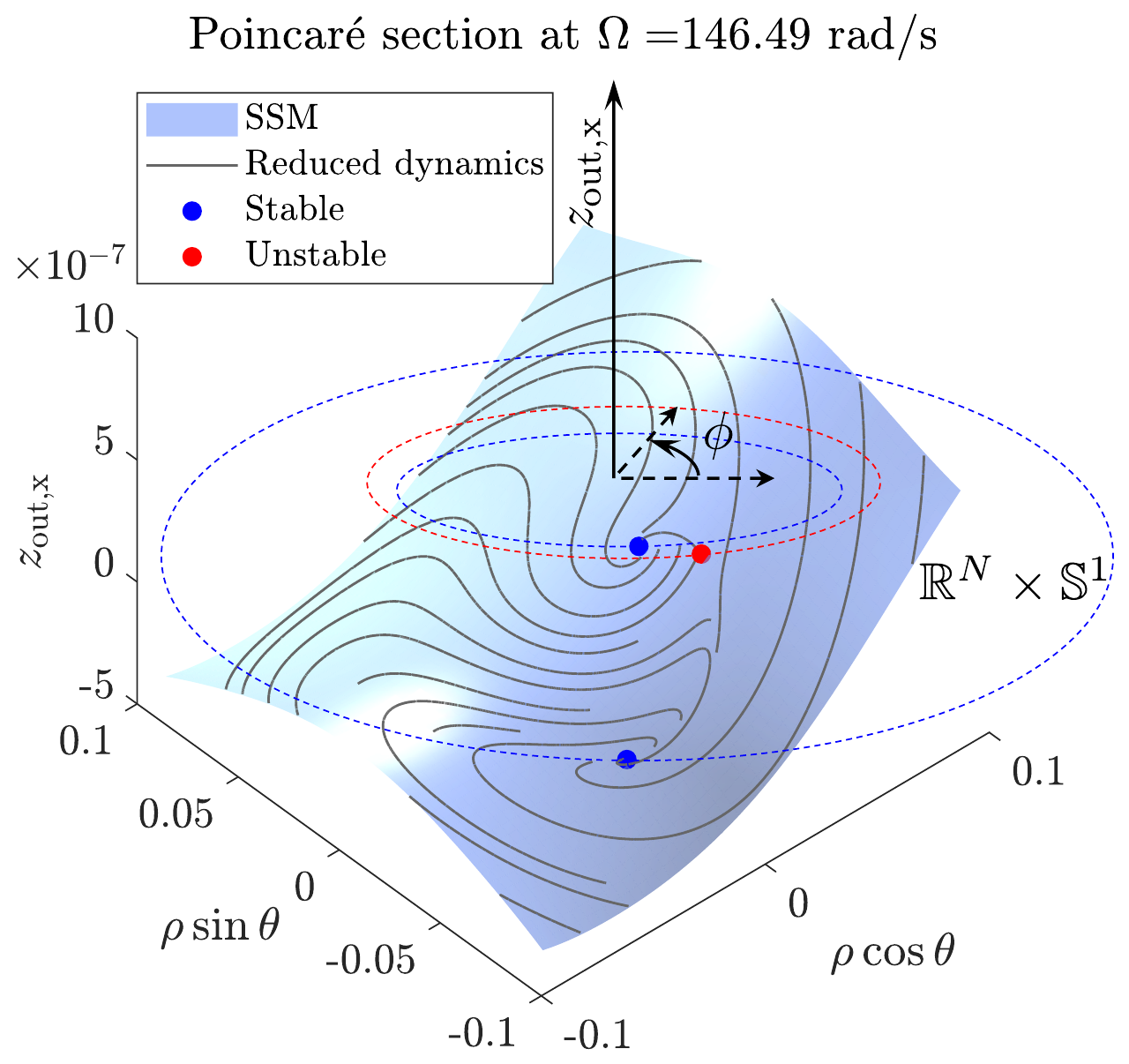}}\hfill{}\subfloat[]{\centering{}\includegraphics[width=0.57\linewidth]{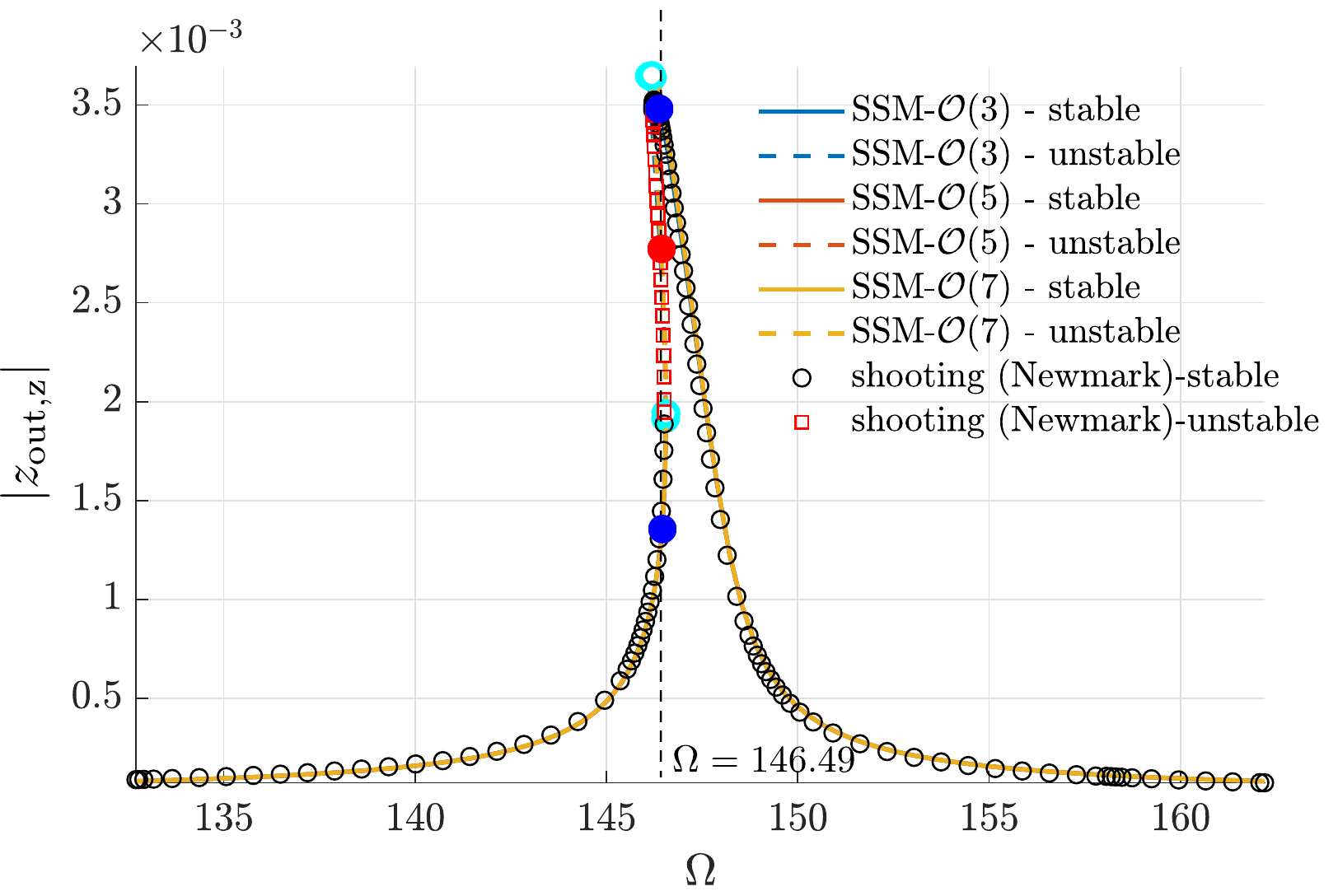}~}				
	\caption{ \label{fig:plate_FRC} \small (a) Poincar\'e section of the  non-autonomous SSM  of the shallow-arch structure (see Figure~\ref{fig:plate}) computed around the first mode (eigenvalues~\eqref{eq:eig_plate})  for near-resonant forcing frequency $\Omega = 146.49$ rad/s. The reduced dynamics in polar coordinates $\rho$, $\theta$ is obtained by simulating the ROM~\eqref{eq:reduced_dyn_polar} (see eq.~\eqref{eq:redsimplot}); the fixed points in blue and red directly provide us the stable and unstable periodic orbits on the FRC~(b) for different values of $\Omega$. FRCs obtained using local SSM computations at $\mathcal{O}(3),\mathcal{O}(5)$
		and $\mathcal{O}(7)$ agree with that obtained via global continuation
		based on the shooting method, which implements the Newmark time integration (see Table~\ref{tab:comp_times_plate} for computation times);
		plots (a) and (b) show the displacements in the $x$ and $z$-directions at the
		mesh node located at $x=\frac{L}{2},y=\frac{H}{2}$ in Figure~\ref{fig:plate}.}
\end{figure}

We then compute the near-resonant
FRC around the first natural frequency via $\mathcal{O}(3),\mathcal{O}(5),$
and $\mathcal{O}(7)$ SSM computations using Lemma~\ref{thm:frc}. { Once again, Figure~\ref{fig:plate_FRC}a shows the Poincar\'e section of the non-autonomous SSM for the near-resonant forcing frequency $\Omega = 146.49$ rad/s, where we directly obtain the unstable (red) and stable (blue) periodic orbits on the FRC as hyperbolic fixed points of the reduced dynamics~(89) on the SSM.} 
The three FRCs at $\mathcal{O}(3),\mathcal{O}(5),$ and $\mathcal{O}(7)$
seem to converge to softening response shown in Figure~\ref{fig:plate_FRC}b.
Note that we expect a softening behavior in the FRC of shallow-arches
(see, e.g., Buza et al.~\cite{Buza2020,Buza2021}). 

Due to excessive memory requirements, this FRC could not
be computed using collocation approximations via $\textsc{coco}$~\cite{Dankowicz}
or using harmonic balance approximations via NLvib~\cite{Krack2019}.
Instead, we compare this FRC to another global continuation technique
based on the shooting method, which is still feasible (see Introduction).

For shooting, we use the classic Newmark time integration scheme (Newmark~\cite{Newmark}, see Géradin \& Rixen~\cite{Geradin} for a review)
as the common Runge-Kutta schemes (e.g., $\texttt{ode45}$ of MATLAB)
struggle to converge in structural dynamics problems. We use an open-source toolbox~\cite{coco-shoot}, based on the atlas-1d algorithm of $\textsc{coco}$~\cite{Dankowicz} for continuation of the periodic solution trajectory obtained via shooting (see Dancowicz et al.~\cite{Dankowicz2020}). We use a constant
time step throughout time integration which is chosen by dividing
the time span $T=\frac{2\pi}{\Omega}$ into 100 equal intervals. We
found this choice of time step to be nearly optimal for this problem
as larger time steps lead to non-quadratic convergence during Newton-Raphson
iterations and smaller time steps result in slower computations. The
stability of the response is computed by integrating the equations
of variation around the converged periodic orbit. 

\noindent 
\begin{table}[H]
	\begin{centering}
		\begin{tabular}{|>{\centering}p{3cm}|>{\centering}p{3cm}|ccc|}
			\hline 
			\multirow{1}{*}{$n$ } & \multicolumn{4}{c|}{Computation time {[}hours:minutes:seconds{]}}\tabularnewline
			\hline 
			(number of degrees & Shooting method  & SSM $\mathcal{O}(3)$  & SSM $\mathcal{O}(5)$ & SSM 
			$\mathcal{O}(7)$\tabularnewline
			of freedom) & \multirow{1}{*}{(Newmark)} & \multicolumn{3}{c|}{(SSMTool 2.0~\cite{SSMTool2.0})}\tabularnewline
			
			\hline 
			\hline 
			1,320  & 52:50:14 & 00:00:07  & 00:00:12 & 00:00:28\tabularnewline
			\hline 
		\end{tabular}
		\par\end{centering}
	\caption{\label{tab:comp_times_plate} \small Computation time for obtaining the FRCs
		depicted in Figure~\ref{fig:plate_FRC}. All computations were performed
		on MATLAB version 2019b installed on a Windows-PC with Intel Core
		i7-4790 CPU @ 3.60GHz and 32 GB RAM. }
\end{table}
{ The total time consumed
	in model generation and coefficient assembly was 33 seconds on a Windows-PC with Intel Core	i7-4790 CPU @ 3.60GHz and 32 GB RAM. This includes the time spent in computing the first 10 eigenvalues of this system, which took less than 1 second.} Figure~\ref{fig:plate_FRC}b shows that this shooting based global continuation agrees with the
SSM-based approximation to the FRC. Obtaining this FRC via the shooting
methods, however, takes more than 2 days, in contrast to SSM-based
approximation using the proposed computational methodology, which
still takes less than a minute even at $\mathcal{O}(7)$, as shown
in Table~\ref{tab:comp_times_plate}.

\subsection{Aircraft Wing}
\label{sec:wing}
As a final example, we consider the finite element model of a geometrically
nonlinear aircraft wing originally presented by Jain et al~\cite{Jain2017}
(see Figure~\ref{fig:wing}). The wing is cantilevered at one of its
ends and the structure is meshed using flat triangular shell elements
featuring 6 degrees of freedom per node. With 49,968 elements and
133,920 degrees of freedom, this model provides a physically
relevant as well as computationally realistic problem that is beyond
feasibility for global continuation techniques based on collocation,
spectral and shooting methods, as shown by previous examples. The open-source
finite element code~\cite{FECode} directly provides us the matrices
$\mathbf{M},\mathbf{K}$ and the coefficients of the nonlinearity
$\mathbf{f}$ in the equations of motion~\eqref{eq:oscillator}. 
\begin{table}[H]
	\begin{centering}
		\begin{tabular}{|c|l|c|}
			\hline 
			Symbol & Meaning & Value {[}unit{]}\tabularnewline
			\hline 
			\hline 
			$L$ & Length of wing ($z$ direction) & 5 {[}m{]}\tabularnewline
			\hline 
			$H$ & Height of wing ($y$ direction) & 0.1 {[}m{]}\tabularnewline
			\hline 
			$W$ & Width of wing ($x$ direction) & 0.9 {[}m{]}\tabularnewline
			\hline 
			$t$ & thickness of elements & 1.5 {[}mm{]}\tabularnewline
			\hline 
			$E$ & Young's Modulus & 70 {[}GPa{]}\tabularnewline
			\hline 
			$\nu$ & Poisson's ratio & 0.33 {[}-{]}\tabularnewline
			\hline 
			$\rho$ & Density & 2700 {[}kg/m$^{3}${]}\tabularnewline
			\hline 
		\end{tabular}
		\par\end{centering}
	\caption{\label{tab:pars_wing}\small Geometrical and material parameters of the shallow-arch
		structure in Figure~\ref{fig:plate}a}
\end{table}

\begin{figure}[H]
	\subfloat[]{\centering{}\includegraphics[width=0.48\linewidth]{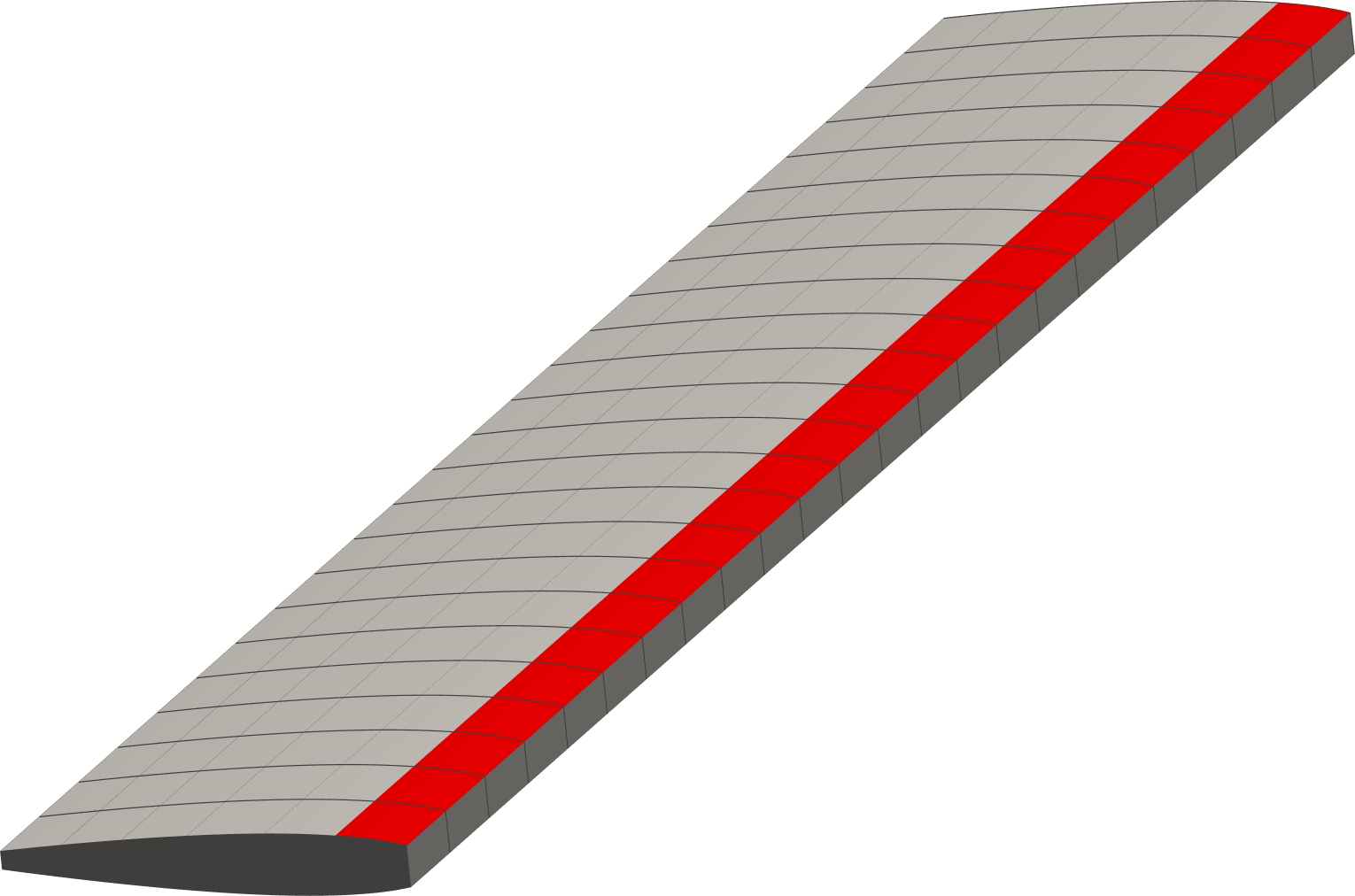}}\hfill{}\subfloat[]{\centering{}\includegraphics[width=0.48\linewidth]{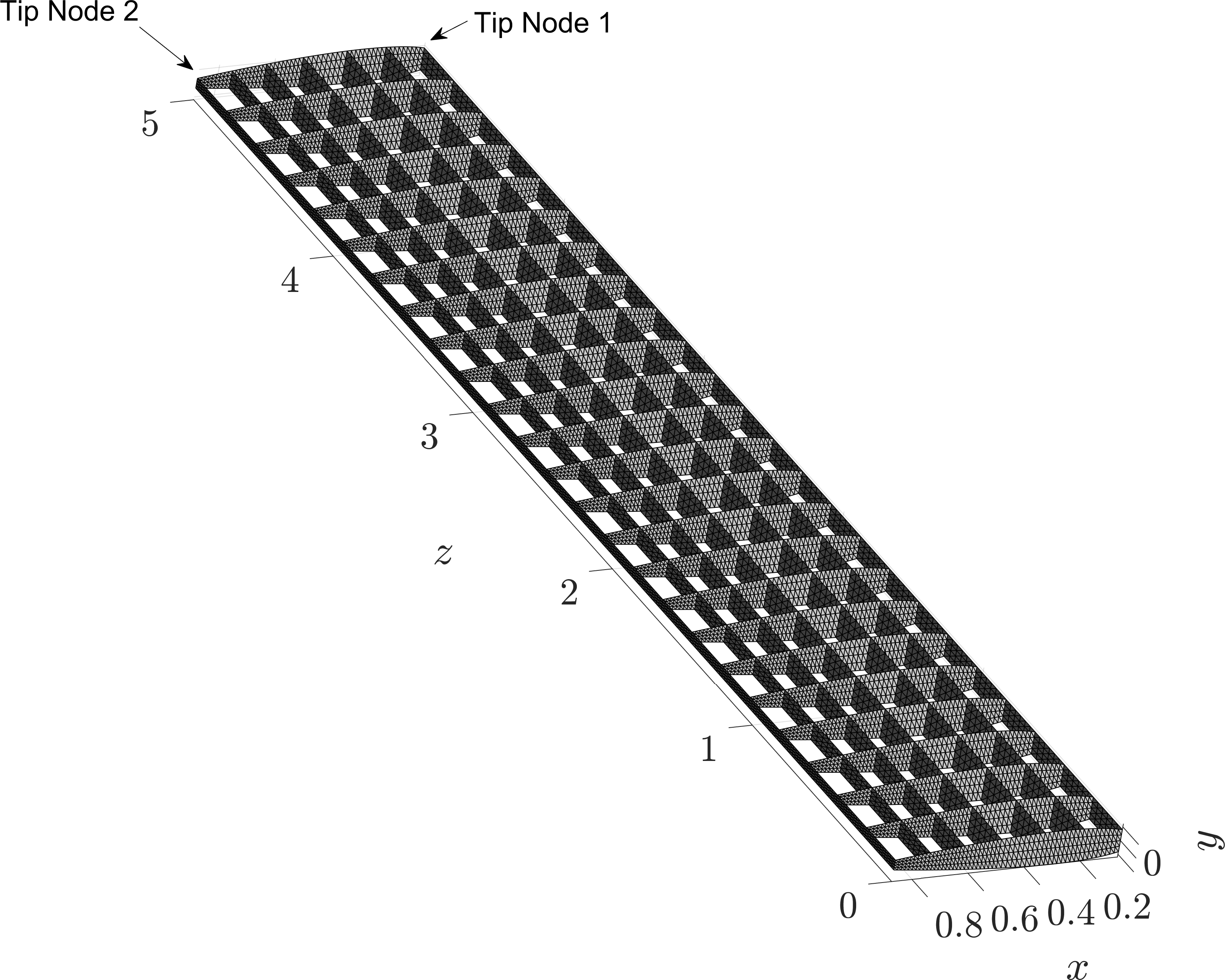}~}
	
	\caption{\label{fig:wing} \small (a) A wing structure with NACA 0012 airfoil stiffened
		with ribs (Jain et al.~\cite{Jain2017}), see Table~\ref{tab:pars_plate}
		for geometric and material properties. (b) The finite-element mesh
		is illustrated after removing the skin panels. The wing is cantilevered
		at the $z=0$ plane. The mesh contains 49,968 elements which results
		in $n=133,920$ degrees of freedom.}
\end{figure}

For assembling coefficients on a problem of this size, we used the
Euler supercomputering cluster at ETH Zurich. The total time consumed
in model generation and coefficient assembly was 1 hour 21 minutes
and 38 seconds \emph{without} any parallelization. { This time includes the time taken for computing the first 10 eigenvalues of this system, which was approximately 5 seconds.} The main bottleneck
was the memory consumption during the assembly of the coefficients
of the nonlinearity $\mathbf{f}$, where the peak memory consumption
was around $183$ GB. However, once assembled, these coefficients
consume only about 1.8~GB of RAM. This extraordinary memory consumption
during assembly occurs due to a sub-optimal assembly procedure of
sparse tensors~\cite{Bader2021}. To avoid these bottlenecks, parallel
computing and distributed memory architectures need to be employed,
which are currently not available in the packages we have used.  

In this example, we choose Rayleigh damping (see, e.g., Géradin \&
Rixen~\cite{Geradin}), which is commonly employed in structural dynamics
applications to construct the damping matrix $\mathbf{C}=\alpha\mathbf{M}+\beta\mathbf{K}$
as a linear combination of mass and stiffness matrices. The constants
$\alpha,\beta$ are chosen to ensure a damping ratio of 0.4\% along
the first two vibration modes. The eigenvalue pair associated to the
first mode of vibration is given by
\begin{equation}
\lambda_{1,2}=-0.0587\pm29.3428\mathrm{i}.\label{eq:eig_wing}
\end{equation}
Once again, we choose harmonic external forcing given by
\begin{equation}
\mathbf{f}^{ext}(\Omega t)=\mathbf{f}_{0}\cos(\Omega t),\quad\epsilon=0.01,\label{eq:plate_forcing-1}
\end{equation}
where $\mathbf{f}_{0}$ represents a vector of concentrated loads
at the tip nodes 1 and 2 (see Figure~\eqref{fig:wing}b) in the transverse
$y$-direction each with a magnitude of 100 N. We choose the forcing
frequency $\Omega$ in the range 26.4-32.3 rad/s for which the eigenvalue
pair $\lambda_{1,2}$ is nearly resonant with $\Omega$ (see~\eqref{eq:near_ext_res}).
We then compute the near-resonant FRCs around the first natural frequency
via $\mathcal{O}(3),\mathcal{O}(5),$ and $\mathcal{O}(7)$ SSM computations
using Lemma~\ref{thm:frc}. 

{ Similarly to the previous examples, Figure~\ref{fig:wing_FRC}a shows the Poincar\'e section of the non-autonomous SSM for the near-resonant forcing frequency $\Omega = 29.8$ rad/s. The hyperbolic fixed points of the reduced dynamics~\eqref{eq:reduced_dyn_polar} on the SSM directly provide the stable (blue) and unstable (red) periodic orbits on the FRC for different values of forcing frequency $\Omega$.} On a macro-level, this wing example resembles
a cantilevered beam and we expect a hardening type response. Indeed,
the three FRCs at $\mathcal{O}(3),\mathcal{O}(5),$ and $\mathcal{O}(7)$
converge towards a hardening-type response, as shown in Figure~\ref{fig:wing_FRC}b.

\begin{figure}[H]
	\subfloat[]{\centering{}\includegraphics[width=0.54\linewidth]{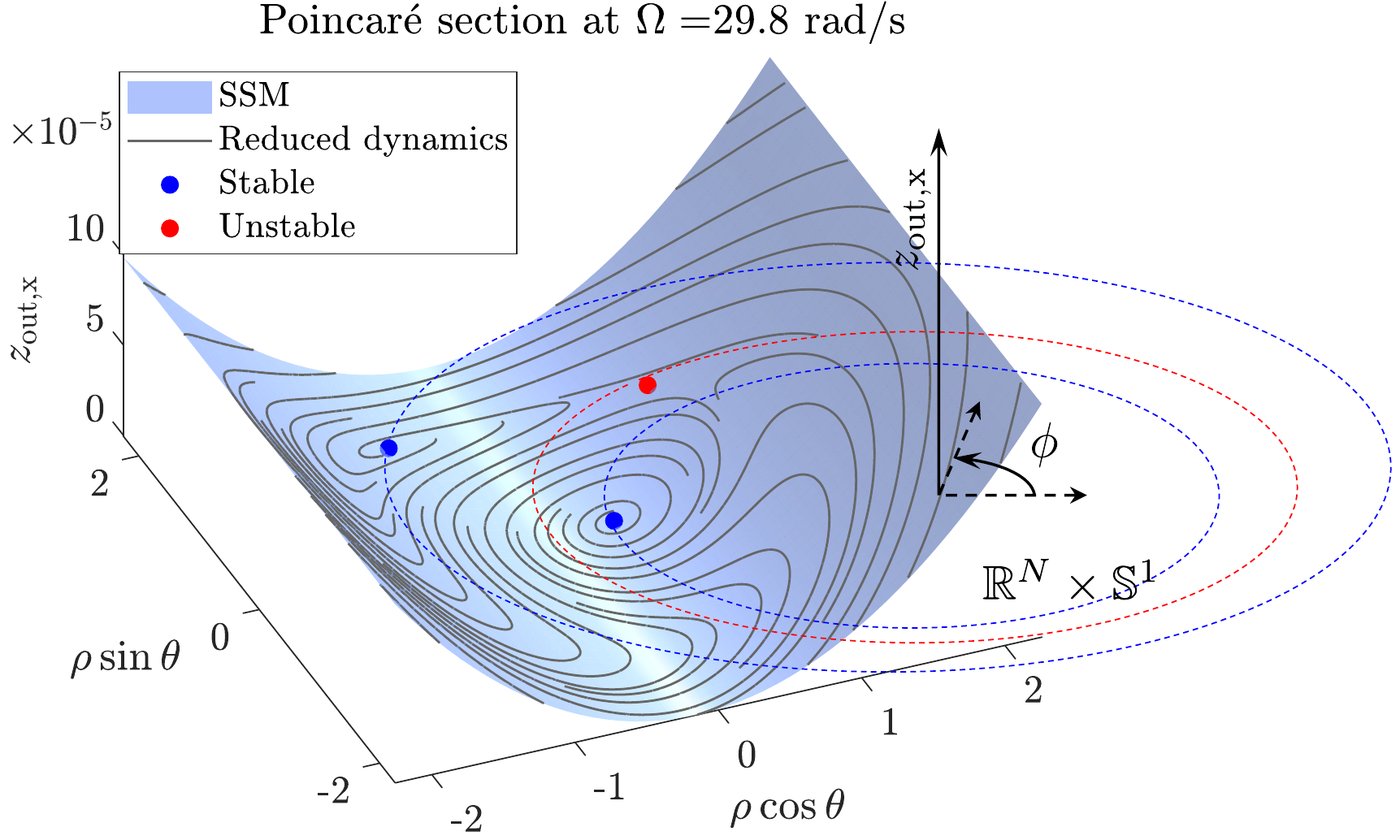}}\hfill{}\subfloat[]{\centering{}\includegraphics[width=0.44\linewidth]{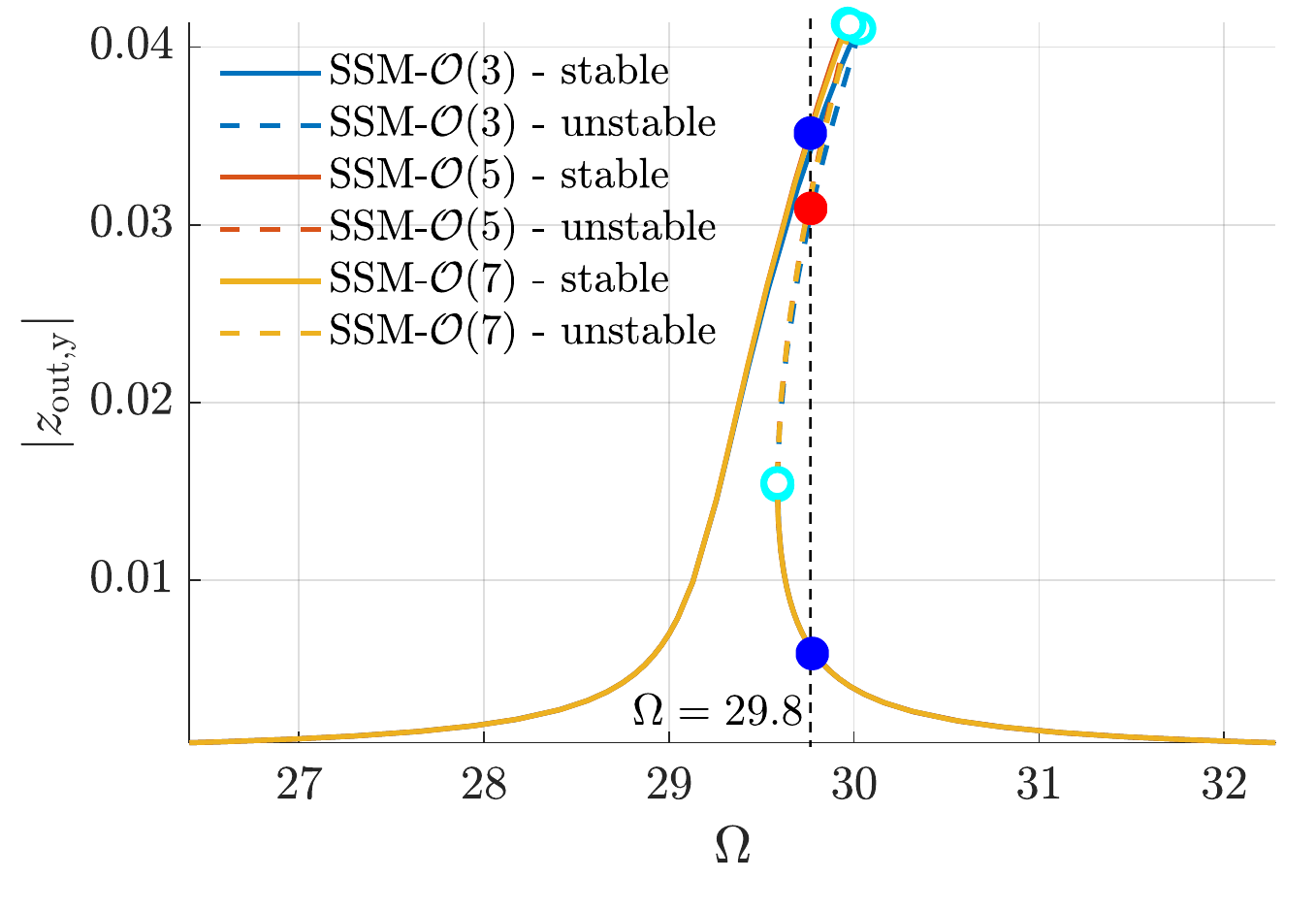}~}	
	\caption{ \small \label{fig:wing_FRC} (a) Poincar\'e section of the  non-autonomous SSM  of the aircraft wing structure with $ n= $133,920 degrees of freedom (see Figure~\ref{fig:wing}) computed around the first mode (eigenvalues~\eqref{eq:eig_wing})  for near-resonant forcing frequency $\Omega = 29.8$ rad/s. The reduced dynamics in polar coordinates $\rho$, $\theta$ is obtained by simulating the ROM~\eqref{eq:reduced_dyn_polar} (see eq.~\eqref{eq:redsimplot}); the fixed points in blue and red directly provide us the stable and unstable periodic orbits on the FRC (b) for different values of $\Omega$. FRCs obtained using local SSM computations at $\mathcal{O}(3),\mathcal{O}(5)$
		and $\mathcal{O}(7)$ converge towards a hardening response; plots (a) and (b) show the displacements in the $x$ and $y$-directions at the tip-node 1 shown in see Figure~\ref{fig:wing}b
		(see Table~\ref{tab:comp_wing} for the computational resources
		consumed).}
\end{figure}
Table~\ref{tab:comp_wing} depicts the computational resources consumed
in obtaining these three FRCs. The peaks in memory consumption reported
in Table~\ref{tab:comp_wing} occur during the composition of nonlinearity
(see eq.~\eqref{eq:FoW_i}). Note that these peaks are short-lived,
however, as the average memory consumption during all these computations
are much lower. We remark that in the context of finite-element applications,
these memory peaks can be significantly reduced by implementing the
nonlinearity composition at the element-level in contrast to the currently
performed implementation at the full system level. Once again, use
of parallel computing and distributed memory architectures would be
greatly beneficial in this context.

\begin{table}[H]
	\begin{centering}
		\begin{tabular}{|c|>{\centering}p{4cm}|>{\centering}p{2cm}>{\centering}p{2cm}|}
			\hline 
			SSM order & Computation time {[}hours:minutes:seconds{]}  & Peak memory consumption  & Average memory consumption\tabularnewline
			\hline 
			\hline 
			SSM-$\mathcal{O}(3)$ & 00:11:17 & $\approx$24 GB & $\approx$9 GB\tabularnewline
			\hline 
			SSM-$\mathcal{O}(5)$ & 00:35:47 & $\approx$33 GB & $\approx$10 GB\tabularnewline
			\hline 
			SSM-$\mathcal{O}(7)$ & 01:47:51 & $\approx$88 GB & $\approx$19 GB\tabularnewline
			\hline 
		\end{tabular}
		\par\end{centering}
	\caption{\label{tab:comp_wing}Computation time and memory requirements for
		obtaining the three FRCs depicted in Figure~\ref{fig:wing_FRC}. All
		computations were performed on MATLAB version 2019b installed on the
		ETH Z\"{u}rich Euler supercomputing cluster on a single node with $100,000$ MB ($\approx 100$ GB) of RAM.}
\end{table}

\section{Conclusions}

In this work, we have reformulated the parametrization method for local
approximations of invariant manifolds and their reduced dynamics in the context of high-dimensional nonlinear mechanics problems. In this class of problems, the classically used system diagonalization at the linear level is no longer feasible. Instead,
we have developed expressions that enable the computation of invariant
manifolds and their reduced dynamics in physical coordinates using only
the master modes associated with the invariant manifold. Hence, these computations facilitate mathematically rigorous nonlinear model reduction in very high-dimensional problems. A numerical implementation of the proposed computational methodology is available
in the open-source MATLAB package, SSMTool~2.0~\cite{SSMTool2.0}, which enables
the computation of invariant manifolds in finite element-based discretized problems via an
integrated finite element solver~\cite{FECode} and bifurcation analysis of the reduced dynamics on these invariant manifolds via its \textsc{coco}~\cite{Dankowicz} integration.

We have connected this computational methodology to several applications
of engineering significance, including the computation of parameter-dependent center manifolds; Lyapunov subcenter manifolds
(LSM) and their associated conservative backbone curves; and Spectral Submanifolds (SSM) and their associated forced response
curves (FRCs) in dissipative mechanical systems. We have also demonstrated
fast and reliable computations of FRCs via a normal form style
parametrization of SSMs in very large mechanical
structures, which has been a computationally intractable task for other available approaches.

While our examples focused on the applications of two-dimensional
SSMs, this automated computation procedure and its numerical implementation
\cite{SSMTool2.0} can treat higher-dimensional invariant manifolds
as well. { Specifically, the reduced dynamics on higher-dimensional
	SSMs can be used for the direct computation of FRCs
	in internally-resonant mechanical systems featuring energy transfer among multiple modes, as will be demonstrated
	in forthcoming publications (Li et al.~\cite{Li2021}, Li \& Haller~\cite{Li2021b})}.
Furthermore, in the non-autonomous setting, we have restricted our expressions
to the leading-order contributions from the forcing. Similar expressions,
however, can also be obtained for higher-order terms at the non-autonomous
level, which is relevant for the nonlinear analysis of parametrically
excited systems. These expressions and the related numerical implementation
are currently under development.

Finally, as we have noted, these computations will further benefit from parallelization since the invariance equations can be
solved independently for each monomial/Fourier multi-index { (see Remarks~\ref{rem:parallelization} and \ref{rem:parallelization2}, and Section~\ref{sec:wing})}. This development is currently underway and will be reported elsewhere.

\section*{Acknowledgements}
We are thankful to Mingwu Li for help in using \textsc{coco}, for his careful proof-reading of this manuscript and for providing valuable comments. We also thank Harry Dankowicz for helpful suggestions and pointing us to the atlas-$k$d algorithm in \textsc{coco}. 

\section*{Conflicts of interest}
The authors declare that they have no conflict of interest.

\section*{Funding}
No specific funding was received for this work.

\section*{Data availability}
The numerical implementation of algorithms and results discussed in this work are available in the form of an open-source MATLAB software at the following DOI: \url{http://doi.org/10.5281/zenodo.4614202}

\appendix
\section{Basic propositions}

\label{sec:Proof-of-Proposition1}
\begin{prop}
	\label{thm:1}Let $\mathbf{Q}\in\mathbb{C}^{M\times M}$ be any semisimple
	matrix with eigenvalues $\mu_{1},\dots,\mu_{M}$ (including repetitions)
	and corresponding left and right eigenvectors $\mathbf{p}_{1},\dots,\mathbf{p}_{M}\in\mathbb{C}^{M}$
	and $\mathbf{r}_{1},\dots,\mathbf{r}_{M}\in\mathbb{C}^{M}$. Then,
	for any $i\in\mathbb{N}$, the matrix 
	\begin{equation}
	\boldsymbol{\mathcal{Q}}_{i}:=\text{\ensuremath{\sum_{j=1}^{i}}\ensuremath{\ensuremath{\left(\mathbf{I}_{M}\right)^{\otimes j-1}\otimes\mathbf{Q}\otimes\left(\mathbf{I}_{M}\right)^{\otimes i-j}}}}\in\mathbb{C}^{M^{i}\times M^{i}}\label{eq:Q_i}
	\end{equation}
	is semisimple and its eigenvalues are
	\begin{equation}
	\mu_{\mathbf{\boldsymbol{\ell}}}=\mu_{\ell_{1}}+\dots+\mu_{\ell_{i}},\quad\boldsymbol{\ell}\in\Delta_{i,M}\label{eq:mu_l}
	\end{equation}
	with the left and right eigenvectors corresponding to any eigenvalue
	$\mu_{\mathbf{\boldsymbol{\ell}}}$ explicitly given as $\mathbf{p}_{\boldsymbol{\ell}}:=\mathbf{p}_{\ell_{1}}\otimes\dots\otimes\mathbf{p}_{\ell_{i}}$
	and $\mathbf{q}_{\boldsymbol{\ell}}:=\mathbf{q}_{\ell_{1}}\otimes\dots\otimes\mathbf{q}_{\ell_{i}}.$ 
\end{prop}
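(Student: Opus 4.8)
The plan is to exploit the ``Kronecker-sum'' structure of $\boldsymbol{\mathcal{Q}}_i$ together with the mixed-product property $(\mathbf{A}\otimes\mathbf{B})(\mathbf{C}\otimes\mathbf{D})=(\mathbf{A}\mathbf{C})\otimes(\mathbf{B}\mathbf{D})$ of the Kronecker product to verify directly that the claimed vectors are eigenvectors. First I would fix a multi-index $\boldsymbol{\ell}=(\ell_1,\dots,\ell_i)\in\Delta_{i,M}$ and set $\mathbf{r}_{\boldsymbol{\ell}}:=\mathbf{r}_{\ell_1}\otimes\dots\otimes\mathbf{r}_{\ell_i}$. Applying the $j$-th summand of $\boldsymbol{\mathcal{Q}}_i$ in~\eqref{eq:Q_i} and repeatedly invoking the mixed-product property gives
\[
\left[(\mathbf{I}_M)^{\otimes j-1}\otimes\mathbf{Q}\otimes(\mathbf{I}_M)^{\otimes i-j}\right]\mathbf{r}_{\boldsymbol{\ell}}=\mathbf{r}_{\ell_1}\otimes\dots\otimes(\mathbf{Q}\mathbf{r}_{\ell_j})\otimes\dots\otimes\mathbf{r}_{\ell_i}=\mu_{\ell_j}\,\mathbf{r}_{\boldsymbol{\ell}},
\]
using $\mathbf{Q}\mathbf{r}_{\ell_j}=\mu_{\ell_j}\mathbf{r}_{\ell_j}$ and multilinearity of $\otimes$. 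Summing over $j=1,\dots,i$ then yields $\boldsymbol{\mathcal{Q}}_i\mathbf{r}_{\boldsymbol{\ell}}=\big(\sum_{j=1}^{i}\mu_{\ell_j}\big)\mathbf{r}_{\boldsymbol{\ell}}=\mu_{\boldsymbol{\ell}}\,\mathbf{r}_{\boldsymbol{\ell}}$, which is precisely~\eqref{eq:mu_l}. The identical computation using the left eigenvector relations $\mathbf{p}_k^{\star}\mathbf{Q}=\mu_k\mathbf{p}_k^{\star}$ shows that $\mathbf{p}_{\boldsymbol{\ell}}:=\mathbf{p}_{\ell_1}\otimes\dots\otimes\mathbf{p}_{\ell_i}$ is a corresponding left eigenvector of $\boldsymbol{\mathcal{Q}}_i$ (here the notation $\mathbf{q}_{\boldsymbol{\ell}}$ in the statement refers to these Kronecker products of right eigenvectors).

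It then remains to argue semisimplicity, i.e., that these $M^i$ eigenvectors span $\mathbb{C}^{M^i}$. Since $\mathbf{Q}$ is semisimple, its right eigenvectors $\{\mathbf{r}_1,\dots,\mathbf{r}_M\}$ form a basis of $\mathbb{C}^{M}$, and a standard fact about tensor products is that the $M^i$ Kronecker products $\{\mathbf{r}_{\ell_1}\otimes\dots\otimes\mathbf{r}_{\ell_i}:\boldsymbol{\ell}\in\Delta_{i,M}\}$ then form a basis of $\mathbb{C}^{M^i}$. Hence $\boldsymbol{\mathcal{Q}}_i$ admits a complete eigenbasis and is semisimple, and since $\Delta_{i,M}$ enumerates all $M^i$ multi-indices, the list~\eqref{eq:mu_l} accounts for the entire spectrum with multiplicities. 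I would additionally record the biorthogonality $\mathbf{p}_{\boldsymbol{\ell}}^{\star}\mathbf{r}_{\boldsymbol{\ell}'}=\prod_{k=1}^{i}\mathbf{p}_{\ell_k}^{\star}\mathbf{r}_{\ell'_k}$, which follows again from the mixed-product property and makes the left/right eigenvector pairing explicit; this is the form in which the result is applied in~\eqref{eq:eig_R1}.

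I do not anticipate a genuine obstacle here: the computation is a direct application of the mixed-product property, and the only point that deserves a word of care is that a sum of \emph{non-commuting} Kronecker factors nonetheless acts diagonally on a product of eigenvectors --- which is exactly why the Kronecker-sum structure (the identity in every slot but one) is essential. The remaining subtlety is purely bookkeeping: keeping the slot index $j$ aligned between the factor $\mathbf{Q}$ in $\boldsymbol{\mathcal{Q}}_i$ and the factor $\mathbf{r}_{\ell_j}$ in $\mathbf{r}_{\boldsymbol{\ell}}$, and observing that repeated values among $\ell_1,\dots,\ell_i$ cause no difficulty, since distinct multi-indices still produce linearly independent tensors.
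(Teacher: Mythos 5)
Your proposal is correct and follows essentially the same route as the paper's proof: a direct verification via the mixed-product property that the Kronecker products of eigenvectors are left/right eigenvectors of $\boldsymbol{\mathcal{Q}}_{i}$ with eigenvalue $\mu_{\boldsymbol{\ell}}$, followed by linear independence of the $M^{i}$ tensor products to conclude semisimplicity. Your added remark on the biorthogonality of the tensor eigenvectors and your handling of the statement's $\mathbf{p}/\mathbf{q}/\mathbf{r}$ notational mismatch are both sensible but do not change the argument.
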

\begin{proof}
	The proof involves a straight-forward verification of the statement.
	The eigenvalues and the left and right eigenvectors of $\mathbf{Q}$
	satisfy
	\begin{align}
	\mathbf{Qp}_{j} & =\mu_{j}\mathbf{p}_{j},\quad j=1,\dots,M,\label{eq:righteig-1}\\
	\mathbf{q}_{j}^{\star}\mathbf{Q} & =\mu_{j}\mathbf{q}_{j}^{\star},\quad j=1,\dots,M,\label{eq:lefteig-1}
	\end{align}
	We first verify that $\mu_{\mathbf{\boldsymbol{\ell}}}$ (see eq.
	\eqref{eq:mu_l}) is an eigenvalue of $\boldsymbol{\mathcal{Q}}_{i}$
	(see eq.~\eqref{eq:Q_i}) with eigenvector $\mathbf{p}_{\boldsymbol{\ell}}:=\left(\mathbf{p}_{\ell_{1}}\otimes\dots\otimes\mathbf{p}_{\ell_{i}}\right)$,
	i.e.,
	\begin{align*}
	\boldsymbol{\mathcal{Q}}_{i}\mathbf{p}_{\boldsymbol{\ell}} & =\left[\ensuremath{\sum_{j=1}^{i}}\ensuremath{\ensuremath{\underbrace{\mathbf{I}_{m}\otimes\dots\otimes\overset{\overset{j-\mathrm{th\,position}}{\uparrow}}{\mathbf{Q}}\otimes\dots\otimes\mathbf{I}_{m}}_{i-\mathrm{terms}}}}\right]\left[\mathbf{p}_{\ell_{1}}\otimes\dots\otimes\mathbf{p}_{\ell_{i}}\right]\\
	& =\ensuremath{\sum_{j=1}^{i}}\mathbf{p}_{\ell_{1}}\otimes\dots\otimes\mathbf{Q}\mathbf{p}_{\ell_{j}}\otimes\dots\otimes\mathbf{p}_{\ell_{i}}\\
	& =\ensuremath{\sum_{j=1}^{i}}\mathbf{p}_{\ell_{1}}\otimes\dots\otimes\mu_{\ell_{j}}\mathbf{p}_{\ell_{j}}\otimes\dots\otimes\mathbf{p}_{\ell_{i}}\\
	& =\ensuremath{\sum_{j=1}^{i}\mu_{\ell_{j}}}\left(\mathbf{p}_{\ell_{1}}\otimes\dots\otimes\mathbf{p}_{\ell_{i}}\right)\\
	& =\mu_{\mathbf{\boldsymbol{\ell}}}\mathbf{p}_{\boldsymbol{\ell}},
	\end{align*}
	where we have used basic properties of the Kronecker product (see,
	e.g., Van Loan~\cite{vanLoan2000}) along with with eq.~\eqref{eq:lefteig-1}.
	Similarly, using eq.~\eqref{eq:righteig-1}, we can verify that the
	left eigenvector corresponding to the eigenvalue $\mu_{\mathbf{\boldsymbol{\ell}}}$
	is given by $\mathbf{q}_{\boldsymbol{\ell}}:=\mathbf{q}_{\ell_{1}}\otimes\dots\otimes\mathbf{q}_{\ell_{i}}$,
	i.e.,
	\[
	\mathbf{q}_{\boldsymbol{\ell}}^{\star}\boldsymbol{\mathcal{Q}}_{i}=\mu_{\mathbf{\boldsymbol{\ell}}}\mathbf{q}_{\boldsymbol{\ell}}^{\star}.
	\]
	We note that since $\mathbf{p}_{1},\dots,\mathbf{p}_{M}$ are linearly
	independent since $\mathbf{Q}$ is semisimple. Then, using the fact
	that for any two linearly independent vectors $\mathbf{a}$ and $\mathbf{b}$,
	the vectors $\mathbf{a}\otimes\mathbf{b}$ and $\mathbf{b}\otimes\mathbf{a}$
	are also linearly independent, we conclude that all $M^{i}$ eigenvectors
	$\mathbf{p}_{\boldsymbol{\ell}}$ with $\boldsymbol{\ell}\in\Delta_{i}$
	of $\boldsymbol{\mathcal{Q}}_{i}$ are linearly independent. Hence,
	$\boldsymbol{\mathcal{Q}}_{i}$ is semi-simple.
\end{proof}
\begin{prop}
	\label{thm:2}Let $\mathbf{\lambda}\in\mathbb{\mathbb{C}}$ be a generalized
	eigenvalue of a matrix pair $\mathbf{A},\mathbf{B}\in\mathbb{C}^{N\times N}$
	with the corresponding left and right eigenvectors $\mathbf{u},\mathbf{v}\in\mathbb{C}^{N}$,
	and $\mu\in\mathbb{C}$ be a generalized eigenvalue of a matrix pair
	$\mathbf{C},\mathbf{D}\in\mathbb{C}^{M\times M}$ with the corresponding
	left and right eigenvectors $\mathbf{e},\mathbf{f}\in\mathbb{C}^{M}$
	for any $M,N\in\mathbb{N}$. Then $\lambda\mu$ is a generalized eigenvalue
	for the matrix pair $\left(\mathbf{C}\otimes\mathbf{A}\right),\left(\mathbf{D}\otimes\mathbf{B}\right)\in\mathbb{C}^{NM\times NM}$
	with the corresponding left and right eigenvectors $(\mathbf{e}\otimes\mathbf{u}),(\mathbf{f}\otimes\mathbf{v})\in\mathbb{C}^{MN}$.
	Furthermore, if $\lambda\mu=1$, then the matrix 
	\begin{equation}
	\mathbf{E}:=\mathbf{D}\otimes\mathbf{B}-\mathbf{C}\otimes\mathbf{A}\in\mathbb{C}^{NM\times NM}\label{eq:Edef}
	\end{equation}
	is singular with $(\mathbf{e}\otimes\mathbf{u})\in\ker\left(\mathbf{E}^{\star}\right)$
	and $(\mathbf{f}\otimes\mathbf{v})\in\ker\left(\mathbf{E}\right)$.
\end{prop}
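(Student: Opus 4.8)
The plan is to prove the statement by direct verification, using two standard facts about the Kronecker product (see Van Loan~\cite{vanLoan2000}): the mixed-product rule $(\mathbf{C}\otimes\mathbf{A})(\mathbf{f}\otimes\mathbf{v})=(\mathbf{C}\mathbf{f})\otimes(\mathbf{A}\mathbf{v})$, and the identity $(\mathbf{e}\otimes\mathbf{u})^{\star}=\mathbf{e}^{\star}\otimes\mathbf{u}^{\star}$ for conjugate transposes. The hypotheses unpack as $\mathbf{A}\mathbf{v}=\lambda\mathbf{B}\mathbf{v}$, $\mathbf{u}^{\star}\mathbf{A}=\lambda\mathbf{u}^{\star}\mathbf{B}$, $\mathbf{C}\mathbf{f}=\mu\mathbf{D}\mathbf{f}$ and $\mathbf{e}^{\star}\mathbf{C}=\mu\mathbf{e}^{\star}\mathbf{D}$, with $\mathbf{u},\mathbf{v},\mathbf{e},\mathbf{f}$ all nonzero.

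First I would check the right-eigenvector claim by the computation
\begin{equation}
(\mathbf{C}\otimes\mathbf{A})(\mathbf{f}\otimes\mathbf{v})=(\mathbf{C}\mathbf{f})\otimes(\mathbf{A}\mathbf{v})=(\mu\mathbf{D}\mathbf{f})\otimes(\lambda\mathbf{B}\mathbf{v})=\lambda\mu\,(\mathbf{D}\otimes\mathbf{B})(\mathbf{f}\otimes\mathbf{v}),
\end{equation}
which exhibits $\lambda\mu$ as a generalized eigenvalue of the pair $(\mathbf{C}\otimes\mathbf{A},\mathbf{D}\otimes\mathbf{B})$ with right eigenvector $\mathbf{f}\otimes\mathbf{v}$; this vector is nonzero since a Kronecker product of nonzero vectors is nonzero. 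The left-eigenvector claim follows from the mirror-image calculation $(\mathbf{e}^{\star}\otimes\mathbf{u}^{\star})(\mathbf{C}\otimes\mathbf{A})=(\mathbf{e}^{\star}\mathbf{C})\otimes(\mathbf{u}^{\star}\mathbf{A})=\lambda\mu\,(\mathbf{e}^{\star}\mathbf{D})\otimes(\mathbf{u}^{\star}\mathbf{B})=\lambda\mu\,(\mathbf{e}^{\star}\otimes\mathbf{u}^{\star})(\mathbf{D}\otimes\mathbf{B})$, combined with $(\mathbf{e}\otimes\mathbf{u})^{\star}=\mathbf{e}^{\star}\otimes\mathbf{u}^{\star}$.

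For the singularity assertion I would then specialise to $\lambda\mu=1$. Subtracting the two identities just obtained gives $\mathbf{E}(\mathbf{f}\otimes\mathbf{v})=(1-\lambda\mu)(\mathbf{D}\otimes\mathbf{B})(\mathbf{f}\otimes\mathbf{v})=\mathbf{0}$ and $(\mathbf{e}\otimes\mathbf{u})^{\star}\mathbf{E}=(1-\lambda\mu)(\mathbf{e}\otimes\mathbf{u})^{\star}(\mathbf{D}\otimes\mathbf{B})=\mathbf{0}$; taking the conjugate transpose of the latter yields $\mathbf{E}^{\star}(\mathbf{e}\otimes\mathbf{u})=\mathbf{0}$. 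Hence $\mathbf{f}\otimes\mathbf{v}\in\ker(\mathbf{E})$ and $\mathbf{e}\otimes\mathbf{u}\in\ker(\mathbf{E}^{\star})$, and since $\mathbf{f}\otimes\mathbf{v}\neq\mathbf{0}$ the matrix $\mathbf{E}$ is singular.

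There is no genuine obstacle in this argument: the proposition is a short bookkeeping exercise with Kronecker-product identities. The only points requiring a little care are that one must use conjugate transposition (consistent with the $(\bullet)^{\star}$ convention used throughout the paper, cf. the normalization~\eqref{eq:normalization}) rather than plain transposition when passing from the left-eigenvector relation to $\ker(\mathbf{E}^{\star})$, and that one should record explicitly that $\mathbf{f}\otimes\mathbf{v}$ is nonzero so that the kernel is genuinely nontrivial.
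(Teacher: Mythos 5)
Your proposal is correct and follows essentially the same route as the paper's own proof: a direct verification via the Kronecker mixed-product rule for the right eigenvector, the mirror computation with $(\mathbf{e}\otimes\mathbf{u})^{\star}=\mathbf{e}^{\star}\otimes\mathbf{u}^{\star}$ for the left eigenvector, and specialisation to $\lambda\mu=1$ for the kernel statements. Your explicit remarks that $\mathbf{f}\otimes\mathbf{v}\neq\mathbf{0}$ and that conjugate transposition must be used are small but worthwhile additions of care beyond what the paper records.
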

\begin{proof}
	Since $\lambda$ is a generalized eigenvalue of the matrix pair $\mathbf{A},\mathbf{B}$
	with $\mathbf{w},\mathbf{v}$ being the corresponding left and right
	eigenvectors, we have 
	\begin{align}
	\mathbf{Av} & =\lambda\mathbf{Bv},\label{eq:righteigAB}\\
	\mathbf{u^{\star}}\mathbf{A} & =\lambda\mathbf{u^{\star}}\mathbf{B}.\label{eq:lefteigAB}
	\end{align}
	Similarly, for the matrix pair $\mathbf{C},\mathbf{D}$ with eigenvalue
	$\mu$ and $\mathbf{e},\mathbf{f}$ being the corresponding left and
	right eigenvectors, we have 
	\begin{align}
	\mathbf{Cf} & =\mu\mathbf{D}\mathbf{f},\label{eq:righteigCD}\\
	\mathbf{e^{\star}}\mathbf{C} & =\mu\mathbf{e^{\star}}\mathbf{D}.\label{eq:lefteigCD}
	\end{align}
	We verify that $\lambda\mu$ is a generalized eigenvalue of the matrix
	pair $\left(\mathbf{C}\otimes\mathbf{A}\right),\left(\mathbf{D}\otimes\mathbf{B}\right)$
	with eigenvector $(\mathbf{f}\otimes\mathbf{v})$, i.e.,
	\begin{align}
	\left(\mathbf{C}\otimes\mathbf{A}\right)(\mathbf{f}\otimes\mathbf{v}) & =\left(\mathbf{C}\mathbf{f}\right)\otimes\left(\mathbf{A}\mathbf{v}\right)\nonumber \\
	& =\left(\mu\mathbf{D}\mathbf{f}\right)\otimes\left(\lambda\mathbf{Bv}\right)\nonumber \\
	& =\lambda\mu\left(\mathbf{D}\mathbf{f}\right)\otimes\left(\mathbf{Bv}\right)\nonumber \\
	& =\lambda\mu\left(\mathbf{D}\otimes\mathbf{B}\right)(\mathbf{f}\otimes\mathbf{v}),\label{eq:righteigABCD}
	\end{align}
	where we have used eqs.~\eqref{eq:righteigAB},~\eqref{eq:righteigCD}
	along with basic properties of the Kronecker product of matrices (see,
	e.g., Van Loan~\cite{vanLoan2000}). Similarly, using eqs.~\eqref{eq:lefteigAB},
	\eqref{eq:lefteigCD}, we can show that 
	\begin{equation}
	(\mathbf{e}\otimes\mathbf{u})^{\star}\left(\mathbf{C}\otimes\mathbf{A}\right)=\lambda\mu(\mathbf{e}\otimes\mathbf{u})^{\star}\left(\mathbf{D}\otimes\mathbf{B}\right),\label{eq:lefteigABCD}
	\end{equation}
	which proves that $(\mathbf{e}\otimes\mathbf{u})$ is the generalized
	left-eigenvector for the generalized eigenvalue $\lambda$ for the
	generalize first part of the proposition. 
	
	Finally, substituting $\lambda\mu=1$ in eqs.~\eqref{eq:righteigABCD},
	\eqref{eq:righteigABCD}, we obtain 
	\begin{align*}
	\left[\left(\mathbf{C}\otimes\mathbf{A}\right)-\left(\mathbf{D}\otimes\mathbf{B}\right)\right](\mathbf{f}\otimes\mathbf{v}) & =\mathbf{0},\\
	(\mathbf{e}\otimes\mathbf{u})^{\star}\left[\left(\mathbf{C}\otimes\mathbf{A}\right)-\left(\mathbf{D}\otimes\mathbf{B}\right)\right] & =\mathbf{0},
	\end{align*}
	which proves that the matrix $\mathbf{E}$ (see definition~\eqref{eq:Edef})
	is singular and the vectors $(\mathbf{f}\otimes\mathbf{v})$ and $(\mathbf{e}\otimes\mathbf{u})$
	belong to its left and right kernels.
\end{proof}
\section{Proof of Lemma~\ref{lemma:cons_bb}}

\label{sec:Proof-of-Lemma}

In polar coordinates $\mathbf{p}=\mathbf{P}(\rho,\theta):=\rho\left[\begin{array}{c}
e^{\mathrm{i}\theta}\\
e^{-\mathrm{i}\theta}
\end{array}\right]$, the reduced dynamics~\eqref{eq:red_2D_cons} is given as 
\begin{equation}
\left[\begin{array}{c}
\dot{\rho}\\
\dot{\theta}
\end{array}\right]=\left[\begin{array}{c}
a(\rho)\\
\omega(\rho)
\end{array}\right]=\left[\begin{array}{c}
0\\
\omega_{m}
\end{array}\right]+\sum_{\ell\in\mathbb{N}}\left[\begin{array}{c}
\mathrm{Re}(\gamma_{\ell})\rho^{2\ell+1}\\
\mathrm{Im}(\gamma_{\ell})\rho^{2\ell}
\end{array}\right].\label{eq:polar_LSM-1}
\end{equation}
Now, since the equation for $\dot{\rho}$ is a scalar ODE decoupled
from the phase $\theta$, the only possible steady states are fixed
points $\dot{\rho}=0$, i.e., $\rho=\rho(0)=$constant. Therefore,
$\dot{\theta}=\omega(\rho)$ represents a constant angular frequency
depending on the constant steady-state amplitude $\rho$. Since the
LSM is an analytic manifold, the function $a,\omega$ describing reduced
dynamics on the LSM are also analytic. Note that since the LSM is
filled with periodic orbits in an open neighborhood of the origin,
$a(\rho)$ must be identically zero because any non-trivial polynomial
expression for $a$ would result in isolated periodic orbits on the
LSM. Thus, we deduce that our computational procedure must result
in 
\[
\mathrm{Re}(\gamma_{\ell})=0\quad\forall\ell\in\mathbb{N}.
\]
Finally, as the LSM is foliated with periodic orbits in an open neighborhood
of the origin, the equation $\dot{\theta}=\omega(\rho)$ expresses
the frequency of oscillation as a function any given constant amplitude~$\rho$ for the periodic orbits on the LSM. Hence the conservative
backbone around the $m^{\mathrm{th}}$ mode is given by the relation
\eqref{eq:cons_backbone}.

\section{Proof of Lemma~\ref{thm:frc}}

\label{sec:Proof-of-Theorem}

Using polar coordinates $\mathbf{p}=\mathbf{P}(\rho,\theta):=\rho\left[\begin{array}{c}
e^{\mathrm{i}\theta}\\
e^{-\mathrm{i}\theta}
\end{array}\right]$, we rewrite the first equation in the system~\eqref{eq:normal_2D}
as
\begin{equation}
\dot{p}_{1}=\dot{\rho}e^{\mathrm{i}\theta}+\mathrm{i}\dot{\theta}\rho e^{\mathrm{i}\theta}=\lambda\rho e^{\mathrm{i}\theta}+\sum_{\ell\in\mathbb{N}}\gamma_{\ell}\rho^{2\ell+1}e^{\mathrm{i}\theta}+fe^{\mathrm{i}\eta\phi},\label{eq:polar1-1}
\end{equation}
where 
\[
f:=\epsilon\mathbf{u}_{j}^{\star}\mathbf{F}_{\eta}^{ext}.
\]
Dividing eq.~\eqref{eq:polar1-1} by $e^{\mathrm{i}\theta}$ and introducing
the phase shift $\psi=\theta-\eta\phi$, we obtain
\begin{equation}
\dot{\rho}+\mathrm{i}\left(\dot{\psi}+\eta\Omega\right)\rho=\lambda\rho+\sum_{\ell\in\mathbb{N}}\gamma_{\ell}\rho^{2\ell+1}+\epsilon fe^{-\mathrm{i}\psi}.\label{eq:polar2-1}
\end{equation}
Comparing the real and imaginary parts in eq.~\eqref{eq:polar2-1},
we obtain the polar reduced dynamics given in eq.~\eqref{eq:reduced_dyn_polar},
which concludes the proof of (i).

The fixed points of system~\eqref{eq:reduced_dyn_polar} are obtained
by equating its right-hand-side to zero as 
\begin{equation}
\mathbf{r}(\rho,\psi,\Omega)=\mathbf{0}.\label{eq:red_dyn_RHS-2-1}
\end{equation}
Any such fixed point represent a periodic orbit for the reduced system
\eqref{eq:reduced_dyn_polar} with constant polar radius $\rho$ and
constant phase difference $\psi$ with respect to the cyclic variable
$\eta\phi$, which has the angular frequency $\eta\Omega$. Hence,
we obtain a 1-dimensional submanifold of zeros upon solving~\eqref{eq:red_dyn_RHS-2-1},
whose projection in the $(\rho,\Omega)$ provides us the FRC. Eliminating $\psi$ from eq.~\eqref{eq:red_dyn_RHS-2-1}, we
obtain the fixed points as the set of points $\left(\rho,\Omega\right)$
that satisfy the equation 
\begin{equation}
\left[a(\rho)\right]^{2}+\left[b(\rho,\Omega)\right]^{2}=|f|^{2},\label{eq:frc}
\end{equation}
which proves statement (ii). 

Solving the two equations~\eqref{eq:red_dyn_RHS-2-1} for $\cos\psi$
and $\sin\psi$, we obtain 
\begin{align*}
\cos\psi & =-\frac{\left(a(\rho)\mathrm{Re}(f)+b(\rho,\Omega)\mathrm{Im}(f)\right)}{|f|^{2}},\\
\sin\psi & =\frac{\left(b(\rho,\Omega)\mathrm{Re}(f)-a(\rho)\mathrm{Im}(f)\right)}{|f|^{2}},
\end{align*}
which provides the phase shift as 
\[
\psi=\arctan\left(\frac{b(\rho,\Omega)\mathrm{Re}(f)-a(\rho)\mathrm{Im}(f)}{-a(\rho)\mathrm{Re}(f)-b(\rho,\Omega)\mathrm{Im}(f)}\right),
\]
and hence proves statement (iii).

Finally, we rewrite the reduced dynamics~\eqref{eq:reduced_dyn_polar}
in its standard form as
\begin{equation}
\left[\begin{array}{c}
\dot{\rho}\\
\dot{\psi}
\end{array}\right]=\left[\begin{array}{c}
a(\rho)\\
\frac{b(\rho,\Omega)}{\rho}
\end{array}\right]+\left[\begin{array}{cc}
1 & 0\\
0 & \frac{1}{\rho}
\end{array}\right]\left[\begin{array}{cc}
\cos\psi & \sin\psi\\
-\sin\psi & \cos\psi
\end{array}\right]\left[\begin{array}{c}
\mathrm{Re}\left(f\right)\\
\mathrm{Im}\left(f\right)
\end{array}\right].\label{eq:reduced_standard}
\end{equation}
The Jacobian of the right-hand side of eq.~\eqref{eq:reduced_standard}
evaluated at the fixed point $(\rho,\psi)$ is given by~\eqref{eq:Jacobian}
and its eigenvalues can be used to conclude the stability of any hyperbolic
fixed point via linearized stability analysis, which proves statement
(iv).

\end{document}